\title{Subtyping Context-Free Session Types} 
\author{Gil Silva}{LASIGE, Faculdade de Ciências da Universidade de Lisboa, Lisbon, Portugal}{gsilva@lasige.di.fc.ul.pt}{https://orcid.org/0009-0007-7051-6116}{}
\author{Andreia Mordido}{LASIGE, Faculdade de Ciências da Universidade de Lisboa, Lisbon, Portugal \and \url{http://www.di.fc.ul.pt/~amordido/}}{afmordido@ciencias.ulisboa.pt}{https://orcid.org/0000-0002-1547-0692}{}
\author{Vasco T. Vasconcelos}{LASIGE, Faculdade de Ciências da Universidade de Lisboa, Lisbon, Portugal \and \url{https://www.di.fc.ul.pt/~vv/}}{vmvasconcelos@ciencias.ulisboa.pt}{https://orcid.org/0000-0002-9539-8861}{}
\authorrunning{G. Silva, A. Mordido and V. T. Vasconcelos} 
\keywords{Session types, Subtyping, Simulation, Simple grammars, Non-regular recursion}
\newcommand{\tUnit}{\mathsf{Unit}}
\newcommand{\tInt}{\mathsf{Int}}
\newcommand{\tBool}{\mathsf{Bool}}
\newcommand{\lin}{\mathsf{1}}
\newcommand{\un}{\mathbf{\ast}}
\newcommand{\choicebranch}{\odot}
\newcommand{\inout}{\sharp}
\newcommand{\tEnd}{\mathsf{End}}
\newcommand{\tSkip}{\mathsf{Skip}}
\newcommand{\tArrow}[3]{#1\overset{#2}{\rightarrow}#3}
\newcommand{\tMsg}[1]{{\inout}#1}
\newcommand{\tIn}[1]{{{?}#1}}
\newcommand{\tOut}[1]{{{!}#1}}
\newcommand{\tChoice}[3]{{\odot}\mathtt{\{} #1 {:}\, #2 \mathtt{\}}_{#1\in #3}}
\newcommand{\tIntChoice}[3]{\oplus \mathtt{\{} #1 {:}\, #2 \mathtt{\}}_{#1\in #3}}
\newcommand{\tExtChoice}[3]{\& \mathtt{\{} #1 {:}\, #2 \mathtt{\}}_{#1\in #3}}
\newcommand{\tSeq}[2]{{#1}{;}{#2}}
\newcommand{\tRecord}[3]{\mathtt{\{} #1 {:}\, #2 \mathtt{\}}_{#1\in #3}}
\newcommand{\tRcdVrt}[3]{\mathtt{\llparenthesis} #1 {:}\, #2 \mathtt{\rrparenthesis}_{#1\in #3}}
\newcommand{\tVariant}[3]{\mathtt{\langle} #1 {:}\, #2 \mathtt{\rangle}_{#1\in #3}}
\newcommand{\subst}[3]{[#2/#1] #3}
\newcommand{\tRec}[2]{\mu{#1}{.}#2}
\newcommand{\field}[2]{\mathsf{#1}{:}\,#2}
\newcommand{\lChoiceDefault}{\odot}
\newcommand{\lChoiceLabel}[1]{\odot_\mathsf{#1}}
\newcommand{\lIntChoiceDefault}{\oplus}
\newcommand{\lIntChoiceLabel}[1]{\oplus_{#1}}
\newcommand{\lExtChoiceLabel}[1]{\&_{#1}}
\newcommand{\lMsgPayload}{\sharp\mathsf{p}}
\newcommand{\lMsgContinuation}{\sharp\mathsf{c}}
\newcommand{\lOutPayload}{!\mathsf{p}}
\newcommand{\lOutContinuation}{!\mathsf{c}}
\newcommand{\lInPayload}{?\mathsf{p}}
\newcommand{\lInContinuation}{?\mathsf{c}}
\DeclareRobustCommand{\shortto}{%
	\mathpalette\short@to\relax%
}
\newcommand{\short@to}[2]{%
	\mkern2mu
	\clipbox{{.5\width} 0 0 0}{$\m@th#1\vphantom{+}{\to}$}%
}
\newcommand{\lArrowDomain}{{\shortto\mathsf{d}}}
\newcommand{\lArrowRange}{{\shortto\mathsf{r}}}
\newcommand{\lArrowLinear}{\shortto\lin}
\newcommand{\transition}[3]{#1 \overset{#2}{\longrightarrow} #3}
\newcommand{\transitionGNF}[4]{#2 \overset{#3}{\longrightarrow}_#1 #4}
\newcommand{\svert}{\enskip{|}\enskip}
\newcommand{\terminated}[1]{#1\checkmark}
\newcommand{\notTerminated}[1]{{#1\bcancel\checkmark}}
\newcommand{\lRcdVrtDefault}{\llparenthesis\rrparenthesis}
\newcommand{\lRcdVrtLabel}[1]{\llparenthesis\rrparenthesis_\mathsf{#1}}
\newcommand{\lRcdDefault}{\{\}}
\newcommand{\lRcdLabel}[1]{\{\}_{#1}}
\newcommand{\lVrtDefault}{\langle\rangle}
\newcommand{\lVrtLabel}[1]{\langle\rangle_{#1}}
\newcommand{\subS}{\leq}
\newcommand{\simS}{\lesssim}
\newcommand{\simSX}{\mathcal{X}}
\newcommand{\simSY}{\mathcal{Y}}
\newcommand{\simSZ}{\mathcal{Z}}
\newcommand{\simSW}{\mathcal{W}}
\newcommand{\xyzwsim}{\preceq^{\simSX\simSY\simSZ\simSW}}
\newcommand{\prods}{\mathcal{P}}
\newcommand{\xyzwsimP}{\xyzwsim_\prods}
\newcommand{\wellFormed}[2]{#1 \vdash #2}
\newcommand{\contr}[2]{#1 \mathrel{\mathsf{contr}} #2}
\newcommand{\fun}[1]{\mathit{#1}}
\newcommand{\singletonQueue}{\fun{singletonQueue}}
\newcommand{\emptyf}{\fun{empty}}
\newcommand{\front}{\fun{front}}
\newcommand{\dequeue}{\fun{dequeue}}
\newcommand{\simplify}{\fun{simplify}}
\newcommand{\word}{\fun{word}}
\newcommand{\grm}{\fun{grm}}
\newcommand{\subG}{\fun{subG}}
\newcommand{\subT}{\fun{subT}}
\newcommand{\prune}{\fun{prune}}
\newcommand{\unravel}{\fun{unr}}
\newcommand{\unrOne}{\fun{unr1}}
\newcommand{\free}{\fun{free}}
\newcommand{\True}{\textnormal{\textbf{True}}}
\newcommand{\False}{\textnormal{\textbf{False}}}
\begin{document}
\maketitle

\begin{abstract}
Context-free session types describe structured patterns of communication on heterogeneously typed channels, allowing the specification of protocols unconstrained by tail recursion. The enhanced expressive power provided by non-regular recursion comes, however, at the cost of the decidability of subtyping, even if equivalence is still decidable. We present an approach to subtyping context-free session types based on a novel kind of observational preorder we call $\mathcal{XYZW}$-simulation, which generalizes $\mathcal{XY}$-simulation (also known as covariant-contravariant simulation) and therefore also bisimulation and plain simulation. We further propose a subtyping algorithm that we prove to be sound, and present an empirical evaluation in the context of a compiler for a programming language. Due to the general nature of the simulation relation upon which it is built, this algorithm may also find applications in other domains.
\end{abstract}

\section{Introduction}\label{sec:introduction}
Session types, introduced by Honda et al.~\cite{DBLP:conf/concur/Honda93, DBLP:conf/esop/HondaVK98, DBLP:conf/parle/TakeuchiHK94}, enhance traditional type systems with the ability to specify and enforce structured communication protocols on bidirectional, heterogeneously typed channels. Typically, these specifications include the type, direction (input or output) and order of the messages, as well as branching points where one participant can choose how to proceed and the other must follow.

Traditional session types are bound by tail recursion and therefore restricted
to the specification of protocols described by regular languages. This excludes
many protocols of practical interest, with the quintessential example being the
serialization of tree-structured data on a single channel. Context-free session
types, proposed by Thiemann and Vasconcelos~\cite{DBLP:conf/icfp/ThiemannV16},
liberate types from tail recursion by introducing a sequential composition operator ($\tSeq{\_}{\_}$) with a monoidal structure and a left and right identity in type $\tSkip$, representing no action. As their name hints, context-free session types can specify protocols corresponding to (simple deterministic) context-free languages and are thus considerably more expressive than their regular counterparts.

What does it mean for a context-free session type to be a subtype of another? Our answer follows Gay and Hole's seminal work on subtyping for regular session types~\cite{DBLP:journals/acta/GayH05}, and Liskov's \emph{principle of safe substitution}~\cite{DBLP:conf/oopsla/Liskov87}: $S$ is a subtype of $R$ if channels governed by type $S$ can take the place of channels governed by type $R$ in whatever context, without violating the guarantees offered by a type system (e.g.\ progress, deadlock freedom, session fidelity, etc.).

More concretely, subtyping allows increased flexibility in the interactions between participants, namely on the type of the messages (a feature inherited from the subtyped $\pi$-calculus~\cite{DBLP:journals/mscs/PierceS96}) and on the choices available at branching points~\cite{DBLP:journals/acta/GayH05}, allowing a channel to be governed by a simpler session type if its context so requires. A practical benefit of this flexibility is that it promotes \emph{modular development}: the behaviour of one participant may be refined, while the behaviour of the other is kept intact.
\begin{example}\label{ex:intro} Consider the following context-free session types for serializing binary trees.
	\begin{align*}
		\begin{aligned}[t]
			\mathsf{STree} &= \tRec{s}{{\oplus}\{\field{Nil}{\tSkip}, \field{Node}{\tSeq{s}{\tSeq{\tOut{\tInt}}{s}}}\}}\\
			\mathsf{DTree} &= \tRec{s}{\&\{\field{Nil}{\tSkip}, \field{Node}{\tSeq{s}{\tSeq{\tIn{\tInt}}{s}}}\}}\\
		\end{aligned}
		\enskip 
		\begin{aligned}[t]
			\mathsf{SEmpty} &= \oplus\{\field{Nil}{\tSkip}\}\\
			\mathsf{SFullTree0} &= \oplus\{\field{Node}{\tSeq{\mathsf{SEmpty}}{\tSeq{\tOut\tInt}{\mathsf{SEmpty}}}}\}\\
			\mathsf{SFullTree1} &= \oplus\{\field{Node}{\tSeq{\tSeq{\mathsf{SFullTree0}}{\tOut\tInt}}{\mathsf{SFullTree0}}}\}
		\end{aligned}
	\end{align*}
	The recursive $\mathsf{STree}$ and $\mathsf{DTree}$ types specify, respectively,
	the serialization and deserialization of a possibly infinite arbitrary tree,
	while the remaining non-recursive types specify the serialization of finite
	trees of particular configurations. The benefit of subtyping is that it makes
	the particular types $\mathsf{SEmpty}$, $\mathsf{SFullTree0}$ and
	$\mathsf{SFullTree1}$ compatible with the general $\mathsf{DTree}$ type. Observe
	that its dual, $\mathsf{STree}$, may safely take the place of any type in the
	right column. Consider now a function $\mathsf{f}$ that generates full trees of
	height 1 and serializes them on a given channel end. Assigning it type
	$\tArrow{\mathsf{STree}}{}{\tUnit}$ would not statically ensure that the
	fullness and height of the tree are as specified. Type
	$\tArrow{\mathsf{SFullTree1}}{}{\tUnit}$ would do so, and subtyping would still
	allow the function to use an $\mathsf{STree}$ channel (i.e., communicate with
	someone expecting an arbitrary $\mathsf{DTree}$ tree).
\end{example}
Expressive power usually comes at the cost of decidability. While subtyping for regular session types has been formalized, shown decidable and given an algorithm by Gay and Hole~\cite{DBLP:journals/acta/GayH05}, subtyping in the context-free setting has been proven undecidable by Padovani~\cite{DBLP:journals/toplas/Padovani19}. The proof is given by a reduction from the inclusion problem for simple languages, shown undecidable by Friedman~\cite{DBLP:journals/tcs/Friedman76}. Remarkably, the equivalence problem for simple languages is known to be decidable, as is the type equivalence of context-free session types~\cite{DBLP:conf/focs/KorenjakH66, DBLP:conf/icfp/ThiemannV16}.

Subtyping context-free session types has until now been considered only in a limited form, where message types must be syntactically equal~\cite{DBLP:journals/toplas/Padovani19}. Consequently, the interesting co/contravariant properties of input/output types have been left unexplored. In this paper, we propose a more expressive subtyping relation, where the types of messages may vary co/contravariantly, according to the classical subtyping notion of Gay and Hole. To handle the contravariance of output types, we introduce a novel notion of observational preorder, which we call \emph{$\mathcal{XYZW}$-simulation} (by analogy with \emph{$\mathcal{XY}$-simulation}~\cite{DBLP:conf/concur/AartsV10}).

While initially formulated in the context of the $\pi$-calculus, considerable
work has been done to integrate session types in more standard settings, such as
functional languages based on the polymorphic $\lambda$-calculus with linear
types~\cite{DBLP:journals/iandc/AlmeidaMTV22, DBLP:journals/corr/abs-2203-12877,
	DBLP:conf/esop/PocasCMV23}. In this scenario, functional
types and session types are not orthogonal: sessions may carry functions, and functions may act on sessions. With this in mind,
we promote our theory to a linear functional setting, thereby showing how
subtyping for records, variants and (linear and unrestricted~\cite{C:techreport/Gay06}) functions, usually introduced by inference rules, can be seamlessly integrated with simulation-based subtyping for
context-free session types.

Finally, we present a sound algorithm for the novel notion of subtyping, based on the type equivalence algorithm of Almeida et al.~\cite{DBLP:conf/tacas/AlmeidaMV20}. This algorithm works by first encoding the types as words in a simple grammar~\cite{DBLP:conf/focs/KorenjakH66} and then deciding their $\mathcal{XYZW}$-similarity. Being grammar-based and, at its core, agnostic to types, our algorithm may also find applications for other objects with similar non-regular and contravariant properties.
\subparagraph*{Contributions} We address the subtyping problem for context-free session types, proposing:
\begin{itemize}
	\item A syntactic definition of subtyping for context-free session types;
	\item A novel kind of behavioural preorder called $\mathcal{XYZW}$-simulation, and, based on it, a semantic definition of subtyping that coincides with the syntactic one;
	\item A sound subtyping algorithm based on the $\mathcal{XYZW}$-similarity of simple grammars;
	\item An empirical evaluation of the performance of the algorithm, and a comparison with an existing type equivalence algorithm.	
\end{itemize}
\subparagraph*{Overview} The rest of this paper is organized as follows: in
\cref{sec:cfst} we introduce types, type formation and
syntactic subtyping; in \cref{sec:semantic-subtyping} we present a notion
of semantic subtyping, to be used as a stepping stone to develop our subtyping
algorithm; in \cref{sec:algorithm} we present the algorithm and show it
to be sound with respect to the semantic subtyping relation; in \cref{sec:evaluation} we evaluate the performance of our
implementation of the algorithm; in \cref{sec:related-work} we present
related work; in \cref{sec:conclusion} we conclude the paper and trace a
path for the work to follow. The reader can find the rules for type formation
and proofs for all results in the paper in the appendices.

\section{Types and syntactic subtyping}
\label{sec:cfst}
We base our contributions on a type language that includes both functional types
and higher-order context-free session types (i.e., types that allow
messages of arbitrary types). The language is shown in \cref{fig:typesyntax}. As customary in session types for functional
languages~\cite{DBLP:journals/jfp/GayV10}, the language of types is given by two
mutually recursive syntactic categories: one for functional types
and another for session types. We assume two disjoint and denumerable
sets of type references, with the first ranged over by $t,u,v,w$, the second by
$r,s$ and their union by $x,y,z$. We further assume a set of record, variant and
choice labels, ranged over by $j,k,\ell$.
\begin{figure}[t]
	\text{Functional and higher-order context-free session types}
	\begin{ceqn}
		\begin{align*}
			T,U,V,W &\Coloneqq \tUnit 
			\svert \tArrow{T}{m}{U} 
			\svert \tRcdVrt{\ell}{T}{L} 
			\svert S 
			\svert t
			\svert \tRec{t}{T}\\
			S,R &\Coloneqq \tMsg{T} 
			\svert \tChoice{\ell}{T}{L}  
			\svert \tSkip
			\svert \tEnd
			\svert \tSeq{S}{R}
			\svert s
			\svert \tRec{s}{S}
		\end{align*}
	\end{ceqn}
	\text{Multiplicities, records/variants, polarities and views}
	\begin{ceqn}
		\begin{gather*}
			m,n \Coloneqq {\lin} \svert \mathbf{*} \qquad 
			\llparenthesis \cdot \rrparenthesis \Coloneqq \{\cdot\} \svert \langle \cdot \rangle \qquad \inout \Coloneqq {?} \svert {!} \qquad  \choicebranch \Coloneqq \oplus \svert \&
		\end{gather*}
		\caption{Syntax of types.}
		\label{fig:typesyntax}
	\end{ceqn}
\end{figure}

The first three productions of the grammar for functional types introduce the
$\tUnit$ type, functions $\tArrow{T}{m}{U}$, records $\tRecord{\ell}{T_\ell}{L}$
and variants $\tVariant{\ell}{T_\ell}{L}$ (which correspond to
\textit{datatypes} in ML-like languages). Our system exhibits linear
characteristics: function types contain a multiplicity annotation $m$ (also in
\cref{fig:typesyntax}), meaning that they must be used exactly once if
$m={\lin}$ or without restrictions if $m=\mathbf{\ast}$ (such types can also be
found, for instance, in Gay's proposal~\cite{DBLP:journals/jfp/GayV10}, in System
F$^\circ$~\cite{DBLP:conf/tldi/MazurakZZ10} and in the FreeST
language~\cite{DBLP:journals/iandc/AlmeidaMTV22}). Their inclusion in our system
is justified by the interesting subtyping properties they exhibit~\cite{C:techreport/Gay06}.

Session types $\tOut{T}$ and $\tIn{T}$ represent the sending and receiving,
respectively, of a value of type $T$ (an arbitrary type, making the system
higher-order). Internal choice types $\tIntChoice{\ell}{S_\ell}{L}$ allow the
selection of a label $k\in L$ and its continuation $S_k$, while external
choice types $\tExtChoice{\ell}{S_\ell}{L}$ represent the branching on any
label $k\in L$ and its continuation $S_k$. We stipulate that the set of labels
for these types must be non empty. Type $\tSkip$ represents no action, while
type $\tEnd$ indicates the closing of a channel, after which no more
communication can take place. Type $\tSeq{R}{S}$ denotes the sequential
composition of $R$ and $S$, which is associative, right distributes over choices
types, has (left and right) identity $\tSkip$ and left-absorber $\tEnd$.

The final two productions in both functional and session grammars introduce self-references and the recursion operator. Their inclusion in the two grammars ensures we can have both recursive functional types and recursive session types while avoiding nonsensical types such as $\tRec{t}{\tArrow{\tUnit}{*}{\tSeq{\tOut{\tUnit}}{t}}}$ at the syntactical level (avoiding the need for a kinding system). 

Still, we do not consider all types generated by these grammars to be \emph{well-formed}. Consider session type $\tRec{r}{\tSeq{r}{\tOut{\tUnit}}}$. No matter how many times we unfold it, we cannot resolve its first communication action. The same could be said of $\tRec{r}{\tSeq{\tSkip}{\tSeq{r}{\tOut{\tUnit}}}}$. We must therefore ensure that any self-reference in a sequential composition is preceded by a type constructor representing some meaningful action, i.e., not equivalent to $\tSkip$. This is achieved by adapting the conventional notion of contractivity (no subterms of the form $\tRec{x}{\tRec{x_1}{\,\ldots\,\tRec{x_n}{x}}}$)~\cite{DBLP:journals/acta/GayH05} to account for $\tSkip$ as the identity of sequential composition. This corresponds to the notion of \textit{guardedness} in the theory of process algebra (e.g.~\cite{ DBLP:journals/iandc/GrooteH94,DBLP:books/sp/Milner80}).

In addition to contractivity, we must ensure that well-formed types contain no
free references. The type formation judgement $\wellFormed{\Delta}{T}$, where
$\Delta$ is a set of references, combines these requirements. The rules for the
judgement can be found in \cref{subsec:type-formation}.
\begin{figure}[t!]
	\begin{mathpar}
		\text{Syntactic subtyping (\emph{coinductive})}\hfill\fbox{$T \subS T$}\\
		\inferrule[S-Unit]{}{\tUnit \subS \tUnit}
		
		\inferrule[S-Arrow]{U_1 \subS T_1 \\ T_2 \subS U_2 \\ m \sqsubseteq n}{T_1 \overset{m}\to T_2 \subS U_1 \overset{n}\to U_2}
		
		\inferrule[S-Rcd]
		{K \subseteq L \\ T_j \subS U_j \,\,(\forall j\in K)}
		{\{\ell{:}\,T_\ell\}_{\ell \in L} \subS \{k{:}\,U_k\}_{k \in K}}
		
		\inferrule[S-Vrt]
		{L \subseteq K \\ T_j \subS U_j \,\,(\forall j\in L)}
		{\langle \ell{:}\,T_\ell\rangle_{\ell \in L} \subS \langle k{:}\,U_k\rangle_{k \in K}}
		
		\inferrule[S-RecL]
		{\subst{x}{\tRec{x}{T}}{T}\subS U}
		{\tRec{x}{T} \subS U}
		
		\inferrule[S-RecR]
		{T \subS \subst{x}{\tRec{x}{U}}{U}}
		{T \subS \tRec{x}{U}}
		
		\inferrule[S-In]{T \subS U}{\tIn T \subS \tIn U}
		
		\inferrule[S-Out]{U \subS T}{\tOut T \subS \tOut U}
		
		\inferrule[S-ExtChoice]
		{L \subseteq K \\ S_j \subS R_j \,\,(\forall j\in L)}
		{\&\{ \ell{:}\,S_\ell\}_{\ell \in L} \subS \&\{k{:}\,R_k\}_{k \in K}}
		
		\inferrule[S-IntChoice]
		{K \subseteq L \\ S_j \subS R_j \,\,(\forall j\in K)}
		{\oplus\{\ell{:}\,S_\ell\}_{\ell \in L} \subS \oplus\{k{:}\,R_k\}_{k \in K}}
		
		\inferrule[S-Skip]{}{\tSkip \subS \tSkip}
		
		\inferrule[S-End]{}{\tEnd \subS \tEnd}
		
		\inferrule[S-InSeq1L]{T \subS U \\ S\subS \tSkip}{\tSeq{\tIn T}{S} \subS \tIn U}
		
		\inferrule[S-InSeq1R]{T \subS U \\ S\subS\tSkip}{\tIn T \subS \tSeq{\tIn{U}}{S}}
		
		\inferrule[S-InSeq2]{T \subS U \\ S\subS R}{\tSeq{\tIn T}{S} \subS \tSeq{\tIn{U}}{R}}
		
		\inferrule[S-OutSeq1L]{U \subS T \\ S\subS\tSkip}{\tSeq{\tOut{T}}{S} \subS \tOut{U}}
		
		\inferrule[S-OutSeq1R]{U \subS T \\ S\subS\tSkip}{\tOut T \subS \tSeq{\tOut{U}}{S}}
		
		\inferrule[S-OutSeq2]{U \subS T \\ S\subS R}{\tSeq{\tOut T}{S} \subS \tSeq{\tOut U}{R}}
		
		\inferrule[S-ChoiceSeqL]
		{\tChoice{\ell}{\tSeq{S_\ell}{S}}{L} \subS R}
		{\tSeq{\tChoice{\ell}{S_\ell}{L}}{S} \subS R}
		
		\inferrule[S-ChoiceSeqR]
		{S \subS \tChoice{\ell}{\tSeq{R_\ell}{R}}{L}}
		{S \subS \tSeq{\tChoice{\ell}{R_\ell}{L}}{R}}
		
		\inferrule[S-SkipSeqL]{S \subS R}{\tSeq{\tSkip}{S} \subS R}
		
		\inferrule[S-SkipSeqR]{S \subS R}{S \subS \tSeq{\tSkip}{R}}
		
		\inferrule[S-EndSeq1L]{}{\tSeq \tEnd S \subS \tEnd}
		
		\inferrule[S-EndSeq1R]{}{\tEnd \subS \tSeq \tEnd R}
		
		\inferrule[S-EndSeq2]{}{\tSeq{\tEnd}{S} \subS \tSeq{\tEnd}{R}}
		
		\inferrule[S-SeqSeqL]
		{\tSeq{S_1}{(\tSeq{S_2}{S_3})} \subS R}
		{\tSeq{(\tSeq{S_1}{S_2})}{S_3} \subS R}
		
		\inferrule[S-SeqSeqR]
		{S \subS \tSeq{R_1}{(\tSeq{R_2}{R_3})}}
		{S \subS \tSeq{(\tSeq{R_1}{R_2})}{R_3}}
		
		\inferrule[S-RecSeqL]
		{\tSeq{(\subst{s}{\tRec{s}{S_1}}{S_1})}{S_2} \subS R}
		{\tSeq{(\tRec{s}{S_1})}{S_2} \subS R}
		
		\inferrule[S-RecSeqR]
		{S \subS \tSeq{(\subst{s}{\tRec{s}{R_1}}{R_1})}{R_2}}
		{S \subS \tSeq{(\tRec{s}{R_1})}{R_2}}\\
		\text{Preorder on multiplicities}\hfill\fbox{$m\sqsubseteq m$}\\
		m \sqsubseteq m \hspace{1.5em} \ast \sqsubseteq {\lin}
	\end{mathpar}
	\caption{Syntactic subtyping.}
	\label{fig:syntactic-sub}
\end{figure}
We are now set to define our syntactic subtyping relation. We begin by surveying the features it should support:
\begin{itemize}
	\item\textbf{Input and output subtyping} Input variance and output contravariance are the central features of subtyping for types that govern entities that can be written to or read from, such as channels and references~\cite{DBLP:books/daglib/0005958}. They are therefore natural features of the subtyping relation for conventional session types as well~\cite{DBLP:journals/acta/GayH05}. Observe that~$\tIn{\{\field{A}{\tInt},\field{B}{\tBool}\}}\subS\tIn{\{\field{A}{\tInt}\}}$ should be true, for the type of the received value, $\{\field{A}{\tInt},\field{B}{\tBool}\}$, safely substitutes the expected type, $\{\field{A}{\tInt}\}$. Observe also that $\tOut{\{\field{A}{\tInt}\}}\subS\tOut{\{\field{A}{\tInt},\field{B}{\tBool}\}}$ should be true, because the type of the value to be sent, $\{\field{A}{\tInt},\field{B}{\tBool}\}$, is a subtype of $\{\field{A}{\tInt}\}$, the type of the messages the substitute channel is allowed to send.
	
	\item \textbf{Choice subtyping} If we understand external and internal choice types as, respectively, the input and output of a label, then their subtyping properties are easy to derive: external choices are covariant on their label set, internal choices are contravariant on their label set, and both are covariant on the continuation of the labels (this is known as \textit{width subtyping}). Observe that $\&\{\field{A}{\tIn{\tInt}}\} \subS \&\{\field{A}{\tIn{\tInt}},\field{B}{\tOut{\tBool}}\}$ should be true, for every branch in the first type can be safely handled by matching on the second type. Likewise, $\oplus\{\field{A}{\tIn{\tInt}},\field{B}{\tOut{\tBool}}\} \subS \oplus\{\field{A}{\tIn{\tInt}}\}$ should be true, for every choice in the second type can be safely selected in the first.
	
	\item\textbf{Sequential composition} In the classical subtyping relation for regular session types, input and output types ($\tMsg{T}.S$) can be characterized as covariant in their continuation. Although the same general intuition applies in the context-free setting, we cannot as easily characterize the variance of the sequential composition constructor ($\tSeq{S}{R}$) due to its monoidal, distributive and absorbing properties. For instance, consider types $\tSeq{S_1}{S_2}$ and $\tSeq{R_1}{R_2}$, with $S_1=\tSeq{\tOut{\tInt}}{\tOut{\tBool}}$, $S_2=\tIn{\tInt}, R_1=\tOut{\tInt}$ and $R_2=\tSeq{\tOut{\tBool}}{\tIn{\tInt}}$. Although it should be true that $\tSeq{S_1}{S_2}\subS \tSeq{R_1}{R_2}$, we can have neither $S_1 \subS R_1$ nor $S_2 \subS R_2$.
	
	\item\textbf{Functional subtyping} The subtyping properties of function, record and variant types are well known, and we refer the readers to Pierce's book for the reasoning behind them~\cite{DBLP:books/daglib/0005958}. Succinctly, the function type constructor is contravariant on the domain and covariant on the range, and the variant and record constructors are both covariant on the type of the fields, but respectively covariant and contravariant on their label sets.\\
	
	\item\textbf{Multiplicity subtyping} Using an unrestricted ($*$) resource where a linear ($\lin$) one is expected does not compromise safety, provided that, multiplicities aside, the type of the former may safely substitute the type of the latter. We can express this relationship between multiplicities through a preorder captured by inequality $*\sqsubseteq{\lin}$. In our system, function types may be either linear or unrestricted. Thus, type $\tArrow{T_1}{m}{T_2}$ can be considered a subtype of $\tArrow{U_1}{n}{U_2}$ if $U_1$ and $T_2$ are subtypes, respectively, of $T_1$ and $U_2$ and if $m\sqsubseteq n$ (thus we can characterize the function type constructor as covariant on its multiplicity).
\end{itemize}
The rules for our syntactic subtyping relation, interpreted coinductively, are shown in \cref{fig:syntactic-sub}. Rules \textsc{S-Unit}, \textsc{S-Arrow}, \textsc{S-Rcd}, \textsc{S-Vrt}, \textsc{S-RecL} and \textsc{S-RecR} establish the classical subtyping properties associated with both functional and equi-recursive types, with \textsc{S-Arrow} additionally encoding subtyping between linear and unrestricted functions, relying on a preorder on multiplicities also defined in \cref{fig:syntactic-sub}. Rules \textsc{S-End}, \textsc{S-In}, \textsc{S-Out}, \textsc{S-ExtChoice} and \textsc{S-IntChoice} bring to the context-free setting the classical subtyping properties expected from session types, as put forth by Gay and Hole~\cite{DBLP:journals/acta/GayH05}.

The remaining rules account for sequential composition, which distributes over
choice and exhibits a monoidal structure with its neutral
element in $\tSkip$ and left-absorbing element in $\tEnd$. We include, for each session type constructor $S$, a left
rule (denoted by suffix \textsc{L}) of the form $\tSeq{S}{R}\subS S'$ and a
right rule (denoted by suffix \textsc{R}) of the form $S'\subS \tSeq{S}{R}$. An additional rule is necessary for each constructor over which sequential composition does not distribute, associate or neutralize (\textsc{S-InSeq2}, \textsc{S-OutSeq2} and \textsc{S-EndSeq2}). Since we are using a coinductive proof scheme, we include rules to `move' sequential composition down the syntax. Thus, given a type $\tSeq{S}{R}$, we inspect $S$ to decide which rule to apply next.
\begin{restatable}{theorem}{syntacticPreorder}
	\label{thm:syntactic-preorder}
	The syntactic subtyping relation $\subS$ is a preorder on types.
\end{restatable}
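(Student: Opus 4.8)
The relation $\subS$ of \cref{fig:syntactic-sub} is defined coinductively, so in both halves the method is the standard one: exhibit a relation, show it is \emph{backward closed} under the rules (every pair in it is the conclusion of some rule all of whose premises already lie in the relation), and conclude by the coinduction principle that it is contained in $\subS$.

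For \textbf{reflexivity}, the plan is to take the identity relation on well-formed types --- or, more precisely, its closure under the single-step reassociation, distribution, $\tSkip$-elimination and unfolding rewrites that the \textsc{L}/\textsc{R}-suffixed rules and \textsc{S-RecL}/\textsc{S-RecR} demand as premises --- and verify backward closure by case analysis on the shape of the common type $T$. For every constructor other than sequential composition the matching rule applies with identity premises; for $\tSeq{S}{R}$ one inspects the head of $S$ (contractivity guarantees that after finitely many unfoldings this head is a genuine constructor), choosing \textsc{S-InSeq2}, \textsc{S-OutSeq2} or \textsc{S-EndSeq2} when $S$ starts with a message or with $\tEnd$, and one of the left rewriting rules otherwise, after which a single application of the matching right rewriting rule returns both sides to an identical type. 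A slightly cleaner alternative is to prove first, by coinduction, that the structural congruence $\approx$ on session types --- the least congruence generated by associativity of $\tSeq{\_}{\_}$, neutrality of $\tSkip$, left absorption of $\tEnd$, distributivity over choices, and recursion unfolding --- is itself a subtyping relation (symmetry of $\approx$ discharges the contravariant premises of \textsc{S-Arrow} and \textsc{S-Out}), and then note that $\approx$ is reflexive.

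For \textbf{transitivity}, which is the real work, the plan is to show that the relational composite $\subS\circ\subS = \{(T,V)\mid \exists U.\ T\subS U \text{ and } U\subS V\}$ is backward closed and invoke coinduction. Fix $(T,V)$ with witness $U$ and look at the rules at the roots of the two given derivations of $T\subS U$ and of $U\subS V$. The ten rewriting rules each touch only one side of their conclusion and are parametric in the other: if the root of $T\subS U$ rewrites the left-hand side, its premise has the form $T'\subS U$, so $(T',V)\in\subS\circ\subS$ and re-applying that very rule to $T\subS V$ settles the case; symmetrically when the root of $U\subS V$ rewrites its right-hand side. In all remaining situations $U$ is already head-exposed on the relevant side, so its top constructor is fixed, and for each possible shape of $U$ ($\tUnit$, an arrow, a record, a variant, $\tIn{T'}$, $\tOut{T'}$, a choice, $\tSkip$, $\tEnd$, or a sequential composition headed by a message or by $\tEnd$) only a handful of compatible pairs of productive rules are possible. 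In each such pair the rule witnessing $T\subS V$ is read off directly and its premises are obtained by chaining the premises of the two given rules through the shared components of $U$; these chained premises again lie in $\subS\circ\subS$, and the contravariant positions (the domain of \textsc{S-Arrow}, the payload of \textsc{S-Out}) cause no trouble because the composite is then taken in the correct order. Transitivity of the multiplicity preorder $\sqsubseteq$ is immediate.

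Two auxiliary facts are what make the argument actually close, and I expect them to be the main obstacle. The first is that $\subS$ is compatible with the structural congruence: $T\approx T'$, $U\approx U'$ and $T\subS U$ imply $T'\subS U'$ (again a coinductive argument, with a more uniform case analysis). This is needed to slide two rewriting rules past one another --- e.g.\ when the root of $T\subS U$ rewrites $U$ on its right while the root of $U\subS V$ rewrites that same $U$ on its left --- and it is also what licenses treating $U$ as head-exposed in the argument above. The second is a small theory of the "$\tSkip$-like" session types, those $S$ with $S\subS\tSkip$: one shows $S\subS\tSkip$ iff $\tSkip\subS S$, and that this class is closed under $\subS$ both upwards and downwards. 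These are exactly what is needed when a productive rule such as \textsc{S-InSeq2} or \textsc{S-InSeq1R} must be applied with a premise $S\subS R$ reconstructed from hypotheses of the form $S\subS\tSkip$, $R\subS\tSkip$, $S\subS R'$. Throughout, well-formedness (contractivity) is used essentially: it guarantees that the rewriting carried out at the root of any derivation is finite, so the case analysis on the head constructor of the middle type is exhaustive. (Once the semantic characterization of \cref{sec:semantic-subtyping} is available, reflexivity and transitivity also follow more abstractly from the identity relation being an $\mathcal{XYZW}$-simulation and from $\mathcal{XYZW}$-simulations being closed under composition; the direct argument sketched here has the advantage of not depending on that development.)
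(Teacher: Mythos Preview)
Your approach is essentially the paper's: for reflexivity, a slight closure of the identity relation shown backward closed by case analysis on the head constructor (using one-step unraveling for the non-head-normal cases); for transitivity, the composite $\{(T,V)\mid\exists U.\ T\subS U,\ U\subS V\}$ shown backward closed via the same three-way split (rewrite $T$, rewrite $V$, or both derivations progressing) followed by a constructor-directed case analysis on the middle type.

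One simplification worth noting: your first auxiliary (full compatibility with a structural congruence $\approx$) is heavier than what the ``sliding'' case actually requires. The paper simply observes that if the root of the $T\subS U$ derivation is a left-preserving rule rewriting $U$ to some $U'$, then the \emph{symmetric} right-preserving rule applies to $U\subS V$ and rewrites $U$ to the very same $U'$; hence one may assume without loss of generality that the witness $U$ is already unraveled, and no separate congruence lemma is needed. Your second auxiliary --- that $S\subS\tSkip$ iff $\tSkip\subS S$ --- is, by contrast, genuinely needed: it is what closes cases such as $T=\tSeq{\tIn{T'}}{T''}$, $U=\tIn{U'}$, $V=\tSeq{\tIn{V'}}{V''}$, where one must produce $(T'',V'')$ in the composite from hypotheses $T''\subS\tSkip$ and $V''\subS\tSkip$. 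The paper glosses over this with ``handled similarly'', so your identification of it is a real addition.
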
 
\begin{example}
	Let us briefly return to \cref{ex:intro}. It is now easy to see that $\mathsf{STree} \subS \mathsf{SFullTree1}$: we unfold the left-hand side and apply rule \textsc{S-IntChoice}. Then we apply the distributivity rules as necessary until reaching an internal choice with no continuation, at which point we can apply \textsc{S-IntChoice} again, or until reaching a type with $\tOut{\tInt}$ at the head, at which point we apply \textsc{S-InSeq2}. We repeat this process until reaching $\mathsf{STree}\subS\mathsf{SFullTree0}$, and proceed similarly until reaching $\mathsf{STree}\subS\mathsf{SEmpty}$, which follows from \textsc{S-IntChoice} and \textsc{S-Skip}.
\end{example}
Despite clearly conveying the intended meaning of the subtyping relation, the rules suggest no obvious algorithmic intepretation: on the one
hand, the presence of bare metavariables makes the system not syntax-directed;
on the other hand, rules \textsc{S-RecL}, \textsc{S-RecSeqL} and their
right counterparts lead to infinite derivations which are not solvable by
a conventional fixed-point construction~\cite{DBLP:journals/acta/GayH05,DBLP:books/daglib/0005958}. In
the next section we develop an alternative, semantic approach to subtyping,
which we use as a stepping stone to develop our subtyping algorithm.

\section{Semantic subtyping}\label{sec:semantic-subtyping}
\emph{Semantic equivalence} for context-free session types is usually based on \emph{observational equivalence} or \emph{bisimilarity}, meaning that two session types are considered equivalent if they exhibit exactly the same communication behaviour~\cite{DBLP:conf/icfp/ThiemannV16}. An analogous notion of \emph{semantic subtyping} should therefore rely on an \emph{observational preorder}. In this section we develop such a preorder.

We define the behaviour of types via a labelled transition system (LTS) by establishing relation $\transition{T}{a}{U}$ (``type $T$ transitions by action $a$ to type $U$''). We follow Costa et al.~\cite{DBLP:journals/corr/abs-2203-12877} in attributing behaviour to functional types, allowing them to be encompassed in our observational preorder. The rules defining the trans~ition relation, as well as the grammar that generates all possible transition actions, are shown in \cref{fig:hocfst-lts}. 

In general, each functional type constructor generates a transition for each of
its fields ($\tUnit$ and $\tEnd$, which have none, transition to $\tSkip$). Linear functions, exhibit an additional transition to represent their restricted use (\textsc{L-LinArrow}), and records/variants include a default transition that is independent of their fields (\textsc{L-RcdVrt}). The behaviour of session types is more complex, since it must account for their algebraic properties. Message types exhibit a transition for their payload (\textsc{L-Msg1}, \textsc{L-MsgSeq1}) and another for their continuation, which is $\tSkip$ by omission (\textsc{L-Msg2}, \textsc{L-MsgSeq2}). Choices behave much like records/variants when alone, but are subject to distributivity when composed  (\textsc{L-ChoiceFieldSeq}). Type $\tEnd$, which absorbs its continuation, transitions to $\tSkip$ (\textsc{L-End}, \textsc{L-EndSeq}). Rules \textsc{L-SeqSeq}, \textsc{L-SkipSeq} account for associativity and identity, and rules \textsc{L-Rec} and \textsc{L-RecSeq} dictate that recursive types behave just like their unfoldings. Notice that $\tSkip$ has no transitions.
\begin{figure}[t]
	\text{Labelled transition system}\hfill\fbox{$\transition{T}{a}{T}$}
	\begin{mathpar}
		\inferrule[L-Unit]{}{\transition{\tUnit}{\tUnit}{\tSkip}}
		
		\inferrule[L-ArrowDom]{}{\transition{(\tArrow{T}{m}{U})}{\lArrowDomain}{T}}
		
		\inferrule[L-ArrowRng]{}{\transition{(\tArrow{T}{m}{U})}{\lArrowRange}{U}}
		
		\inferrule[L-LinArrow]{}{\transition{(\tArrow{T}{{\lin}}{U})}{\lArrowLinear}{\tSkip}}
		
		\inferrule[L-RcdVrtField]{k\in L}{\transition{\llparenthesis\ell{:}\,T_\ell\rrparenthesis_{\ell\in L}}{\llparenthesis\rrparenthesis_k}{T_k}}
		
		\inferrule[L-RcdVrt]{}{\transition{\tRcdVrt{\ell}{T_\ell}{L}}{\lRcdVrtDefault}{\tSkip}}
		
		\inferrule[L-Rec]
		{\transition{\subst{x}{\tRec{x}{T}}{T}}{a}{U}}
		{\transition{\tRec{x}{T}}{a}{U}}
		
		\inferrule[L-Msg1]{}{\transition{\tMsg{T}}{\lMsgPayload}{T}}
		
		\inferrule[L-Msg2]{}{\transition{\tMsg{T}}{\lMsgContinuation}{\tSkip}}
		
		\inferrule[L-Choice]{}{\transition{\tChoice{\ell}{S_\ell}{L}}{\lChoiceDefault}{\tSkip}}
		
		\inferrule[L-ChoiceField]{k\in L}{\transition{\tChoice{\ell}{S_\ell}{L}}{\lChoiceLabel{k}}{S_k}}
		
		\inferrule[L-End]{}{\transition{\tEnd}{\tEnd}{\tSkip}}
		
		\inferrule[L-MsgSeq1]{}{\transition{\tSeq{\tMsg{T}}{S}}{\lMsgPayload}{T}}
		
		\inferrule[L-MsgSeq2]{}{\transition{\tSeq{\tMsg{T}}{S}}{\lMsgContinuation}{S}}
		
		\inferrule[L-ChoiceSeq]{}{\transition{\tSeq{\tChoice{\ell}{S_\ell}{L}}{R}}{\lChoiceDefault}{\tSkip}}
		
		\inferrule[L-SkipSeq]{\transition{S}{a}{T}}{\transition{\tSkip;S}{a}{T}}
		
		\inferrule[L-EndSeq]{}{\transition{\tSeq{\tEnd}{S}}{\tEnd}{\tSkip}}
		
		\inferrule[L-SeqSeq]{\transition{S_1;(S_2;S_3)}{a}{T}}{\transition{\tSeq{(\tSeq{S_1}{S_2})}{S_3}}{a}{T}}
		
		\inferrule[L-ChoiceFieldSeq]{k\in L}{\transition{\tSeq{\tChoice{\ell}{S_\ell}{L}}{R}}{\lChoiceLabel{k}}{\tSeq{S_k}{R}}}
		
		\inferrule[L-RecSeq]
		{\transition{\tSeq{(\subst{s}{\tRec{s}{S}}{S})}{R}}{a}{T}}
		{\transition{\tSeq{(\tRec{s}{S})}{R}}{a}{T}}
		
		\inferrule{}{\text{(no rule for $\tSkip$)}}
		\\
		\text{Actions}
		\hfill\\
		a \Coloneqq    \tUnit 
		\svert \lArrowDomain
		\svert \lArrowRange
		\svert \lArrowLinear
		\svert \tEnd
		\svert \lRcdVrtLabel{\ell}
		\svert \lRcdVrtDefault
		\svert \lMsgPayload
		\svert \lMsgContinuation
		\svert \lChoiceDefault
		\svert \lChoiceLabel{\ell}
	\end{mathpar}
	\caption{Labelled transition system. Letters $\mathsf{d}$, $\mathsf{r}$, $\mathsf{p}$, $\mathsf{c}$ in labels stand for ``domain'', ``range'', ``payload'' and ``continuation''.}
	\label{fig:hocfst-lts}
\end{figure}
With the behaviour of types established, we now look for an appropriate notion of observational preorder. Several such notions have been studied in the literature. \emph{Similarity}, defined as follows, is arguably the simplest of them~\cite{DBLP:conf/ijcai/Milner71, DBLP:conf/tcs/Park81}.
\begin{definition}
	\label{simulation}
	A type relation $\mathcal{R}$ is said to be a simulation if, whenever $T\mathcal{R}U$, for all $a$ and $T'$ with $\transition{T}{a}{T'}$ there is $U'$ such that $\transition{U}{a}{U'}$ and $T'\mathcal{R}U'$
	
	Similarity, written $\preceq$, is the union of all simulation relations. We say that a type $U$ simulates type $T$ if $T\preceq U$.
\end{definition} 
Unfortunately, plain similarity is of no use to us. A small example shows why: type $\mathsf{{\oplus}\{\field{A}{\tEnd},\field{B}{\tEnd}\}}$ both simulates and is a subtype of $\mathsf{\oplus\{A{:\,}\tEnd\}}$, while type $\mathsf{\&\{A{:\,}\tEnd\}}$ does not simulate yet is a subtype of $\mathsf{\&\{A{:\,}\tEnd, B{:\,}\tEnd\}}$. Reversing the direction of the simulation would be of no avail either, as it would leave us with the reverse problem.

It is apparent that a more refined notion of simulation is necessary, where the direction of the implication depends on the transition labels. Aarts and Vaandrager provide just such a notion in the form of \emph{$\mathcal{XY}$-simulation}~\cite{DBLP:conf/concur/AartsV10}, a simulation relation parameterized by two subsets of actions, $\mathcal{X}$ and $\mathcal{Y}$, such that actions in $\mathcal{X}$ are simulated from left to right and those in $\mathcal{Y}$ are simulated from right to left, selectively combining the requirements of simulation and reverse simulation.
\begin{definition}
	Let  $\mathcal{X},\mathcal{Y} \subseteq \mathcal{A}$. A type relation $\mathcal{R}$ is said to be an $\mathcal{XY}$-simulation if, whenever $T\mathcal{R}U$, we have: 
	\begin{enumerate}
		\item for each $a \in \mathcal{X}$ and each $T'$ with $\transition{T}{a}{T'}$, there is $U'$ such that $\transition{U}{a}{U'}$ with $T'\mathcal{R}U'$;
		\item for each $a \in \mathcal{Y}$ and each $U'$ with $\transition{U}{a}{U'}$, there is $T'$ such that $\transition{T}{a}{T'}$ with $T'\mathcal{R}U'$.
	\end{enumerate}
	$\mathcal{XY}$-similarity, written $\preceq^{\mathcal{XY}}$, is the union of all $\mathcal{XY}$-simulation relations. We say that a type $T$ is $\mathcal{XY}$-similar to type $U$ if $T\preceq^{\mathcal{XY}}U$.
\end{definition}
Similar or equivalent notions have appeared throughout the literature: \emph{modal refinement}~\cite{DBLP:conf/lics/LarsenT88}, \emph{alternating simulation}~\cite{DBLP:conf/concur/AlurHKV98} and, perhaps more appropriately named (for our purposes), \emph{covariant-contravariant simulation}~\cite{DBLP:conf/calco/FabregasFP09}. Padovani's original subtyping relation for first-order context-free session types~\cite{DBLP:journals/toplas/Padovani19} can also be understood as a refined form of $\mathcal{XY}$-simulation. 

We can tentatively define a semantic subtyping relation ${\lesssim'}$ as $\mathcal{XY}$-similarity, where $\mathcal{X}$ and $\mathcal{Y}$ are the label sets generated by the following grammars for $a_\mathcal{X}$ and $a_\mathcal{Y}$, respectively.
\begin{align*}
	\begin{aligned}[t]
		a_\mathcal{X} &\Coloneqq a_{\mathcal{XY}} \svert \lVrtLabel{\ell} \svert \lExtChoiceLabel{\ell} \\ 
		a_\mathcal{Y} &\Coloneqq a_\mathcal{XY} \svert \lArrowLinear \svert \lRcdLabel{\ell} \svert \lIntChoiceLabel{\ell}
	\end{aligned}
	\qquad 
	\begin{aligned}[t]
		a_\mathcal{XY} \Coloneqq \tUnit \svert \lArrowDomain \svert \lArrowRange \svert \lRcdVrtDefault  \svert \lMsgPayload \svert \lMsgContinuation \svert   \lChoiceDefault \svert  \tEnd
	\end{aligned}
\end{align*}
This would indeed give us the desired result for our previous example, but we
still cannot account for the contravariance of output and function types: we
want $T=\tOut{\mathsf{\{A{:}\,\tInt\}}}$ to be a subtype of
$U=\tOut{\mathsf{\{A{:}\, \tInt, B{:}\, \tBool\}}}$, yet
$T\mathrel{{\lesssim}'}U$ does not hold (in fact, we have
$U\mathrel{{\lesssim}'}T$, a clear violation of run-time safety). The same could
be said for types $\tArrow{\mathsf{\{A{:}\, \tInt\}}}{*}{\tInt}$ and
$\tArrow{\mathsf{\{A{:}\, \tInt, B{:}\, \tBool\}}}{*}{\tInt}$. In short, our
simulation needs the $\lOutPayload$ and $\lArrowDomain$-derivatives to be
related in the direction opposite to that of the initial types. Thus we need to
selectively apply a strong form of \emph{contrasimulation} as well~\cite{C:book/Sangiorgi11,DBLP:conf/concur/Glabbeek93} (the original notion is
defined with weak transitions, a sort of transition we do not address). 

To allow this, we generalize the definition of $\mathcal{XY}$-simulation by parameterizing it on two further subsets of actions and including two more clauses where the direction of the relation between the derivatives is reversed. By analogy with $\mathcal{XY}$-simulation, we call the resulting notion $\mathcal{XYZW}$-simulation.
\begin{definition}
	\label{def:xyzw-sim}
	Let $\mathcal{X,Y,Z,W} \subseteq \mathcal{A}$. A type relation $\mathcal{R}$ is a $\mathcal{XYZW}$-simulation if, whenever $T\mathcal{R}U$, we have:
	\begin{enumerate}
		\item for each $a \in \mathcal{X}$ and each $T'$ with $\transition{T}{a}{T'}$, there is $U'$ such that $\transition{U}{a}{U'}$ with $T'\mathcal{R}U'$;
		\item for each $a \in \mathcal{Y}$ and each $U'$ with $\transition{U}{a}{U'}$, there is $T'$ such that $\transition{T}{a}{T'}$ with $T'\mathcal{R}U'$;
		\item for each $a \in \mathcal{Z}$ and each $T'$ with $\transition{T}{a}{T'}$, there is $U'$ such that $\transition{U}{a}{U'}$ with $U'\mathcal{R}T'$;
		\item for each $a \in \mathcal{W}$ and each $U'$ with $\transition{U}{a}{U'}$, there is $T'$ such that $\transition{T}{a}{T'}$ with $U'\mathcal{R}T'$.
	\end{enumerate}
	$\mathcal{XYZW}$-similarity, written ${\xyzwsim}$, is the union of all $\mathcal{XYZW}$-simulation relations. We say that a type $T$ is $\mathcal{XYZW}$-similar to type $U$ if $T\xyzwsim U$.	
\end{definition}
$\mathcal{XYZW}$-simulation generalizes several existing observational relations: $\mathcal{XY}$-simulation can be defined as an
$\mathcal{XY}\emptyset\emptyset$-simulation, bisimulation as $\mathcal{AA}\emptyset\emptyset$-simulation (alternatively, $\emptyset\emptyset\mathcal{AA}$-simulation or $\mathcal{AAAA}$-simulation), and plain simulation as $\mathcal{A}\emptyset\emptyset\emptyset$-simulation.
\begin{restatable}{theorem}{xyzwsimPreorder}
	\label{thm:xyzw-preorder}
	For any $\mathcal{X,Y,Z,W}$, ${\xyzwsim}$ is a preorder relation on types.
\end{restatable}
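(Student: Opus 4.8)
The plan is to prove the two components of "preorder" separately: reflexivity (for every type $T$, $T \xyzwsim T$) and transitivity (if $T \xyzwsim U$ and $U \xyzwsim V$, then $T \xyzwsim V$). Since $\xyzwsim$ is defined as the union of all $\mathcal{XYZW}$-simulations, and this union is itself an $\mathcal{XYZW}$-simulation (the defining clauses are all of the "for each derivative, there exists a related derivative" shape, which is closed under arbitrary unions), it suffices for each part to exhibit a single $\mathcal{XYZW}$-simulation relation containing the desired pairs and then invoke the fact that $\xyzwsim$ contains every $\mathcal{XYZW}$-simulation. I would state this closure-under-union observation as a preliminary remark, since it is reused in both parts.

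For reflexivity, I would take the identity relation $\mathcal{I} = \{(T,T) \mid T \text{ a type}\}$ and check that it is an $\mathcal{XYZW}$-simulation for arbitrary $\mathcal{X},\mathcal{Y},\mathcal{Z},\mathcal{W}$. Given $T \mathcal{I} T$ and any transition $\transition{T}{a}{T'}$, the matching transition on the right is the very same one, and $T' \mathcal{I} T'$ holds; likewise for the clauses where the direction of the relation on derivatives is swapped, since $\mathcal{I}$ is symmetric. Hence $\mathcal{I} \subseteq {\xyzwsim}$, giving $T \xyzwsim T$ for all $T$. This part is essentially immediate.

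For transitivity, the candidate relation is the relational composition
\[
	\mathcal{R} = \{(T,V) \mid \exists U.\; T \xyzwsim U \text{ and } U \xyzwsim V\}.
\]
I would verify the four clauses of \cref{def:xyzw-sim} for $\mathcal{R}$ by chasing transitions through the intermediate type $U$. For clause~1, take $a \in \mathcal{X}$ and $\transition{T}{a}{T'}$; since $T \xyzwsim U$ there is $U'$ with $\transition{U}{a}{U'}$ and $T' \xyzwsim U'$, and then since $U \xyzwsim V$ there is $V'$ with $\transition{V}{a}{V'}$ and $U' \xyzwsim V'$, so $T' \mathcal{R} V'$ as required. Clause~2 is the mirror image, starting from a transition of $V$ and pulling back first through $U$ then through $T$. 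Clauses~3 and~4 are the subtle ones: here the relation between the derivatives is inverted. For $a \in \mathcal{Z}$ and $\transition{T}{a}{T'}$, using $T \xyzwsim U$ gives $U'$ with $\transition{U}{a}{U'}$ and $U' \xyzwsim T'$; then using $U \xyzwsim V$ (again with $a \in \mathcal{Z}$) gives $V'$ with $\transition{V}{a}{V'}$ and $V' \xyzwsim U'$. Composing $V' \xyzwsim U'$ with $U' \xyzwsim T'$ yields a pair in $\mathcal{R}$ but in the order $(V',T')$ — which is exactly what clause~3 demands ($U'\mathcal{R}T'$ with the outer types being $T,V$, i.e.\ we need $V' \mathcal{R} T'$). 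Clause~4 is handled symmetrically. Once all four clauses check out, $\mathcal{R}$ is an $\mathcal{XYZW}$-simulation, hence $\mathcal{R} \subseteq {\xyzwsim}$, which is precisely transitivity.

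The main obstacle — really the only place where care is needed — is bookkeeping the direction of the derivative relation in clauses~3 and~4 when composing: one must check that the two "inverted" steps compose to give a pair that again lies in $\mathcal{R}$ in the correct order, rather than accidentally producing $T' \mathcal{R} V'$ where $V' \mathcal{R} T'$ is wanted. The key structural point making this work is that $\xyzwsim$ is used symmetrically in the definition of $\mathcal{R}$ only through composition, so a $\mathcal{Z}$- (or $\mathcal{W}$-) step followed by another step of the same kind reverses the order twice on one leg and once on the other, landing correctly. It may be cleanest to observe up front that $\xyzwsim$ composed with itself, read in the two possible orders, is what clauses~1–2 versus~3–4 each require, and organize the four verifications around that observation.
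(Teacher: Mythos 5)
Your proposal is correct: reflexivity via the identity relation and transitivity via the composition ${\xyzwsim};{\xyzwsim}$ (justified by the union-closure remark that ${\xyzwsim}$ is itself an $\mathcal{XYZW}$-simulation) is exactly the standard argument this theorem calls for, and you handle the one delicate point correctly — in clauses 3 and 4 the two inverted steps compose to give the pair $(V',T')$ in the right order precisely because both legs of the composition are ${\xyzwsim}$ itself rather than two arbitrary simulations. The paper does not actually spell out a proof of this theorem in its appendix, but your argument is the natural one and fills that gap soundly.
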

Equipped with the notion of $\mathcal{XYZW}$-similarity, we are ready to define the semantic subtyping relation for functional and higher-order context-free session types as follows. 
\begin{definition}
	\label{def:hocfst-sim}
	The semantic subtyping relation for functional and higher-order context-free session types $\simS$ is defined by $T \simS U$ when $T\xyzwsim U$ such that $\mathcal{X}$, $\mathcal{Y}$, $\mathcal{Z}$ and $\mathcal{W}$ are defined as the label sets generated by the following grammars for $a_\mathcal{X}$, $a_\mathcal{Y}$, $a_\mathcal{Z}$ and $a_\mathcal{W}$, respectively.
	\begin{align*}
		\begin{aligned}[t]
			a_\simSX &\Coloneqq a_\mathcal{XY} \svert \lArrowLinear \svert\, \lVrtLabel{\ell}   \svert \lExtChoiceLabel{\ell}\\ 
			a_\simSY &\Coloneqq a_\mathcal{XY} \svert \lRcdLabel{\ell} \svert \lIntChoiceLabel{\ell}
		\end{aligned}
		\qquad 
		\begin{aligned}[t]
			a_\simSZ, a_\simSW &\Coloneqq \lOutPayload \svert \lArrowDomain \\
			a_{\simSX\simSY} &\Coloneqq \tUnit \svert  \lArrowRange \svert \lRcdVrtDefault \svert \lInPayload \svert  \lMsgContinuation \svert \lChoiceDefault \svert  \tEnd
		\end{aligned}
	\end{align*}
\end{definition}
Notice the correspondence between the placement of the labels and the variance of their respective type constructors. Labels arising from covariant positions of the arrow and input type constructors are placed in both the $\mathcal{X}$ and $\mathcal{Y}$ sets, while those arising from the contravariant positions of the arrow and output type constructors are placed in both the $\mathcal{Z}$ and $\mathcal{W}$ sets. Labels arising from the fields of constructors exhibiting width subtyping are placed in a single set, depending on the variance of the constructor on the label set: $\mathcal{X}$ for covariance (external choice and variant constructors), $\mathcal{Y}$ for contravariance (internal choice and record constructors). The function type constructor is covariant on its multiplicity, thus the linear arrow label is placed in $\mathcal{X}$. Finally, default record/variant/choice labels and those arising from nullary constructors are placed in $\mathcal{X}$ and $\mathcal{Y}$, but they could alternatively be placed in $\mathcal{Z}$ and $\mathcal{W}$ or in all four sets (notice the parallel with bisimulation, that can be defined as $\mathcal{AA}\emptyset\emptyset$-simulation, $\emptyset\emptyset\mathcal{AA}$-simulation, or $\mathcal{AAAA}$-simulation).
\begin{example}
	Let us go back once again to our tree serialization example from \cref{sec:introduction}. Here it is also easy to see that $\mathsf{STree}\simS\mathsf{SFullTree1}$. Observe that, on the side of $\mathsf{STree}$, transitions by $\lIntChoiceLabel{\mathsf{Nil}}$ and $\lIntChoiceLabel{\mathsf{Node}}$ always appear together, while on the side of $\mathsf{SFullTree1}$ types transition first by $\lIntChoiceLabel{\mathsf{Node}}$ and then by $\lIntChoiceLabel{\mathsf{Nil}}$. Since $\lIntChoiceLabel{\mathsf{Nil}}$ and $\lIntChoiceLabel{\mathsf{Node}}$ belong exclusively to $\mathcal{Y}$, $\mathsf{STree}$ is always able to match $\mathsf{SFullTree1}$ on these labels (as in all the others in $\mathcal{Y}\cup\mathcal{W}$, and \emph{vice-versa} for $\mathcal{X}\cup\mathcal{Z}$).
\end{example}
\begin{restatable}[Soundness and completeness for subtyping relations]{theorem}{syntacticSemanticSoundnessCompleteness}
	\label{thm:soundness-completeness-stx-sem}
	Let $\vdash T$ and $\vdash U$. Then $T\subS U$ iff $T \simS U$.
\end{restatable}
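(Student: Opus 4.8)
The plan is to prove both implications by coinduction. Throughout I use that $\simS$ is, by \cref{def:hocfst-sim}, the greatest $\mathcal{XYZW}$-simulation for the fixed sets $\mathcal X,\mathcal Y,\mathcal Z,\mathcal W$ given there, and that $\subS$ is the greatest fixpoint $\nu F$ of the monotone functional $F$ on type relations induced by the (coinductively read) rules of \cref{fig:syntactic-sub}. First I would establish a few auxiliary facts about the LTS of \cref{fig:hocfst-lts}: (i) transitions preserve well-formedness, as do one-step unfoldings and the associativity-/identity-rearrangements of well-formed types; (ii) a type and its unfolding, and its $\tSkip$-/$\tEnd$-/re-association rewrites, all have the same immediate transitions (read off \textsc{L-Rec}, \textsc{L-RecSeq}, \textsc{L-SkipSeq}, \textsc{L-SeqSeq}, \textsc{L-EndSeq}); (iii) by contractivity, every well-formed type reaches, in finitely many such administrative rewrites, a head form topped by a genuine action constructor, whose transitions are then read off syntactically; and (iv) a $\tSkip$-characterisation: for well-formed $T$, having head form $\tSkip$, having no transitions, $T\subS\tSkip$, $\tSkip\subS T$, $T\simS\tSkip$, and $\tSkip\simS T$ all coincide. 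Fact (iv) is what reconciles the asymmetric ``continuation $\subS\tSkip$'' premises of \textsc{S-InSeq1R}, \textsc{S-OutSeq1R} with the simulation clauses that use them.

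For soundness ($T\subS U\Rightarrow T\simS U$), set $\mathcal R=\{(T,U)\mid\,\vdash T,\ \vdash U,\ T\subS U\}$; it suffices to show $\mathcal R$ is an $\mathcal{XYZW}$-simulation, since then $T\subS U$ entails $T\mathrel{\mathcal R}U$, hence $T\simS U$. The key step is an inversion lemma: from a derivation of $T\subS U$, strip the administrative rules \textsc{S-RecL/R}, \textsc{S-SkipSeqL/R}, \textsc{S-SeqSeqL/R}, \textsc{S-RecSeqL/R}, \textsc{S-ChoiceSeqL/R}, \textsc{S-EndSeq1L/1R/2} --- finitely many by (iii) --- until a constructor rule relating the head forms of $T$ and $U$ is reached; matching that rule against the transitions of those head forms (which by (ii) are the transitions of $T$ and $U$) yields, for every relevant action, the matching transition with continuations again related by $\subS$, in the direction prescribed by the firing clause. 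The placement of labels in \cref{def:hocfst-sim} is arranged to agree with the variances: \textsc{S-Out}/\textsc{S-OutSeq1L/1R/2} give the contravariant output-payload step (label in $\mathcal Z\cap\mathcal W$) and the covariant output-continuation step (in $\mathcal X\cap\mathcal Y$); \textsc{S-Arrow} gives the contravariant $\lArrowDomain$ (in $\mathcal Z\cap\mathcal W$), the covariant $\lArrowRange$ (in $\mathcal X\cap\mathcal Y$), and the covariant multiplicity via $\lArrowLinear\in\mathcal X$ together with $m\sqsubseteq n$; \textsc{S-IntChoice}/\textsc{S-Rcd} give width-contravariance via $\lIntChoiceLabel{\ell},\lRcdLabel{\ell}\in\mathcal Y$; and dually for the covariant constructors. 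Well-formedness of the continuations is carried along by (i).

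For completeness ($T\simS U\Rightarrow T\subS U$), take the same $\mathcal R=\{(T,U)\mid\,\vdash T,\ \vdash U,\ T\simS U\}$ and show $\mathcal R\subseteq F(\mathcal R)$; coinduction then gives $\mathcal R\subseteq\nu F={\subS}$. Given such a pair, I exhibit one rule instance with conclusion $(T,U)$, all its $\subS$-premises in $\mathcal R$, and its side conditions holding, by case analysis on the outermost constructor of $T$ and, when $T$ is a sequential composition, on that of its left operand (one level suffices, matching the \textsc{Seq} rules). If $T$ or $U$ is a recursion, or a $\tSkip$-, $\tEnd$- or re-associated sequential composition, apply the corresponding administrative rule: its premise differs from $(T,U)$ only by a rewriting that preserves well-formedness (i) and transitions (ii), so it stays in $\mathcal R$ --- and, crucially, the coinductive principle does not require this rewriting to terminate, so contractivity plays no role in this direction. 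Otherwise $T$ is topped by an action constructor; read off its transitions, invoke the four simulation clauses at $(T,U)$ to pin down $U$'s head form and extract the continuations $\simS$-related in the correct direction (using (iv) to turn a derived $\tSkip\simS S$ into $S\simS\tSkip$ where \textsc{S-InSeq1R}/\textsc{S-OutSeq1R} need it), and conclude with the matching constructor rule.

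I expect the main obstacle to be the soundness direction, and inside it the inversion lemma: making precise and proving (via contractivity) the termination of administrative-rule stripping on a coinductive derivation, and then discharging the full case analysis over the constructor rules --- especially the sequential-composition rules, where the variants \textsc{S-InSeq1L}, \textsc{S-InSeq1R}, \textsc{S-InSeq2}, their output counterparts, and the distributivity rules \textsc{S-ChoiceSeqL/R} must be shown to cover every pairing of head forms of $T$ and $U$ exactly once, with the induced transition correspondence pointing the right way. The auxiliary LTS lemmas and the completeness direction are comparatively routine.
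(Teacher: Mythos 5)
Your proposal is correct and takes essentially the same route as the paper's proof: both directions are coinductive, exhibiting $\{(T,U)\mid\,\vdash T,\ \vdash U,\ T\subS U\}$ as an $\mathcal{XYZW}$-simulation and $\{(T,U)\mid\,\vdash T,\ \vdash U,\ T\simS U\}$ as backward-closed under the syntactic rules, with the reduction to head forms handled exactly as in the paper via the administrative/unraveling rules (finitely many by contractivity, for the soundness direction) and the lemma that unraveling preserves well-formedness, transitions, and both relations. Your explicit auxiliary fact that, for well-formed types, $\tSkip \simS V$, $V \simS \tSkip$ and the absence of transitions coincide is only used implicitly by the paper (in the \textsc{S-InSeq1R}/\textsc{S-OutSeq1R} cases), so stating it separately is a harmless refinement rather than a divergence.
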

\section{A subtyping algorithm}\label{sec:algorithm}
The notion of subtyping we have outlined is undecidable. This follows from the fact that our system, albeit different, contains all the features necessary to reconstruct Padovani's proof of undecidability~\cite{DBLP:journals/toplas/Padovani19}. Using just external choices, sequential composition, the $\tSkip$ type and recursion, one is able to encode simple grammars~\cite{DBLP:conf/focs/KorenjakH66} as context-free session types, in a way that language strings correspond to complete LTS traces of types. By exploiting the covariant width-subtyping in external choices, one can show that subtyping for these types corresponds to language inclusion, which is known to be undecidable for simple languages~\cite{DBLP:journals/tcs/Friedman76}.

Despite the undecidability of our subtyping problem, we are still able to devise
a sound (but necessarily incomplete) algorithm for it. In this section we
present this algorithm, an adaptation of the equivalence algorithm of Almeida et
al.~\cite{DBLP:conf/tacas/AlmeidaMV20}. At its core, it determines the
$\mathcal{XYZW}$-similarity of simple grammars. Its application to context-free
session types is facilitated by a translation function to properly encode types
as grammars. The algorithm may likewise be adapted to other domains. Much like the original, our algorithm can be succinctly described in three distinct phases:
\begin{enumerate}
	\item translate the given types to a simple grammar~\cite{DBLP:conf/focs/KorenjakH66} and two starting words;
	\item prune unreachable symbols from productions;
	\item explore an expansion tree rooted at a node containing the initial words, alternating between expansion and simplification operations until either an empty node is found (decide \True) or all nodes fail to expand (decide \textbf{False}).
\end{enumerate}
\subparagraph*{Phase 1}
The first phase consists of translating the two types to a grammar in \emph{Greibach normal form}  (GNF)~\cite{DBLP:journals/jacm/Greibach65}, i.e., a grammar where all productions have the form $Y\to a \vec{Z}$, and two starting words $(\vec X, \vec Y)$. A word is defined as a sequence of non-terminal symbols. We can check the $\mathcal{XYZW}$-similarity of words in GNF grammars because they naturally induce a labelled transition system, where states are words $\vec{X}$, actions are terminal symbols $a$ and the transition relation is defined as $\transitionGNF{\prods}{X\vec{Y}}{a}{\vec{Z}\vec{Y}}$ when $X \to a \vec{Z} \in \prods$. We denote the bisimilarity and $\mathcal{XYZW}$-similarity of grammars by, respectively, ${\sim_\prods}$ and ${\xyzwsimP}$, where $\mathcal{P}$ is the set of productions. We also let ${\simS_\prods}$ denote grammar $\mathcal{XYZW}$-similarity with label sets as in \cref{def:hocfst-sim}. The deterministic nature of context-free session types allows their corresponding grammars to be simple~\cite{DBLP:conf/focs/KorenjakH66}: for each non-terminal $Y$ and terminal symbol $a$, we have at most one production of the form $Y \to a \vec{Z}$.

The grammar translation procedure $\grm$ remains unchanged from the original equivalence algorithm~\cite{DBLP:conf/tacas/AlmeidaMV20}, and for this reason we omit its details (which include generating productions for all $\mu$-subterms in types). However, this procedure relies on two auxiliary definitions which must be adapted: the $\unravel$ function (\cref{def:unr}), which normalizes the head of session types and unravels recursive types until reaching a type constructor, and the $\word$ procedure (\cref{def:word}), which builds a word from a session type while updating a set $\prods$ of productions. 
\begin{definition}
	\label{def:unr}
	The \emph{unraveling} of a type $T$ is defined by induction on the structure of $T$:
	\begin{align*}
		\begin{aligned}[c]
			\unravel(\tRec{x}{T}) &= \unravel(\subst{x}{\tRec{x}{T}}{T})\\
			\unravel(\tSeq{\tEnd}{S}) &= \tEnd\\
			\unravel(\tSeq{\tChoice{\ell}{S_\ell}{L}}{R}) &= \tChoice{\ell}{S_\ell;R}{L}\\
		\end{aligned}
		\quad 
		\begin{aligned}[c]
			\unravel(\tSeq{\tSkip}{S}) &= \unravel(S)\\
			\unravel(\tSeq{(\tRec{s}{S})}{R}) &= \unravel(\tSeq{(\subst{s}{\tRec{s}{S}}{S})}{R})\\
			\unravel(\tSeq{(\tSeq{S_1}{S_2})}{S_3}) &= \unravel(\tSeq{S_1}{(\tSeq{S_2}{S_3})})
		\end{aligned}
	\end{align*}
	and in all other cases by $\unravel(T)=T$.
\end{definition}
\begin{definition}
	\label{def:word}
	The word corresponding to a well-formed type $T$, $\word(T)$, is built by descending on the structure of $T$ while updating a set $\prods$ of productions:
	\begin{align*}
		\word(\tUnit) =& \,Y\text{, setting}\,\, \prods \coloneqq \prods\cup\{Y \to \tUnit\}\\
		\word(\tArrow{U}{{\lin}}{V}) =& \,Y \text{, setting}\,\,\prods \coloneqq \prods\cup\{Y\to {\lArrowDomain} \word(U),Y\to {\lArrowRange} \word(V), Y\to {\lArrowLinear}\} \\
		\word(\tArrow{U}{\mathbf{*}}{V}) =& \,Y \text{, setting}\,\,\prods \coloneqq \prods\cup\{Y\to {\lArrowDomain} \word(U),Y\to {\lArrowRange} \word(V) \} \\
		\word(\tRcdVrt{\ell}{T_\ell}{L}) =& \,Y \text{, setting}\,\,\prods \coloneqq \prods\cup\{Y \to \lRcdVrtDefault \bot\}\cup \{Y \to \llparenthesis\rrparenthesis_k \word(T_k) \svert k \in L\}\\
		\word(\mathtt{\tSkip}) =& \, \varepsilon\\
		\word(\mathtt{\tEnd}) =& \, Y \text{, setting}\,\,\prods \coloneqq \prods\cup\{Y \to \tEnd \bot\}\\
		\word(\inout U) =& \, Y \, \text{, setting}\,\,\prods \coloneqq \prods\cup\{Y \to \lMsgPayload \word(U) \bot,Y \to \lMsgContinuation\}\\
		\word(\tChoice{\ell}{S_\ell}{L}) =&\, Y \, \text{, setting}\,\,\prods \coloneqq \prods\cup\{Y \to \lChoiceDefault \bot\}\cup \{Y \to \lChoiceLabel{k} \word(S_k) \svert k \in L\}\\
		\word(S_1;S_2) =& \, \word(S_1)\word(S_2)\\
		\word(\tRec{x}{U}) =& \, X
	\end{align*}
	where, in each equation, $Y$ is understood as a fresh non-terminal symbol,
	$X$ as the non-terminal symbol corresponding to type reference $x$, and
	$\bot$ as a non-terminal symbol without productions.
\end{definition}
\begin{example}\label{ex:grammar}
	Consider again the types for tree serialization in \cref{sec:introduction}. Suppose we want to know whether $\tArrow{\mathsf{SFullTree0}}{\un}{\tUnit} \simS \tArrow{\mathsf{STree}}{\lin}{\tUnit}$. We know that the grammar generated for these types is as follows, with $X_0$ and $Y_0$ as their starting words. 
	\begin{align*}
		\begin{aligned}[t]
			X_0 &\to \lArrowDomain X_1 \\
			X_0 &\to \lArrowRange X_5 \\ 
			X_1 &\to \lIntChoiceLabel{\mathsf{Node}} X_2 X_3 X_2\\
			X_1 &\to \lIntChoiceDefault \bot
		\end{aligned}
		\quad
		\begin{aligned}[t]
			X_2 &\to \lIntChoiceLabel{\mathsf{Empty}}\\
			X_2 &\to \lIntChoiceDefault \bot\\
			X_3 &\to \lOutPayload X_4 \bot \\
			X_3 &\to \lOutContinuation 
		\end{aligned}
		\quad 
		\begin{aligned}[t]
			X_4 &\to \tInt \\
			X_5 &\to \tUnit \\
		\end{aligned}
		\quad 
		\begin{aligned}[t]
			Y_0 &\to \lArrowDomain Y_1\\
			Y_0 &\to \lArrowRange X_5\\ 
			Y_0 &\to \lArrowLinear\\
		\end{aligned}
		\quad 
		\begin{aligned}[t]
			Y_1 &\to \lIntChoiceDefault \bot \\
			Y_1 &\to \lIntChoiceLabel{\mathsf{Empty}}\\
			Y_1 &\to \lIntChoiceLabel{\mathsf{Node}}Y_1 X_3 Y_1
		\end{aligned}
	\end{align*}
\end{example}
For the rest of this section let $\wellFormed{}{T}$, $\wellFormed{}{U}$,
$(\vec{X}_T,\prods') = \grm(T,\emptyset)$ and
$(\vec{X}_U,\prods)=\grm(U,\prods')$.
\begin{restatable}[Soundness for grammars]{theorem}{sfg}
	\label{thm:soundness-grammars}
	If $\vec{X}_T \simS_\prods \vec{X}_U$, then $T \simS U$.
\end{restatable}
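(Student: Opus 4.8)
The plan is to show that the grammar translation $\grm$ preserves and reflects the LTS structure tightly enough that $\mathcal{XYZW}$-similarity of the starting words transfers back to $\mathcal{XYZW}$-similarity of the original types, and then invoke Theorem \ref{thm:soundness-completeness-stx-sem} is not even needed here — $\simS$ on types is defined directly as $\xyzwsim$ with the fixed label sets of \cref{def:hocfst-sim}, and $\simS_\prods$ is grammar $\mathcal{XYZW}$-similarity with the same label sets, so it suffices to relate the two similarities. The key technical device I would establish first is a \emph{correspondence lemma} between the type LTS of \cref{fig:hocfst-lts} and the grammar LTS induced by $\prods$. Concretely, I would prove that for every session (sub)type $S$ occurring in $T$ or $U$, and every continuation word $\vec{W}$, the word $\word(S)\vec{W}$ simulates the behaviour of the type obtained by ``appending'' that continuation, in the sense that $\transition{S'}{a}{S''}$ in the type LTS (where $S'$ is the type corresponding to $\word(S)\vec{W}$) iff $\transitionGNF{\prods}{\word(S)\vec{W}}{a}{\vec{Z}}$ with $\vec{Z}$ the word corresponding to $S''$. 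This is essentially the invariant already used in the equivalence algorithm of Almeida et al.~\cite{DBLP:conf/tacas/AlmeidaMV20}; since $\grm$, $\unravel$ and $\word$ are only lightly adapted, I expect this lemma to go through by induction on the structure of the type, with the $\unravel$ equations handling exactly the cases (\textsc{L-Rec}, \textsc{L-SkipSeq}, \textsc{L-SeqSeq}, \textsc{L-EndSeq}, \textsc{L-ChoiceSeq}) where the type LTS does internal rewriting before emitting a label.

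Granting the correspondence lemma, the main argument is short. Suppose $\vec{X}_T \simS_\prods \vec{X}_U$, i.e.\ there is an $\mathcal{XYZW}$-simulation $\mathcal{S}$ on words of $\prods$ (with the label sets of \cref{def:hocfst-sim}) containing $(\vec{X}_T,\vec{X}_U)$. I would then define a type relation
\[
  \mathcal{R} = \{\,(T',U') \mid (\word\text{-image of }T',\ \word\text{-image of }U') \in \mathcal{S}\,\}
\]
(made precise via the correspondence, so that $T'$ ranges over the types reachable from $T$ and $U'$ over those reachable from $U$), and check that $\mathcal{R}$ is an $\mathcal{XYZW}$-simulation on types. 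Each of the four clauses of \cref{def:xyzw-sim} for $\mathcal{R}$ follows by a diagram chase: a type transition $\transition{T'}{a}{T''}$ is, by the correspondence lemma, matched by a grammar transition on the corresponding word; the simulation $\mathcal{S}$ supplies a matching grammar transition on the other word (in the same or opposite direction according to whether $a\in\mathcal X,\mathcal Y,\mathcal Z,\mathcal W$); and the correspondence lemma, read the other way, turns that back into a type transition landing in $\mathcal{R}$. Since $(\vec X_T,\vec X_U)\in\mathcal S$ we get $(T,U)\in\mathcal R$, hence $T\xyzwsim U$ with the appropriate label sets, i.e.\ $T\simS U$.

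The main obstacle I anticipate is bookkeeping around the $\word$ translation rather than any deep difficulty: $\word$ is not injective on types (it introduces fresh non-terminals, handles $\tSkip\mapsto\varepsilon$, and concatenates on sequential composition), so ``the word corresponding to a type'' and ``the type corresponding to a word'' must be set up carefully — most cleanly by tracking, along every grammar-reachable word $\vec Z$ from $\vec X_T$ or $\vec X_U$, a canonical session type $\subT(\vec Z)$ (a $\unravel$-normal form) and proving the correspondence with respect to $\subT$. A secondary subtlety is that $\word$ is defined via $\grm$ only on well-formed, closed types, so I must confirm the set of grammar-reachable words stays within the image of $\subT$ on reachable types; contractivity (well-formedness) is exactly what guarantees $\unravel$ terminates and that no ``$\bot$ with no productions'' word is ever required to transition. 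Finally, note the theorem is only soundness, not completeness: I do not need the converse, so I need not worry about the redundant-production issue mentioned after \cref{def:word} — extra productions could in principle only add grammar behaviour, and the direction I am proving ($\simS_\prods \Rightarrow \simS$) is monotone in the wrong... actually here I would simply observe that the correspondence lemma is an exact bijection on transitions for the reachable fragment, so redundancy outside the reachable fragment is irrelevant.
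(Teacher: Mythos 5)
Your approach is essentially the paper's: the paper gives no separate detailed proof of this theorem, instead relying on the transition-level correspondence between the type LTS and the grammar LTS inherited (and lightly adapted) from the equivalence algorithm of Almeida et al., which is exactly the correspondence lemma you propose, followed by the same transfer of the $\mathcal{XYZW}$-simulation from the starting words back to the types via a relation like your $\mathcal{R}$. The bookkeeping points you flag ($\tSkip$ versus $\varepsilon$, dead $\bot$-suffixed words, non-injectivity of $\word$) are the right ones and are resolved precisely by the canonical-type/reachable-word correspondence you sketch.
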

\subparagraph*{Phase 2} The grammars generated by procedure $\grm$ may contain
unreachable words, which can be ignored by the algorithm. Intuitively, these
words correspond to communication actions that cannot be fulfilled, such as the part $\tIn{\tBool}$ in type $\tSeq{(\tRec{s}{\tSeq{\tOut{\tInt}}{s}})}{\tIn{\tBool}}$. Formally, these words appear in productions following what are known as \textit{unnormed words}.
\begin{definition} 
	Let $\vec a$ be a non-empty sequence of non-terminal symbols $a_1,\ldots,a_n$. Write $\transitionGNF{\prods}{\vec Y}{\vec a}{\vec Z}$ when $\transitionGNF{\prods}{\vec Y}{a_1}{\transitionGNF{\prods}{\ldots}{a_n}{\vec Z}}$. We say that a word $\vec Y$ is \textit{normed} if $\transitionGNF{\prods}{\vec Y}{\vec a}{\varepsilon}$ for some $\vec a$, and \textit{unnormed} otherwise. If $\vec Y$ is normed and $\vec a$ is the shortest path such that $\transitionGNF{\prods}{\vec Y}{\vec a}{\varepsilon}$, then $\vec a$ is called the \textit{minimal path} of $\vec Y$, and its length is the \textit{norm} of $\vec Y$, denoted $|\vec Y|$.
\end{definition}
It is known that any unnormed word $\vec Y$ is bisimilar to its concatenation with any other word, i.e., if $\vec Y$ is unnormed, then $\vec Y \sim_\prods \vec Y \vec X$. It is also easy to show that ${\sim_\prods} \subseteq {\simS_\prods}$, and hence that $\vec Y \simS_\prods \vec Y \vec X$. In this case, $\vec X$ is said to be unreachable and can be safely removed from the grammar. We call the procedure of removing all unreachable symbols from a grammar \textit{pruning}, and denote the pruned version of a grammar $\prods$ by $\prune(\prods)$.

\begin{restatable}[Pruning preserves $\mathcal{XYZW}$-similarity]{lemma}{pruningPreservesXYZWSimilarity}
	\label{thm:pruning-preserves-xyzw}
	$\vec X \xyzwsim_\prods \vec Y$ iff $\vec X \xyzwsim_{\prune(\prods)} \vec Y$
\end{restatable}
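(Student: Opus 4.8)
The plan is to establish the two directions separately, using the fact that pruning only removes non-terminals (and the productions mentioning them) that can never be "activated" — that is, symbols that sit strictly beyond a non-terminal that is itself productive only via actions consumed earlier, or symbols occurring after a non-normed prefix. The crucial observation is that for a word $\vec X = X_1 X_2 \cdots X_n$, the transition behaviour $\transitionGNF{\prods}{X_1\cdots X_n}{a}{\vec Z X_2 \cdots X_n}$ depends only on the productions of $X_1$; and once we reach a tail of the word guarded by a pruned (unreachable) symbol, no transition ever exposes the pruned suffix. So the LTS reachable from a starting word is identical whether we use $\prods$ or $\prune(\prods)$, modulo the names of unreachable states. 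First I would make this precise: define the set of \emph{reachable words} from $\vec X$ under $\prods$, and show by induction on the length of transition sequences that (i) every word reachable from $\vec X$ under $\prods$ contains no pruned symbol in any position that affects future transitions — more strongly, that pruning the grammar and then pruning each reachable word (deleting the unreachable tail) yields exactly the words reachable under $\prune(\prods)$; and (ii) the transition relation is preserved and reflected under this correspondence. The pruning procedure's defining property (from \cref{sec:pruning}) — that a symbol is pruned exactly when it cannot contribute an action to any computation from a start word — is what I would invoke here.

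With that infrastructure in place, the left-to-right direction follows by showing that any $\mathcal{XYZW}$-simulation $\mathcal{R}$ on words-under-$\prods$ restricts to one on words-under-$\prune(\prods)$: since the reachable transitions agree, and $\mathcal R$ relates only reachable words (or can be assumed to, by intersecting with the reachable set — which is again a simulation by \cref{thm:xyzw-preorder}-style closure arguments), the four simulation clauses transfer verbatim because each clause quantifies only over transitions, which are unchanged. For right-to-left, I would take a $\mathcal{XYZW}$-simulation $\mathcal{S}$ on words-under-$\prune(\prods)$ and define $\mathcal{R} = \{(\vec V \vec V', \vec W \vec W') : \vec V \mathrel{\mathcal S} \vec W,\ \vec V', \vec W'\ \text{pruned tails}\}$, i.e. re-attach arbitrary unreachable suffixes; one checks $\mathcal R$ is a $\mathcal{XYZW}$-simulation under $\prods$ by again noting that transitions from $\vec V \vec V'$ under $\prods$ are in bijection with transitions from $\vec V$ under $\prune(\prods)$ (the suffix $\vec V'$ never participates), so each clause reduces to the corresponding clause for $\mathcal S$. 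Since $(\vec X, \vec Y)$ themselves are start words and hence contain only reachable symbols, $\vec X \mathrel{\mathcal R} \vec Y$ in one direction and $\vec X \mathrel{\mathcal S} \vec Y$ in the other, giving the biconditional.

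The main obstacle I anticipate is the bookkeeping in the word/suffix correspondence: a word is a sequence of non-terminals, and after a transition $X_1 \cdots X_n \to \vec Z X_2 \cdots X_n$ the newly introduced $\vec Z$ may itself contain symbols that are "reachable" in the grammar sense but whose own tails become unreachable only relative to this particular word — so "pruned tail of a word" must be defined carefully and shown stable under transitions. Concretely, the delicate point is that prunedness of a \emph{grammar symbol} and irrelevance of a \emph{word suffix} are not the same notion, and the lemma needs the former to imply the latter along all computations; I would isolate this as an auxiliary invariant ("every reachable word factors as $\vec V \vec V'$ with $\vec V$ containing only unpruned symbols and $\vec V'$ never exposed") and prove it by induction on transition length before touching the simulation clauses, which are then routine. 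A secondary, minor obstacle is handling the degenerate cases $\vec V = \varepsilon$ (a word whose leading symbol is pruned, hence terminated) and ensuring the empty word relates only to the empty word — but this is immediate from the absence of productions for pruned symbols and the definition of the induced LTS.
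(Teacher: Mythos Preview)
Your plan is correct and follows the same underlying strategy as the paper: show that an $\mathcal{XYZW}$-simulation over $\prods$ induces one over $\prune(\prods)$ and vice versa, exploiting that suffixes sitting behind an unnormed symbol never influence the transition behaviour. The paper's proof is extremely terse --- for the forward direction it simply asserts that the very same relation is already an $\mathcal{XYZW}$-simulation over $\prune(\prods)$, and for the backward direction it gives the explicit extension $\mathcal{R} = \mathcal{R}' \cup \{(\vec V W,\ \vec V W \vec Z)\mid (W \to \vec V W \vec Z)\in\prods,\ W\ \text{unnormed}\}$ without further argument. Your plan unpacks exactly the bookkeeping the paper elides: you isolate the ``reachable words factor as unpruned-prefix plus never-exposed tail'' invariant as a separate inductive lemma before touching the four simulation clauses, and your backward construction (re-attaching arbitrary pruned tails on \emph{both} coordinates) is slightly more general than the paper's one-sided prefix pairs, which makes the closure check under transitions more obviously go through. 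In short, the route is the same; you are just doing the work the paper leaves to the reader.
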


\subparagraph*{Phase 3} In its third and final phase, the algorithm explores an
\emph{expansion tree}, alternating between expansion and simplification steps.
An expansion tree is a tree whose nodes are sets of pairs of words, whose root
is the singleton set containing the pair of starting words under test, and where
every child is an \emph{expansion} of its parent. A branch is deemed
\textit{successful} if it is infinite or has an empty leaf, and deemed
\textit{unsuccessful} otherwise. The original definition of expansion ensures
that the union of all nodes along a successful branch (without simplifications) constitutes a bisimulation~\cite{DBLP:conf/concur/JancarM99}. We adapt this definition to ensure that such
a union yields an $\mathcal{XYZW}$-simulation instead.
\begin{definition}
	\label{def:xyzw-expansion}
	The \emph{$\mathcal{XYZW}$-expansion} of a node $N$ is defined as the minimal set $N'$ such that, for every pair $(\vec{X},\vec{Y})$ in $N$, it holds that:
	\begin{enumerate}
		\item if   $\vec{X}\to a\vec{X'}$ and $a \in \mathcal{X}$
		then $\vec{Y}\to a \vec{Y'}$ 
		with $(\vec{X'},\vec{Y'}) \in N'$
		\item if   $\vec{Y}\to a\vec{Y'}$ and $a \in \mathcal{Y}$
		then $\vec{X}\to a \vec{X'}$ 
		with $(\vec{X'},\vec{Y'}) \in N'$
		\item if   $\vec{X}\to a\vec{X'}$ and $a \in \mathcal{Z}$
		then $\vec{Y}\to a \vec{Y'}$ 
		with $(\vec{Y'},\vec{X'}) \in N'$
		\item if   $\vec{Y}\to a\vec{Y'}$ and $a \in \mathcal{W}$
		then $\vec{X}\to a \vec{X'}$ 
		with $(\vec{Y'},\vec{X'}) \in N'$
	\end{enumerate}
\end{definition}

\begin{restatable}[Safeness property for $\mathcal{XYZW}$-simulation]{lemma}{sp}
	\label{lemma:safeness}
	Given a set of productions $\prods$, $\vec X \xyzwsim_\prods \vec Y$ iff the expansion tree rooted at $\{( \vec X, \vec Y )\}$ has a successful branch.
\end{restatable}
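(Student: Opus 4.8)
The plan is to prove the two directions separately, mirroring the classical argument for bisimilarity and expansion trees \cite{DBLP:conf/concur/JancarM99}, but adapted to the four-clause structure of $\mathcal{XYZW}$-simulation. Throughout, I would work with the labelled transition system induced by $\prods$, where $\transitionGNF{\prods}{X\vec Y}{a}{\vec Z\vec Y}$ whenever $X\to a\vec Z\in\prods$.

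\textbf{From a successful branch to $\mathcal{XYZW}$-similarity.} Suppose the expansion tree rooted at $\{(\vec X,\vec Y)\}$ has a successful branch $N_0=\{(\vec X,\vec Y)\},N_1,N_2,\dots$ (either infinite or ending in an empty leaf). Let $\mathcal R=\bigcup_i N_i$, viewed as a relation on words. I claim $\mathcal R$ is an $\mathcal{XYZW}$-simulation, from which $\vec X\xyzwsim_\prods\vec Y$ follows since $(\vec X,\vec Y)\in N_0\subseteq\mathcal R$. Take any $(\vec X',\vec Y')\in\mathcal R$, so $(\vec X',\vec Y')\in N_i$ for some $i$. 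If the branch is finite, $N_i$ is not the empty leaf (the pair lies in it), so $N_i$ has a child $N_{i+1}$ which is its $\mathcal{XYZW}$-expansion; if the branch is infinite every $N_i$ has a successor $N_{i+1}$ that is its expansion. In either case, clauses 1--4 of \cref{def:xyzw-expansion} applied to $(\vec X',\vec Y')\in N_i$ give: for $a\in\mathcal X$ and $\vec X'\to a\vec X''$, a matching $\vec Y'\to a\vec Y''$ with $(\vec X'',\vec Y'')\in N_{i+1}\subseteq\mathcal R$, and symmetrically for the other three clauses (with the pair recorded in the inverted order $(\vec Y'',\vec X'')$ for $\mathcal Z,\mathcal W$). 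This is exactly the defining condition of an $\mathcal{XYZW}$-simulation, so $\mathcal R\subseteq{\xyzwsim_\prods}$.

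\textbf{From $\mathcal{XYZW}$-similarity to a successful branch.} Conversely, suppose $\vec X\xyzwsim_\prods\vec Y$. I would build a successful branch by maintaining the invariant that each node $N_i$ along the branch satisfies $N_i\subseteq{\xyzwsim_\prods}$. The root $N_0=\{(\vec X,\vec Y)\}$ satisfies this by hypothesis. Given $N_i\subseteq{\xyzwsim_\prods}$, I must produce a child $N_{i+1}$ that is an $\mathcal{XYZW}$-expansion of $N_i$ and still satisfies $N_{i+1}\subseteq{\xyzwsim_\prods}$. Here I use that ${\xyzwsim_\prods}$ is itself an $\mathcal{XYZW}$-simulation (by \cref{def:xyzw-sim}, being the union of all of them): for each $(\vec X',\vec Y')\in N_i$ and each transition/label combination required by clauses 1--4, the simulation property supplies a matching transition whose derivatives are again related; collecting one such matching derivative-pair for every requirement yields a set $N_{i+1}$ which, by construction, is an $\mathcal{XYZW}$-expansion of $N_i$ (minimality can be arranged, or one simply takes *an* expansion — note the tree allows any expansion as a child) and lies inside ${\xyzwsim_\prods}$. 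Iterating this gives an infinite branch (if at every stage some node is nonempty) — which is successful by definition — or, if at some stage the required expansion is the empty set, a finite branch ending in an empty leaf, again successful. Either way the expansion tree has a successful branch.

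\textbf{Main obstacle.} The routine parts are the clause-by-clause bookkeeping; the one genuine subtlety is the interaction between the \emph{minimality} of the expansion in \cref{def:xyzw-expansion} and the freedom of choice when building the branch in the second direction. Since clauses 3 and 4 store derivative pairs in the \emph{swapped} order, one must be careful that the relation $\mathcal R=\bigcup_i N_i$ is the right object to check against \cref{def:xyzw-sim} — in particular that a pair entering $\mathcal R$ via a $\mathcal Z$- or $\mathcal W$-step (hence as a ``reversed'' pair) is still subsequently expanded correctly, which it is, because the expansion definition is applied uniformly to \emph{every} pair in a node regardless of how it got there. I would flag this explicitly and observe that it is precisely the reason $\mathcal X,\mathcal Y$-roles and $\mathcal Z,\mathcal W$-roles must each come in symmetric pairs in \cref{def:hocfst-sim} for the two directions to close cleanly; with that remark in place the argument goes through exactly as in the bisimulation case of \cite{DBLP:conf/concur/JancarM99}.
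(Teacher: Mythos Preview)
Your argument is correct. The two directions are the standard ones for this style of result, and the clause-by-clause verification for the $\mathcal Z,\mathcal W$ cases (with the swapped order of derivatives) goes through exactly as you describe, since \cref{def:xyzw-expansion} treats every pair in a node uniformly.

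Your route is somewhat more direct than what the paper sketches. The paper does not spell out a self-contained proof of this lemma; instead it introduces a stratified version $\xyzwsim_{\prods,n}$ of $\mathcal{XYZW}$-similarity and an intermediate ``parent/child preservation'' lemma (for every $n$, $N\subseteq{\xyzwsim_{\prods,n+1}}$ iff $N$ has a child $C\subseteq{\xyzwsim_{\prods,n}}$), then observes that this implies the safeness property by the same pattern as in Jan\v{c}ar and Moller's bisimulation development. Your proof bypasses stratification entirely: in the forward direction you exhibit $\bigcup_i N_i$ directly as an $\mathcal{XYZW}$-simulation, and in the backward direction you use only that ${\xyzwsim_\prods}$ is itself an $\mathcal{XYZW}$-simulation. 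This is cleaner and more elementary. The paper's detour through stratification is really motivated by the subsequent need to argue that the \emph{simplification} rules (\textsc{Preorder}, \textsc{BPA1}, \textsc{BPA2}) remain safe, where level-by-level reasoning is the natural tool; for the bare expansion tree your direct argument suffices. Two small remarks: since the grammars here are simple (deterministic), the ``minimal'' expansion of \cref{def:xyzw-expansion} is in fact the only one, so your hedge about ``any expansion'' is unnecessary; and your closing comment about $\mathcal X,\mathcal Y$ and $\mathcal Z,\mathcal W$ needing to come in symmetric pairs concerns the particular instantiation in \cref{def:hocfst-sim}, not the general lemma, which holds for arbitrary $\mathcal X,\mathcal Y,\mathcal Z,\mathcal W$.
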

The simplification stage consists of applying rules that safely modify the expansion tree during its construction, in an attempt to keep some branches finite. The rules are iteratively applied to each node until a fixed point is reached, at which point we can proceed with expansion. To each node $N$ we apply three simplification rules, adapted from the equivalence algorithm~\cite{DBLP:conf/tacas/AlmeidaMV20}:
\begin{enumerate} 
	\item\textsc{Reflexivity}: omit pairs of the form $(\vec X, \vec X)$;
	\item\label{rule:cong}\textsc{Preorder}: omit pairs belonging to the least preorder containing the ancestors of $N$;
	\item\textsc{Split}: if $(X_0 \vec X, Y_0 \vec Y) \in N$ and $X_0$ and $Y_0$ are normed, then:
	\begin{itemize}
		\item Case $|X_0| \leq |Y_0|$: Let $\vec a$ be a minimal path for $X_0$ and $\vec Z$ the word such that $\transitionGNF{\prods}{Y_0}{\vec a}{\vec Z}$. Add a sibling node for $N$ including pairs $(X_0 \vec Z, Y_0)$ and $(\vec X, \vec Z \vec Y)$ in place of $(X_0 \vec X, Y_0 \vec Y)$;
		\item Otherwise: Let $\vec a$ be a minimal path for $Y_0$ and $\vec Z$ the word such that $\transitionGNF{\prods}{X_0}{\vec a}{\vec Z}$. Add a sibling node for $N$ including pairs $(X_0, Y_0 \vec Z)$ and $(\vec Z \vec X, \vec Y)$ in place of $(X_0 \vec X, Y_0 \vec Y)$.
	\end{itemize}
\end{enumerate}
When a node is simplified, we keep track of the original node in a sibling, thus ensuring that along the tree we keep an ``expansion-only'' branch.

The algorithm explores the tree by breadth-first search using a queue of node-ancestors pairs, thus avoiding getting stuck in infinite branches, and alternates between expansion and simplification steps until it terminates with $\False$ if all nodes fail to expand or with $\True$ if an empty node is reached. The following pseudo-code illustrates the procedure.
\begin{align*}
	&\subG(\vec X, \vec Y , \prods) = \fun{explore}(\singletonQueue((\{(\vec X, \vec Y )\}, \emptyset), \prods) \\
	&\quad\textbf{where }\fun{explore}(q, \prods) = \\
	&\qquad \textbf{if}\, \emptyf(q) \,\textbf{then}\,\,\False \text{  \textcolor{lipicsLineGray}{\% all nodes failed to expand}}\\
	&\qquad \textbf{else let}\,  (n, a) = \front(q)\textbf{ in}\\
	&\qquad\quad           \textbf{if}\, \emptyf(n) \,\textbf{then} \,\,\True \text{  \textcolor{lipicsLineGray}{\% empty node reached}}\\
	&\qquad\quad          \textbf{else if}\,\, \fun{hasExpansion}(n, \prods) \text{ \textcolor{lipicsLineGray}{\% then expand, simplify and recur}}\\
	&\qquad\qquad\quad\: \textbf{then}\, \fun{explore}(\simplify(\fun{expand}(n, \prods), a \cup n, \dequeue(q)), \prods) \\
	&\qquad\qquad\quad\:  \textbf{else}\, \fun{explore}(\dequeue(q), \prods)  \text{ \textcolor{lipicsLineGray}{\% otherwise, discard node}}
\end{align*}
\begin{example}
	The $\mathcal{XYZW}$-expansion tree for \cref{ex:grammar} is illustrated in \cref{fig:expansion-tree}.
\end{example}
\begin{figure}[t]
	\centering
	\includegraphics[width=0.69\textwidth]{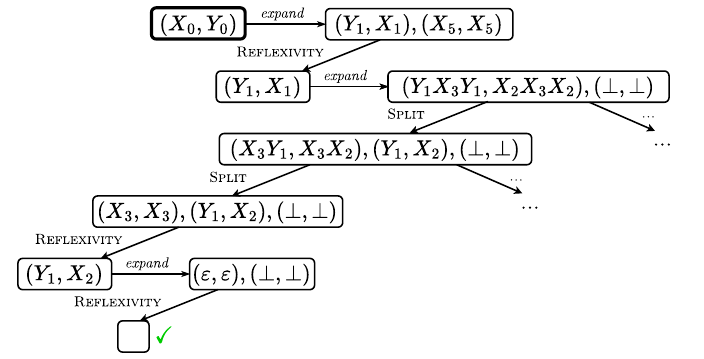}
	\caption{An $\mathcal{XYZW}$-expansion tree for \cref{ex:grammar}, exhibiting a finite successful branch.}
	\label{fig:expansion-tree}
\end{figure}
Finally, function $\subT$ puts all the pieces of the algorithm together: 
\begin{align*}
	&\subT(T,U) = \textbf{let }(\vec X, \prods') = \grm(T,\emptyset), (\vec Y, \prods) = \grm(U, \prods')\textbf{ in }\subG(\vec X, \vec Y, \prune(\prods))
\end{align*}
It receives two well-formed types $T$ and $U$, computes their grammar and respective starting words $\vec X$ and $\vec Y$, prunes the productions of the grammar and, lastly, uses function $\subG$ to determine whether $\vec X \simS_\prods \vec Y$. 

The following result shows that algorithm $\subT$ is sound with respect to semantic subtyping relation on functional and higher-order context-free session types. 

\begin{restatable}[Soundness]{theorem}{soundness}
	\label{thm:soundness-alg}
	If $\subT(T, U)$ returns \True, then $T \simS U$.
\end{restatable}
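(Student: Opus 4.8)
The plan is to chain together the three soundness-style results already established for the individual phases of the algorithm. The statement to prove relates the top-level call $\subG(\vec X_T, \vec X_U, \prune(\prods))$ to the semantic subtyping $T \simS U$, so I would work backwards through the pipeline: from $\subG$ returning \True{} to a successful branch in the expansion tree, from there to $\mathcal{XYZW}$-similarity on the pruned grammar, then undo the pruning, and finally transfer the result from the grammar to the types.

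First I would argue that if $\subG(\vec X_T, \vec X_U, \prune(\prods))$ returns \True, then the $\mathcal{XYZW}$-expansion tree rooted at $\{(\vec X_T, \vec X_U)\}$ over $\prune(\prods)$ has a successful branch. This requires inspecting the pseudo-code of $\fun{explore}$: \True{} is returned only when an empty node is reached, and the breadth-first exploration together with the simplification steps means that the sequence of nodes traversed to reach that empty node forms (or can be completed to) a branch with an empty leaf, hence a successful one. The subtle point here is that the simplification rules — in particular the \textsc{Preorder} rule that replaces \textsc{Congruence} — modify nodes, so I must appeal to the fact (which I would state as a supporting lemma, in the spirit of the analogous results for the equivalence algorithm of Almeida et al.) that the simplification rules are \emph{sound}: whenever the simplified tree has a successful branch, so does the original, unsimplified expansion tree. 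This is exactly where most of the real work lies, and I expect it to be the main obstacle, because it is the only step genuinely specific to $\mathcal{XYZW}$-similarity rather than a black-box reuse: one must check that omitting a pair lying in the least preorder generated by the ancestors is safe, which uses that $\xyzwsim_\prods$ is a preorder (Theorem~\ref{thm:xyzw-preorder}) closed under the relevant grammar operations, and that the \textsc{BPA1}/\textsc{BPA2} rules remain sound in the preorder setting.

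Next, by \cref{lemma:safeness} (the safeness property), the existence of a successful branch in the expansion tree rooted at $\{(\vec X_T, \vec X_U)\}$ over $\prune(\prods)$ is equivalent to $\vec X_T \xyzwsim_{\prune(\prods)} \vec X_U$, where the label sets $\mathcal X, \mathcal Y, \mathcal Z, \mathcal W$ are those of \cref{def:hocfst-sim}; that is, $\vec X_T \simS_{\prune(\prods)} \vec X_U$. Then \cref{thm:pruning-preserves-xyzw} (pruning preserves $\mathcal{XYZW}$-similarity) lets me remove the pruning: $\vec X_T \simS_{\prune(\prods)} \vec X_U$ iff $\vec X_T \simS_\prods \vec X_U$. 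Finally, \cref{thm:soundness-grammars} (soundness for grammars) gives the desired conclusion: $\vec X_T \simS_\prods \vec X_U$ implies $T \simS U$. Composing these implications yields the theorem. Note that only soundness (not completeness) is claimed, which is fortunate: we never need that $\subG$ \emph{terminates} or that it returns \True{} whenever $T \simS U$, so the argument is a clean one-directional chain and the only lemma requiring fresh proof effort is the soundness of the adapted simplification rules sketched above.
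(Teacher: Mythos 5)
Your proposal is correct and follows essentially the same route as the paper: $\subG$ returning \True{} yields a successful branch, the safeness property (\cref{lemma:safeness}) gives $\vec X_T \simS_{\prune(\prods)} \vec X_U$, then \cref{thm:pruning-preserves-xyzw} and \cref{thm:soundness-grammars} complete the chain, exactly as in the paper's proof via its intermediate lemma. Your observation that the real work is showing the simplification rules (notably the \textsc{Preorder} rule) safe with respect to $\mathcal{XYZW}$-similarity matches where the paper places that effort, namely in the appendix supporting \cref{lemma:safeness}.
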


\section{Evaluation}\label{sec:evaluation}

We have implemented our subtyping algorithm in Haskell and integrated it in the freely available compiler for FreeST, a statically typed functional programming language featuring message-passing channels governed by context-free session types~\cite{DBLP:journals/iandc/AlmeidaMTV22,DBLP:journals/corr/abs-1904-01284,almeida_mordido_vasconcelos_2019}. The FreeST compiler features a running implementation of the type equivalence algorithm of Almeida et al.~\cite{DBLP:conf/tacas/AlmeidaMV20}. With our contributions, FreeST effectively gains support for subtyping at little to no cost in performance. In this section we present an empirical study to support this claim.

We employed three test suites to evaluate the performance of our algorithm: a suite of handwritten pairs of types, a suite of randomly generated pairs of types, and a suite of handwritten FreeST programs. We focus on the last two, since they allow a more robust and realistic analysis. All data was collected on a machine featuring an Intel Core i5-6300U at 2.4GHz with 16GB of RAM.

To build our randomly generated suite we employed a type generation module, implemented using the Quickcheck library~\cite{DBLP:conf/icfp/ClaessenH00} and following an algorithm induced from the properties of subtyping, much like the one induced by Almeida et al.~\cite{DBLP:conf/tacas/AlmeidaMV20} from the properties of bisimilarity. It includes generators for valid and invalid subtyping pairs. We conducted our evaluation by taking the running time of the algorithm on 2000 valid pairs and 2000 invalid pairs, ranging from 2 to 730 total AST nodes, with a timeout of 30s (ensuring it terminates with either \textbf{True}, \textbf{False} or \textbf{Unknown}). The results are plotted in \cref{fig:performance}. Despite the incompleteness of the algorithm, we encountered no false negatives, but obtained 188 timeouts. We found, as expected, that the running time increases considerably with the number of nodes. When a result was produced, valid pairs took generally longer.

\begin{figure}[t]
	\centering
	\subfloat[Performance on valid and invalid subtyping pairs\label{fig:performance}]{%
		\includegraphics[width=0.45\textwidth]{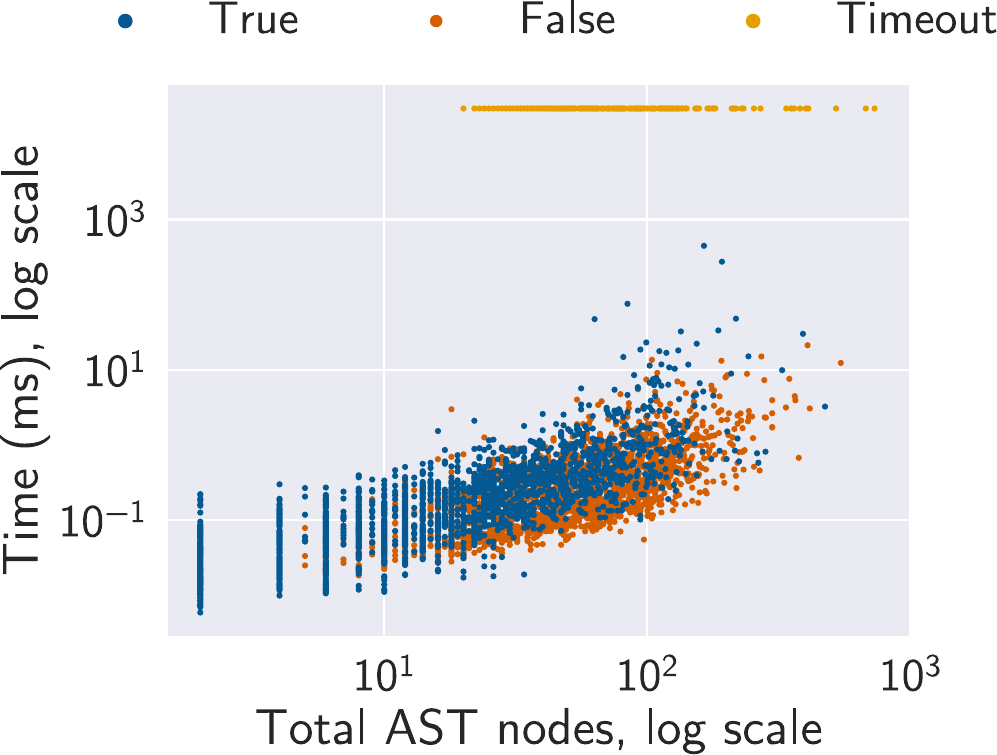}%
	}\hfil
	\subfloat[Performance comparison against the original equivalence algorithm\label{fig:comparison}]{%
		\includegraphics[width=0.425\textwidth]{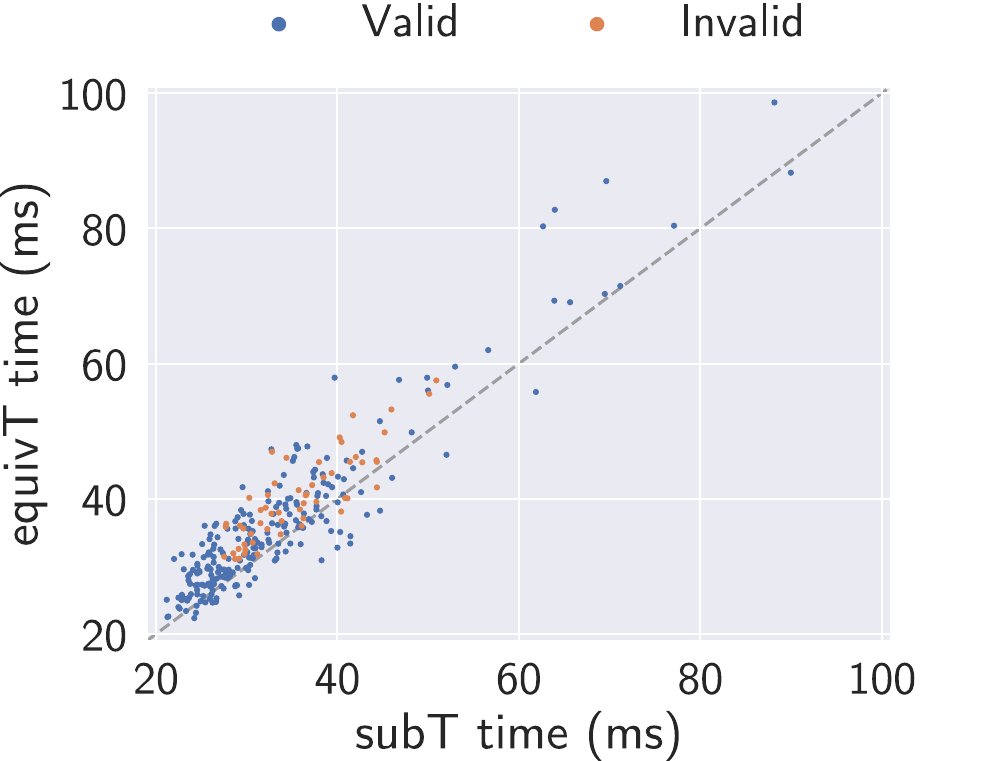}%
	}
	
	\caption{Performance evaluation and comparison.}
	\label{fig:image2}
	
\end{figure}

Randomly generated types allow for a robust analysis, but they typically do not reflect the types encountered by a subtyping algorithm in its most obvious practical application, a~compiler. For this reason, we turn our attention to our suite of FreeST programs, comprised of 286 valid and invalid programs collected throughout the development of the FreeST language. Programs range from small examples demonstrating particular features of the language to concurrent applications simulating, for example, an FTP server. 

We began by integrating the algorithm in the FreeST compiler, placing next to every call to the original algorithm~\cite{DBLP:conf/tacas/AlmeidaMV20} (henceforth $\mathsf{equivT}$) a call to $\mathsf{subT}$ on the same pairs of types. We then ran each program in our suite 10 times, collecting and averaging the accumulated running time of both algorithms on the same pairs of types. We then took the difference between the average accumulated running times of $\mathsf{subT}$ and $\mathsf{equivT}$, obtaining an average difference of -3.85ms, with a standard deviation of 7.08ms, a minimum difference of -71.29ms and a maximum difference of 8.03ms ($\mathsf{subT}$ performed faster, on average). \cref{fig:comparison} illustrates this comparison by plotting against each other the accumulated running times (for clarity, those in the 20-100ms range) of both algorithms  during the typechecking phase of each.

The data collected in this evaluation suggests that replacing the original equivalence algorithm~\cite{DBLP:conf/tacas/AlmeidaMV20} with the subtyping algorithm in the FreeST typechecker generally does not incur an overhead, while providing additional expressive power for programmers.

\section{Related work}\label{sec:related-work}
Session types emerged as a formalism to express communication protocols and
statically verify their implementations~\cite{DBLP:conf/concur/Honda93,DBLP:conf/esop/HondaVK98}.  Initial formulations
allowed only pairwise, tail-recursive protocols, earning such types the `binary' and `regular' epithets. Since then, considerable efforts have been made to extend the theory of session types beyond the binary and regular realms: multiparty session types allow sessions with multiple participants~\cite{DBLP:conf/popl/HondaYC08}, while context-free session types~\cite{DBLP:conf/icfp/ThiemannV16} and nested session types~\cite{DBLP:conf/esop/DasDMP21} allow non-regular communication patterns. Our work is centered on context-free session types, which have seen considerable development since their introduction, most notably their integration in System F~\cite{DBLP:journals/iandc/AlmeidaMTV22, DBLP:conf/esop/PocasCMV23}, an higher-order formulation~\cite{DBLP:journals/corr/abs-2203-12877}, as well as proposals for kind and type inference~\cite{DBLP:journals/corr/abs-2304-06396,DBLP:journals/toplas/Padovani19}.

Subtyping is a standard feature of many type systems, and the literature on the topic is vast~\cite{Amadio93toplas,Brandt98fi,Castagna05ppdp,Castagna14popl,Danielsson10mpc,Dolan17phd,lakhani2022polarized}. Its conventional interpretation, based on the notion of substitutability, originates from the work of Liskov~\cite{DBLP:conf/oopsla/Liskov87}. Multiple approaches to subtyping for regular session types have been proposed, and they can be classified according to the objects they consider substitutable: channels \textit{versus} processes (the difference being most notable in the variance of type constructors).
The earliest approach, subscribing to the substitutability of channels, is that of Gay and Hole~\cite{DBLP:journals/acta/GayH05}. It is also the one we follow. A later formulation, proposed by Carbone et al.~\cite{DBLP:conf/esop/CarboneHY07}, subscribes to the substitutability of processes. A survey of both interpretations is given by Gay~\cite{DBLP:conf/birthday/Gay16}. The interaction between subtyping and polymorphism for regular session types, in the form of bounded quantification, has been investigated by Gay~\cite{DBLP:journals/mscs/Gay08}.  Horne and Padovani study subtyping under the linear logic interpretation of regular session types~\cite{DBLP:journals/corr/abs-2304-06398}, showing that it preserves termination of processes.

Subtyping for session types has spread beyond the regular realm. Das et al.~\cite{DBLP:conf/esop/DasDMP21} introduce subtyping for nested session types, show the problem to be undecidable and present a sound but incomplete algorithm. In the context-free setting, the first and, to the best of our knowledge, only formulation before our work is that of Padovani~\cite{DBLP:journals/toplas/Padovani19}. It proposes a simulation-based subtyping relation, proves the undecidability of the subtyping problem and provides a sound but incomplete algorithm. This undecidability proof also applies to our system, as it possesses all the required elements: width-subtyping on choices, sequential composition and recursion. The subtyping relation proposed by Padovani contemplates neither input/output subtyping nor functional subtyping. Furthermore, its implementation relies on the subtyping features of OCaml, the implementation language. In contrast, we propose a more expressive relation, featuring input/output subtyping, as well as functional subtyping. Furthermore, we provide an also sound algorithm that is independent of the implementation language.

Our subtyping relation is based on a novel form of observational preorder, $\mathcal{XYZW}$-simulation. There is, as far as we know, no analogue in the literature. It is a generalization of $\mathcal{XY}$-simulation, introduced by Aarts and Vaandrager in the context of learning automata~\cite{DBLP:conf/concur/AartsV10} but already known, under slightly different forms, as modal refinement~\cite{DBLP:conf/lics/LarsenT88}, alternating simulation~\cite{DBLP:conf/concur/AlurHKV98} and covariant-contravariant simulation~\cite{DBLP:conf/calco/FabregasFP09}. The contravariance on the derivatives introduced by $\mathcal{XYZW}$-simulation is also prefigured in contrasimulation~\cite{C:book/Sangiorgi11,DBLP:conf/concur/Glabbeek93}, but the former uses strong transitions whereas the latter uses weak ones. There is a vast literature on other observational relations, to which Sangiorgi's book provides an overview~\cite{C:book/Sangiorgi11}. 

Our algorithm decides the $\mathcal{XYZW}$-similarity of simple grammars~\cite{DBLP:conf/focs/KorenjakH66}. It is an adaptation of the bisimilarity algorithm for simple grammars of Almeida et al.~\cite{DBLP:conf/tacas/AlmeidaMV20}. To our knowledge, these are the only running algorithms of their sort. 
Henry and S{\'e}nizergues~\cite{10.1007/978-3-642-39274-0_16} proposed an algorithm to decide the language equivalence problem on deterministic pushdown automata.
On the related topic of basic process algebra (BPA), BPA processes have been shown to be equivalent to grammars in GNF~\cite{DBLP:journals/jacm/BaetenBK93}, of which simple grammars are a particular case. This makes results and algorithms for BPA processes applicable to grammars in GNF, and \textit{vice-versa}. A bisimilarity algorithm for general BPA processes, of doubly-exponential complexity, has been proposed by Burkart et al.~\cite{DBLP:conf/mfcs/BurkartCS95}, while an analogous polynomial-time algorithm for the special case of normed BPA processes has been proposed by Hirschfield et al.~\cite{DBLP:journals/tcs/HirshfeldJM96}.

\section{Conclusion and future work}\label{sec:conclusion}
We have proposed an intuitive notion of subtyping for context-free session types, based on a novel form of observational preorder, $\mathcal{XYZW}$-simulation. This preorder inverts the direction of the simulation in the derivatives covered by its $\mathcal{W}$ and $\mathcal{Z}$ parameters, allowing it to handle co/contravariant features of input/output types. We take advantage of the fact that $\mathcal{XYZW}$-simulation generalizes bisimulation to derive a sound subtyping algorithm from an existing type equivalence algorithm. 

Despite its unavoidable incompleteness, stemming from the undecidability of our notion of subtyping, our algorithm has not yielded any false negatives. Thus, we conjecture that is partially correct: it may not halt, but, when it does, the answer is correct. We cannot, however, back this claim without a careful analysis of completeness and termination, which we leave for future work. We believe such an analysis will advance the understanding of the subtyping problem by clarifying the practical reasons for its undecidability.

As shown by Thiemann and Vasconcelos~\cite{DBLP:conf/icfp/ThiemannV16}, support for polymorphism and polymorphic recursion is paramount in practical applications of context-free session types. Exploring the interaction between polymorphism and subtyping in the context-free setting, possibly in the form of \textit{bounded quantification}, is therefore another avenue for future work.


\bibliography{main}

\newpage 

\appendix

\section{Preliminaries}
\subsection{Type formation}
\label{subsec:type-formation}
Here we introduce the rules for type formation, which ensure types are closed (no free references) and contractive. For convenience, type formation also ensures that types under a $\mu$ binder are not equivalent to $\tSkip$ and that all type references introduced by such binders are pairwise distinct. 

The type formation judgement $\wellFormed{\Delta}{T}$ (``$T$ is well-formed under context $\Delta$'') is defined by the rules in \cref{fig:isterminated} (we understand the notation $\Delta,x$ as requiring $x \notin \Delta$). These rules depend on the contractivity judgement $\contr{T}{x}$ (``$T$ is contractive on reference $x$''), defined by the rules also in \cref{fig:isterminated}. The rules for contractivity depend, in turn, on judgment $\terminated{T}$ (``$T$ is terminated'') to characterize types that exhibit no communication action. The rules for this judgment can also be found in \cref{fig:isterminated}.
\begin{figure}[t]
	\text{Type formation (\emph{inductive})}\hfill\fbox{$\Delta \vdash T$}
	\begin{mathpar}
		\inferrule[TF-Axiom]
		{T=\tUnit,\tEnd,\tSkip}
		{\wellFormed{\Delta}{T}} 
		
		\inferrule[TF-Arrow]
		{\wellFormed{\Delta}{T} \\ \wellFormed{\Delta}{U} }
		{\wellFormed{\Delta}{\tArrow{T}{m}{U}}}
		
		\inferrule[TF-RcdVrt]
		{\wellFormed{\Delta}{T_\ell} \\ (\forall \ell \in L)}
		{\wellFormed{\Delta}{\tRcdVrt{\ell}{T_\ell}{L}}}
		
		\inferrule[TF-Msg]
		{\wellFormed{\Delta}{T}}
		{\wellFormed{\Delta}{\tMsg{T}}}
		
		\inferrule[TF-Choice]
		{\wellFormed{\Delta}{S_\ell} \\ (\forall \ell \in L)}
		{\wellFormed{\Delta}{\tChoice{\ell}{S_\ell}{L}}}
		
		\inferrule[TF-Seq]
		{\wellFormed{\Delta}{S} \\ \wellFormed{\Delta}{R}}
		{\wellFormed{\Delta}{\tSeq{S}{R}}}
		
		\inferrule[TF-Var]
		{x \in \Delta}
		{\wellFormed{\Delta}{x}}
		
		\inferrule[TF-Rec]
		{\notTerminated{T} \\ \contr{T}{x} \\ \wellFormed{\Delta,x}{T}}
		{\wellFormed{\Delta}{\tRec{x}{T}} }
	\end{mathpar}
	\text{Contractivity (\emph{inductive})}\hfill\fbox{$\contr{T}{x}$}
	\begin{mathpar}
		\inferrule[C-Axiom]
		{T=\tUnit,U\overset{m}{\to}V,\llparenthesis\ell:T_\ell\rrparenthesis_{\ell\in L},\tEnd,\inout T,\tChoice{\ell}{S_\ell}{L},\tSkip}
		{\contr{T}{x}}
		\\
		\inferrule[C-Seq1]
		{\terminated{S} \\ \contr{R}{x}}
		{\contr{\tSeq{S}{R}}{x}}
		
		\inferrule[C-Seq2]
		{\notTerminated{S} \\ \contr{S}{x}}
		{\contr{\tSeq{S}{R}}{x}}
		
		\inferrule[C-Var]
		{y \neq x}
		{\contr{y}{x}}
		
		\inferrule[C-Rec]
		{\contr{T}{x}}
		{\contr{\tRec{y}{T}}{x}}
	\end{mathpar}
	\text{Is-terminated (\emph{inductive})}\hfill\fbox{$T\checkmark$}
	\begin{mathpar}
		\inferrule[\checkmark-Skip]
		{}
		{\terminated{\tSkip}}
		
		\inferrule[\checkmark-Seq]
		{\terminated{S} \\ \terminated{R}}
		{\terminated{\tSeq{S}{R}}}
		
		\inferrule[\checkmark-SRec]
		{\terminated{S}}
		{\terminated{\tRec{s}{S}}}
	\end{mathpar}
	\caption{Type formation.}
	\label{fig:isterminated}
\end{figure}
\subsection{Substitution}
The following lemma shows that we can discard type references from type formation contexts, under the assumption that they do not occur free in the type in question.
\begin{lemma}[Type strengthening]
	Let $x\notin\free(T)$. If $\wellFormed{\Delta,x}{T}$, then $\wellFormed{\Delta}{T}$.
\end{lemma}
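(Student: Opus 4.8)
The plan is to proceed by induction on the derivation of $\wellFormed{\Delta,x}{T}$, using the fact (stated in the excerpt) that the type formation judgement is defined as the smallest congruence relation on types that includes rules \textsc{TF-Var} and \textsc{TF-Rec}. This means the case analysis splits into the two explicit rules and the (omitted) congruence rules, one per type constructor. The congruence cases are entirely routine: since no constructor other than $\mu$ binds a reference, if $x\notin\free(T)$ then $x$ does not occur free in any immediate subterm either, so we simply apply the induction hypothesis to each premise and reassemble with the same congruence rule.

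The two interesting cases are \textsc{TF-Var} and \textsc{TF-Rec}. For \textsc{TF-Var}, we have $T = y$ with $y\in\Delta,x$; since $x\notin\free(y)$ forces $y\neq x$, we get $y\in\Delta$, and \textsc{TF-Var} applies directly. For \textsc{TF-Rec}, we have $T=\tRec{y}{T'}$ with premises $\notTerminated{T'}$, $\contr{T'}{y}$ and $\wellFormed{(\Delta,x),y}{T'}$; note the side condition of \textsc{TF-Rec} (via the $\Delta,x$ notation) guarantees $y\notin\Delta,x$, hence $y\neq x$. We want to apply the induction hypothesis to the third premise, but this requires $x\notin\free(T')$; from $x\notin\free(\tRec{y}{T'})$ and $y\neq x$ we indeed get $x\notin\free(T')$. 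The induction hypothesis then yields $\wellFormed{\Delta,y}{T'}$ — here we must be careful to strip $x$ from the middle of the context $(\Delta,y),x \equiv (\Delta,x),y$ (contexts are sets, so reordering is harmless, but one should note it explicitly). Finally, reassembling with \textsc{TF-Rec} gives $\wellFormed{\Delta}{\tRec{y}{T'}}$, as the premises $\notTerminated{T'}$ and $\contr{T'}{y}$ do not mention $\Delta$ at all and are carried over unchanged.

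The main obstacle, such as it is, is purely bookkeeping around the context: making sure the ``$\Delta,x$ requires $x\notin\Delta$'' convention is respected when we permute $x$ past $y$ in the \textsc{TF-Rec} case, and confirming that $x\notin\free(\tRec{y}{T'})$ together with the freshness of $y$ really does give $x\notin\free(T')$ (if $x$ occurred free in $T'$ it would, since $x\neq y$, also occur free in $\tRec{y}{T'}$, contradiction). I do not anticipate any genuine difficulty; the lemma is a standard weakening/strengthening result and the induction goes through mechanically once the variable-convention details are pinned down.
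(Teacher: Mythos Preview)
Your proposal is correct and takes essentially the same approach as the paper, which simply states ``By rule induction on the hypothesis.'' You have spelled out the details of that induction (the \textsc{TF-Var} and \textsc{TF-Rec} cases and the context-reordering bookkeeping) that the paper leaves implicit.
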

\begin{proof}
	By rule induction on the hypothesis.
\end{proof}
The following lemma shows that substitution preserves the good properties of types: termination, contractivity and type formation. From it follows that these properties are also preserved by the $\unravel$ function (\cref{def:unr}).
\begin{lemma}[Type substitution]
	\label{lemma:type-substitution}
	Suppose that $\Delta \vdash U$.
	\begin{enumerate}
		\item If $\wellFormed{\Delta,x}{T}$ and $\terminated{T}$, then $\terminated{\subst{x}{U}{T}}$.
		\item If $\wellFormed{\Delta,x}{T}$ and $\notTerminated{T}$, then $\notTerminated{\subst{x}{U}{T}}$.
		\item If $\Delta,x,y \vdash T$ and $\contr{T}{y}$ and $y\notin\mathsf{free}(U)$, then $\contr{\subst{x}{U}{T}}{y}$.
		\item If $\wellFormed{\Delta,x}{T}$ then $\wellFormed{\Delta}{\subst{x}{U}{T}}$.
	\end{enumerate}
\end{lemma}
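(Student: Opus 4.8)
The plan is to prove all four statements simultaneously by rule induction, with statements (1) and (2) handled together by induction on the derivation of $\wellFormed{\Delta,x}{T}$ (or equivalently on the structure of $T$, since type formation is syntax-directed), statement (3) by rule induction on the derivation of $\contr{T}{y}$, and statement (4) by rule induction on the derivation of $\wellFormed{\Delta,x}{T}$, invoking (1)--(3) at the recursion case. First I would establish (1) and (2) together: the base cases (\textsc{C-Axiom}-style constructors, and references) are immediate — for a reference $y$, if $y = x$ then $\subst{x}{U}{T} = U$ and we use the hypothesis $\Delta \vdash U$ together with the fact that a reference is never terminated (so the $\notTerminated{\cdot}$ case is what applies and is preserved since $U$ can only be a $\mu$-type here by type formation, hence $\notTerminated{U}$); if $y \neq x$ then substitution does nothing. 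For sequential composition $\tSeq{S}{R}$, I would split on whether $\terminated{S}$, matching the structure of $\checkmark$-$\textsc{Seq}$, and apply the induction hypothesis to the immediate subterms; likewise for $\tRec{s}{S}$ using $\checkmark$-$\textsc{SRec}$ and \textsc{TF-Rec}.

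Next I would prove (3) by induction on $\contr{T}{y}$. The \textsc{C-Axiom} case is immediate since substitution into those constructors keeps them in the same shape. The \textsc{C-Var} case ($T = z \neq y$) is immediate if $z \neq x$ (substitution is vacuous), and if $z = x$ then $\subst{x}{U}{z} = U$ and we need $\contr{U}{y}$, which follows from $\Delta \vdash U$ together with $y \notin \free(U)$ (a closed-under-$\Delta$ type with $y$ not free is contractive on $y$ — this is a small auxiliary fact worth isolating). The \textsc{C-Rec} case applies the induction hypothesis after noting the bound reference can be taken distinct from $x$, $y$ and from $\free(U)$ by the distinctness convention on $\mu$-binders. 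The two \textsc{C-Seq} cases use part (1) or part (2) respectively on the head $S$, together with the induction hypothesis on $R$ (for \textsc{C-Seq1}) or on $S$ (for \textsc{C-Seq2}).

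Finally, (4) is by rule induction on $\wellFormed{\Delta,x}{T}$. The \textsc{TF-Var} case is the interesting base case: if $T = x$, then $\subst{x}{U}{T} = U$ and we conclude by $\Delta \vdash U$ (using type strengthening if needed to drop any spurious references, although here $\Delta \vdash U$ is already in scope); if $T = z \neq x$ then $z \in \Delta$ and substitution is vacuous. The \textsc{TF-Rec} case: given $\wellFormed{\Delta,x}{\tRec{y}{T}}$ we have $\notTerminated{T}$, $\contr{T}{y}$ and $\wellFormed{\Delta,x,y}{T}$; applying part (2) gives $\notTerminated{\subst{x}{U}{T}}$, part (3) gives $\contr{\subst{x}{U}{T}}{y}$ (using $y \notin \free(U)$, guaranteed by the binder-distinctness convention and, if necessary, $\alpha$-renaming), and the induction hypothesis gives $\wellFormed{\Delta,y}{\subst{x}{U}{T}}$ — noting $\Delta,y \vdash U$ holds by weakening since $y \notin \free(U)$ — whence \textsc{TF-Rec} yields $\wellFormed{\Delta}{\subst{x}{U}{\tRec{y}{T}}}$. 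The congruence rules for $\wellFormed{\cdot}{\cdot}$ on the remaining constructors are routine, pushing the substitution and the induction hypothesis through each field.

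The main obstacle I expect is bookkeeping around bound references in the $\mu$ cases: the statement of (3) carries the side condition $y \notin \free(U)$ precisely so that substitution does not capture $y$, and one has to be careful that the convention "all references introduced by $\mu$-binders are pairwise distinct" (part of type formation) plus $\alpha$-renaming genuinely lets us assume the bound reference in $\tRec{y}{T}$ is fresh for $U$, $x$ and $y$ when we recurse. The other mildly delicate point is the reference base case of (3)/(4): replacing a reference by $U$ requires knowing $U$ itself is contractive on $y$ and well-formed in the smaller context, which is where $\Delta \vdash U$ and $y \notin \free(U)$ do the work, mediated by the type strengthening lemma.
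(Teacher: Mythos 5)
Your overall strategy is essentially the paper's: items (1) and (2) by induction on the structure/formation of $T$, item (3) by rule induction on $\contr{T}{y}$ using (1)/(2) for the two sequencing rules, and item (4) by rule induction on $\wellFormed{\Delta,x}{T}$, discharging the \textsc{TF-Rec} premises with (2) and (3) and closing the \textsc{TF-Var} cases from $\Delta\vdash U$ (resp.\ strengthening/vacuity of the substitution). Your handling of the \textsc{C-Var} subcase $z=x$ in (3), via the auxiliary fact that a well-formed type in which $y$ does not occur is contractive on $y$, is actually more explicit than the paper, which only discusses $z\neq x,y$; that auxiliary fact is correct (a short structural induction, using that $\terminated{S}$ or $\notTerminated{S}$ always holds) and is indeed worth isolating. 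The freshness bookkeeping you describe for the $\mu$-cases matches what the paper leaves implicit.

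The one step that does not hold up is the reference case of (2): from $\Delta\vdash U$ you infer that ``$U$ can only be a $\mu$-type here, hence $\notTerminated{U}$''. Nothing in the hypotheses forces $U$ to be a recursive type --- $\Delta\vdash\tSkip$ is derivable --- and with $T=x$ and $U=\tSkip$ one has $\notTerminated{x}$ but $\terminated{\subst{x}{U}{x}}$, so this case cannot be closed by that argument; the failure then propagates to (3) and (4) through \textsc{C-Seq2} (e.g.\ $\tRec{s}{\tSeq{r}{s}}$ with $\tSkip$ substituted for $r$ yields a non-contractive body). This is exactly the delicate spot of the lemma: item (2) only goes through when the substituted type is itself non-terminated, which is the situation in all of the lemma's uses in the paper, where $U$ is a well-formed $\mu$-type (so \textsc{TF-Rec} gives $\notTerminated{U}$); the paper's own proof is silent here (``all cases are either straightforward or follow from the induction hypothesis''). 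So: same route as the paper, but in the reference case you should appeal to an explicit assumption that $U$ is non-terminated (or a well-formed recursive type), rather than claiming this follows from $\Delta\vdash U$ alone.
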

\begin{proof}
	\hfill 
	\begin{enumerate}
		\item \label{proof:lemma:type-substitution:term} By rule induction on $\terminated{T}$.
		\item \label{proof:lemma:type-substitution:notterm} By structural induction on $T$. All cases are either straightforward or follow from the induction hypothesis.
		\item \label{proof:lemma:type-substitution:contr} By rule induction on $\contr{T}{y}$, using \cref{proof:lemma:type-substitution:term} and \cref{proof:lemma:type-substitution:notterm}. All cases follow from the induction hypothesis except the case for rule \textsc{C-Var}, where we have $T=z$ with $z\neq x,y$, and where the result follows from hypothesis $\contr{z}{y}$.
		\item \label{proof:lemma:type-substitution:tf} By rule induction on $\Delta,x\vdash T$. For the case \textsc{TF-Rec} we have $T=\tRec{y}{V}$. The premises to the rule are $\notTerminated{V}$, $\contr{V}{y}$ and $\Delta,x,y\vdash V$. Induction on the third premise gives $\Delta,y\vdash \subst{x}{U}{V}$. \cref{proof:lemma:type-substitution:notterm} gives $\notTerminated{\subst{x}{U}{V}}$, while \cref{proof:lemma:type-substitution:contr} gives $\contr{\subst{x}{U}{V}}{y}$. Rule \textsc{TF-Rec} gives $\Delta \vdash \tRec{y}{(\subst{x}{U}{V})}$. Conclude with the definition of substitution. For the case \textsc{TF-Var} with $T=y\neq x$, we have $x \notin \mathsf{free}(y)$. The result follows from hypothesis $\wellFormed{\Delta,x}{y}$ and strengthening. For \textsc{TF-Var} with $T=x$ the result follows from the hypothesis $\Delta \vdash U$.
	\end{enumerate}
\end{proof}
\subsection{Unraveling}
By inspecting definition of the $\unravel$ function (\cref{def:unr}), we get the following immediate result.
\begin{proposition}
	\label{prop:unraveling}
	If $T=\unravel(T)$, then $T$ is one of $\tUnit$, $\tArrow{U}{m}{V}$, $\tRcdVrt{\ell}{T_\ell}{L}$, $x$, $\tMsg{U}$, $\tChoice{\ell}{S_\ell}{L}$, $\tSeq{\tMsg{U}}{S}$, $\tSkip$ or $\tEnd$. If $T \neq \unravel(T)$, then $T$ is one of $\tRec{x}{U}$, $\tSeq{\tChoice{\ell}{S_\ell}{L}}{S}$, $\tSeq{(\tSeq{S_1}{S_2})}{S_3}$, $\tSeq{\tSkip}{S}$, $\tSeq{\tEnd}{S}$ or $\tSeq{(\tRec{s}{S})}{R}$.
\end{proposition}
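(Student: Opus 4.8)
The plan is to prove both implications by a single case analysis on the syntactic form of $T$, reading the defining clauses of $\unravel$ off \cref{def:unr}. The pivotal observation is that the left-hand sides of the six clauses of $\unravel$ that are \emph{not} the catch-all ``all other cases'' clause --- namely $\tRec{x}{U}$, $\tSeq{\tEnd}{S}$, $\tSeq{\tChoice{\ell}{S_\ell}{L}}{R}$, $\tSeq{\tSkip}{S}$, $\tSeq{(\tRec{s}{S})}{R}$ and $\tSeq{(\tSeq{S_1}{S_2})}{S_3}$ --- are exactly the six forms listed in the second part of the statement. Hence the second implication is essentially immediate: if $T$ matches none of those six patterns, the only applicable clause is $\unravel(T)=T$, so $T=\unravel(T)$; contrapositively, $T\neq\unravel(T)$ forces $T$ into one of the six forms.

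For the first implication I would argue the contrapositive, showing that none of the six list forms above is a fixed point of $\unravel$. For $\tSeq{\tEnd}{S}$ and $\tSeq{\tChoice{\ell}{S_\ell}{L}}{R}$ this is immediate, since $\unravel$ returns a type headed by $\tEnd$, respectively by a choice, which is syntactically distinct from a sequential composition. For the remaining four forms $\unravel$ issues a recursive call, so here I would run an inner induction on the number of steps $\unravel$ performs --- a number that is finite because $T$ is well-formed, hence contractive, and contractivity is preserved by the unfoldings and substitutions that $\unravel$ carries out (\cref{lemma:type-substitution}). The induction hypothesis, together with the observation of the previous paragraph, gives that the eventual result of $\unravel(T)$ is itself a fixed point and therefore not of any of the six forms, so in particular it differs from $T$. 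Finally, since the grammar of types has exactly the productions of \cref{fig:typesyntax}, exhausting the syntactic cases shows that a type which is a fixed point of $\unravel$ must be one of the nine forms in the first part.

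The step I expect to require the most care is the sequential-composition case $T=\tSeq{S}{R}$, which needs a further case split on the head $S$: this is where one must check the enumeration is exhaustive and where one appeals both to the termination/contractivity argument just sketched and to the fact that a bare type reference never heads an un-normalized sequential composition among the types to which $\unravel$ is actually applied (since $\unravel$ unfolds every enclosing $\mu$-binder first, substituting such references away, so a subterm $\tSeq{s}{R}$ is never reached directly). Everything else is a mechanical match against the defining clauses.
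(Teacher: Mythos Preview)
The paper dispatches this proposition in one line (``by immediate inspection of the definition of $\unravel$''), and your case analysis on the defining clauses is exactly what that inspection amounts to, so the approaches coincide. Two small remarks: the termination argument via contractivity is more than you need for the first implication, since the only \emph{returning} clauses of $\unravel$ (the catch-all, $\tSeq{\tEnd}{S}\mapsto\tEnd$, and $\tSeq{\tChoice{\ell}{S_\ell}{L}}{R}\mapsto\tChoice{\ell}{\tSeq{S_\ell}{R}}{L}$) already never yield one of the six forms, so $\unravel(T)=T$ directly places $T$ outside them without ever invoking termination; and your handling of $\tSeq{s}{R}$ should appeal to the ambient well-formedness hypothesis (a closed type cannot have a bare reference heading a sequential composition) rather than to which subterms $\unravel$ happens to visit during its recursion, since the proposition is stated for $T$ itself, not for intermediate arguments of $\unravel$.
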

We can also define the notion of one-step unraveling for our types.
\begin{definition}
	We say that a type $T'$ is a \emph{one-step unraveling} of another type $T$, denoted $\unrOne(T)$, if: $T$ is a direct application of a type constructor, and $T'=T$; or $T$ is not a direct application of a type constructor, and $T'$ is obtained by one recursive call of the $\unravel$ function, which attempts to bring a type constructor into the front of a type.
\end{definition}
One example is $\unrOne(\tSeq{\tSkip}{S})=S$; another example is
$\unrOne(\tSeq{(\tSeq{S_1}{S_2})}{S_3}) = \tSeq{S_1}{(\tSeq{S_2}{S_3})}$. Notice that $T_0$ is contractive iff any sequence $T_0, T_1, ...$, where $T_{i+1}=\unrOne(T_i)$, eventually stabilises in $\unravel(T)$ (after finitely many steps).

Finally, we make some observations about the structure of subtyping derivations. We can classify the syntactic subtyping rules from \cref{fig:syntactic-sub} in three groups: \emph{progressing}, \emph{left-preserving} and \emph{right-preserving}:
\begin{itemize}
	\item We designate rules \textsc{S-Unit}, \textsc{S-Arrow}, \textsc{S-Rcd}, \textsc{S-Vrt}, \textsc{S-In}, \textsc{S-Out}, \textsc{S-IntChoice}, \textsc{S-ExtChoice}, \textsc{S-End}, \textsc{S-Skip}, \textsc{S-EndSeq1L}, \textsc{S-EndSeq1R}, \textsc{S-EndSeq2}, \textsc{S-InSeq1L}, \textsc{S-OutSeq1L}, \textsc{S-InSeq1R}, \textsc{S-OutSeq1R}, \textsc{S-InSeq2} and \textsc{S-OutSeq2} as \emph{progressing}. These rules consume the types on both sides of the relation, i.e., if we apply one of these rules from judgement $T\subS U$, we end up with judgements $T'\subS U'$ where $T',U'$ are both proper subterms of $T,U$. Moreover, they are applicable iff $T = \unravel(T)$ and $U = \unravel(U)$.
	
	\item We designate rules \textsc{S-RecL}, \textsc{S-SkipSeqL}, \textsc{S-ChoiceSeqL}, \textsc{S-SeqSeqL}, \textsc{S-RecSeqL} as \emph{right-preserving}. These rules change the type on the left-hand side of the relation, but preserve the type on the right-hand side. They are applicable when $T \neq \unravel(T)$.
	
	\item We designate rules \textsc{S-RecR}, \textsc{S-SkipSeqR}, \textsc{S-ChoiceSeqR}, \textsc{S-SeqSeqR}, \textsc{S-RecSeqR} are \emph{left-preserving}. These rules change the type on the right-hand side of the relation, but preseve the type on the left-hand side. They are applicable when $U \neq \unravel(U)$.
\end{itemize}
Furthermore, by inspecting the rules, we can gather that:
\begin{itemize}
	\item If we can apply a progressing rule for $T\subS U$, then it is the only applicable rule.
	\item If we can apply a left-preserving rule for $T\subS U$, then this the only left-preserving rule that can be applied (but a right-preserving rule may also be applicable).
	\item If we can apply a right-preserving rule for $T\subS U$, then this the only right-preserving rule that can be applied (but a left-preserving rule may also be applicable).
	\item If we can apply both a left-preserving rule and a right-preserving rule for $T \subS U$, then we can apply them one after the other in any order. Furthermore, both rules must eventually be applied in any successful derivation for $T \subS U$ .
\end{itemize}
From these observations we can immediately derive the following results.
\begin{lemma}
	\label{lemma:unraveling}
	\hfill
	\begin{enumerate}
		\item Let $T'=\unrOne(T)$ for some type $T$.
		\begin{enumerate}
			\item If $\Delta \vdash T$, then $\Delta \vdash T'$.
			\item $T\subS U$ iff $T'\subS U$.
			\item $\transition{T}{a}{U}$ iff $\transition{T'}{a}{U}$.
		\end{enumerate}
		\item Let $T'=\unravel(T)$ for some type $T$.
		\begin{enumerate}
			\item If $\Delta \vdash T$, then $\Delta \vdash T'$.
			\item $T\subS U$ iff $T'\subS U$.
			\item $\transition{T}{a}{U}$ iff $\transition{T'}{a}{U}$.
		\end{enumerate}
	\end{enumerate}
\end{lemma}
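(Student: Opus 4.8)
The plan is to deduce Part~2 from Part~1 and to prove Part~1 by a case analysis driven by \cref{prop:unraveling}. For Part~2, since writing ``$T' = \unravel(T)$'' presupposes that $\unravel(T)$ is defined, the sequence $T = T_0, T_1, \dots$ with $T_{i+1} = \unrOne(T_i)$ stabilises after finitely many steps at some $T_n = \unravel(T)$ (the observation recorded immediately after the definition of $\unrOne$). Each of (a)--(c) in Part~2 then follows from the corresponding statement in Part~1 by a trivial induction on $n$: chain the implications $\Delta \vdash T_i \Rightarrow \Delta \vdash T_{i+1}$ for (a), and chain the biconditionals (using transitivity of ``iff'' and of $\transition{\cdot}{a}{\cdot}$-membership) for (b) and (c). So all the content lies in Part~1.

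For Part~1, split on the shape of $T$ as classified by \cref{prop:unraveling}. If $T = \unravel(T)$, then $\unrOne(T) = T$ and (a)--(c) are immediate. Otherwise $T$ is one of $\tRec{x}{V}$, $\tSeq{(\tRec{s}{S})}{R}$, $\tSeq{\tSkip}{S}$, $\tSeq{(\tSeq{S_1}{S_2})}{S_3}$, $\tSeq{\tChoice{\ell}{S_\ell}{L}}{R}$, or $\tSeq{\tEnd}{S}$, and $\unrOne$ performs exactly one rewrite step. For (a): the two ``$\mu$'' subcases are handled by \cref{lemma:type-substitution}(4) --- e.g.\ from $\Delta \vdash \tRec{x}{V}$, rule \textsc{TF-Rec} gives $\Delta, x \vdash V$ together with $\Delta \vdash \tRec{x}{V}$, so type substitution yields $\Delta \vdash \subst{x}{\tRec{x}{V}}{V} = \unrOne(T)$ --- and the four $\tSeq$-rewriting subcases are immediate because type formation on $\tSeq{\cdot}{\cdot}$ is, by congruence closure, just the conjunction of formation of the components, which the rewrites preserve. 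For (c): each non-unravelled $T$ admits a single applicable family of LTS rules, each of which is the ``pass-through'' counterpart of the rules applicable to $\unrOne(T)$ (\textsc{L-Rec} and \textsc{L-RecSeq} against their premises; \textsc{L-SkipSeq}, \textsc{L-SeqSeq} likewise; \textsc{L-ChoiceSeq}/\textsc{L-ChoiceFieldSeq} against \textsc{L-Choice}/\textsc{L-ChoiceField}; \textsc{L-EndSeq} against \textsc{L-End}, both producing only the transition to $\tSkip$). Hence the sets of available transitions from $T$ and from $\unrOne(T)$ coincide, and this is a short rule inspection.

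The main obstacle is (b), because $\subS$ is coinductive, so ``$T \subS U$'' asserts the existence of a possibly infinite derivation and one cannot merely invert a rule. I would package both directions as an up-to coinductive argument, relying on the structural observations stated just before the lemma. For ``$\Leftarrow$'', the relation ${\subS} \cup \{(T,U) : \unrOne(T) \subS U,\ T \neq \unravel(T)\}$ is closed under the subtyping rules: for a $\mu$- or $\tSeq$-rewriting $T$, the chosen justifying rule is the unique applicable left-preserving one (\textsc{S-RecL}, \textsc{S-RecSeqL}, \textsc{S-SkipSeqL}, \textsc{S-SeqSeqL}, \textsc{S-ChoiceSeqL}), whose premise is exactly $\unrOne(T) \subS U$; for $T = \tSeq{\tEnd}{S}$, where $\unrOne$ produces the bare constructor $\tEnd$ rather than mirroring a left-preserving rule, one justifies $\tSeq{\tEnd}{S} \subS U$ directly from the derivation of $\tEnd \subS U$ by a small case split on its root (\textsc{S-End}, matched by \textsc{S-EndSeq1L}; \textsc{S-EndSeq1R}, matched by \textsc{S-EndSeq2}; or a right-preserving rule, matched by the same right-preserving rule). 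For ``$\Rightarrow$'', the relation ${\subS} \cup \{(\unrOne(T),U) : T \subS U\}$ is closed likewise, using the fact that for a non-unravelled $T$ no progressing rule applies, so the root of the derivation of $T \subS U$ is either the $\unrOne$-matching left-preserving rule (whose premise subderivation witnesses $\unrOne(T) \subS U$) or a right-preserving rule (which is permuted down, closing under $\mathcal{R}$), with the $\tSeq{\tEnd}{S}$ corner case again treated by hand. The fiddly points to be careful about are precisely this $\tSeq{\tEnd}{\cdot}$ subcase and the bookkeeping that threads the added pairs through right-preserving rewrites of $U$.
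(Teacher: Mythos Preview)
Your proof is correct and follows the same overall plan as the paper's: Part~2 is derived from Part~1 by iterating finitely many one-step unravelings, and Part~1 is dispatched by case analysis on the shape of $T$ (via \cref{prop:unraveling}) together with rule inspection for the type-formation, subtyping, and LTS rules.

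Where you differ is in the level of detail for 1(b). The paper's own proof is a single line---``follows from the preceding discussion''---leaning on the classification of the subtyping rules into progressing, left-preserving, and right-preserving, together with the observation that left- and right-preserving rules commute and that any derivation must eventually apply the unique applicable left-preserving rule. You instead build explicit backward-closed relations extending ${\subS}$ and verify the coinductive hypothesis directly. This buys you rigour at the cost of length; in particular, your separate treatment of the $\tSeq{\tEnd}{S}$ shape is more careful than the paper's discussion, which lumps \textsc{S-EndSeq1L} and \textsc{S-EndSeq2} among the progressing rules said to ``be applied iff $T = \unravel(T)$'', even though $\tSeq{\tEnd}{S} \neq \unravel(\tSeq{\tEnd}{S})$. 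Your case split on the root of the derivation of $\tEnd \subS U$ (resp.\ $\tSeq{\tEnd}{S} \subS U$) is exactly what is needed to patch that gap.
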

\begin{proof}
	Sub-item 1.a is immediate by inspection of the type formation rules. Sub-item 1.b follows from the preceding discussion. Sub-item 1.c is immediate by inspection of the LTS rules (\cref{fig:hocfst-lts}). Item 2 follows from Item 1 since $\unravel(T)$ is reached in a finite number of steps. 
\end{proof}
\section{Proof of Theorem \ref{thm:syntactic-preorder}}
In this section we focus on the proof of \cref{thm:syntactic-preorder}:
\syntacticPreorder*
\begin{proof}
	We need to prove reflexivity and transitivity.
	\paragraph*{Reflexivity} We prove by coinduction that $T\subS T$ for every type $T$ s.t. $\vdash T$. Consider the following relation.
	\begin{align*}
		\mathcal{R} = &\,\{(T,T) \svert \vdash T\}\\
		\cup &\,\{(T,T') \svert \vdash T \text{ and $T'=\unrOne(T)$}\}
	\end{align*}
	We shall prove that $\mathcal{R}$ is backward-closed for the rules of syntactic subtyping. This will show that $\mathcal{R}\subseteq{\subS}$ and, consequently, that $T\subS T$ for every type $T$.
	
	Let $(T,T)\in\mathcal{R}$. We consider first the cases in which $T$ fits a type constructor, i.e., $T = \unravel(T)$. Given that $T$ is well-formed, we have the following case analysis for it:
	
	(\textbf{Case} $T = \tUnit$): We apply axiom \textsc{S-Unit} to $(T,T)$.
	
	(\textbf{Case} $T = \tArrow{U}{m}{V}$): We apply rule \textsc{S-Arrow} to $(T,T)$, arriving at goals $(U,U)$ and $(V,V)$. Since the derivation of $\vdash T$ must use rule \textsc{TF-Arrow}, we also have that $\vdash U$ and $\vdash V$, and therefore that $(U,U),(V,V)\in\mathcal{R}$.
	
	(\textbf{Case} $T = \tRecord{\ell}{T_\ell}{L}$): We apply rule (\textsc{S-Rcd} and arrive at goals $(T_k,T_k)$ for each $k\in L$. The derivation of $\vdash T$ must use rule \textsc{TF-RcdVrt}, which implies that $\vdash T_k$ for each $k \in L$, which means that $(T_k, T_k)\in \mathcal{R}$ for each $k\in\mathcal{R}$.
	
	(\textbf{Case} $T = \tVariant{\ell}{T_\ell}{L}$): Analogous to case $T = \tRecord{\ell}{T_\ell}{L}$.
	
	(\textbf{Case} $T = x$): Cannot occur, for $\cancel{\vdash}x$. 
	
	(\textbf{Case} $T = \tEnd$): We apply axiom \textsc{S-End} to $(T,T)$.
	
	(\textbf{Case} $T = \tIn{U}$): We apply rule \textsc{S-In} to $(T,T)$, arriving at goal $(U,U)$. Since the derivation of $\vdash T$ must use rule \textsc{TF-Msg}, we have that $\vdash U$, and therefore that $(U,U)\in\mathcal{R}$.
	
	(\textbf{Case} $T = \tIntChoice{\ell}{T_\ell}{L}$): Analogous to case $T = \tRecord{\ell}{T_\ell}{L}$.
	
	(\textbf{Case} $T = \tExtChoice{\ell}{T_\ell}{L}$): Analogous to case $T = \tRecord{\ell}{T_\ell}{L}$.
	
	(\textbf{Case} $T = \tSkip$): We apply axiom \textsc{S-Skip} to $(T,T)$.
	
	(\textbf{Case} $T = \tSeq{\tIn{U}}{S}$): We apply rule \textsc{S-InSeq2}, arriving at goals $(U,U),(S,S)$. The derivation of $\vdash T$ must use rule \textsc{TF-Seq}, implying $\vdash \tIn{U}$ and $\Delta \vdash S$. Moreover, the derivation of $\vdash \tIn{U}$ must use rule \textsc{TF-Msg}, implying $\vdash U$. Since $\vdash U$ and $\vdash S$, we have $(U,U),(S,S)\in\mathcal{R}$. The case where $T=\tSeq{\tOut{U}}{S}$ is similar.
	
	(\textbf{Case} $T = \tSeq{s}{S}$): Cannot occur, since $\not\wellFormed{}{T}$. 
	
	Next, we consider cases in which $T \neq \unravel(T)$.
	
	(\textbf{Case} $T = \tRec{x}{U}$): We apply rule \textsc{S-RecR} to $(T,T)$, arriving at goal $(T,T')$ where $T'=\subst{x}{\tRec{x}{U}}{U}$. Since $T'=\unrOne(T)$, we have that $(T,T')\in\mathcal{R}$.
	
	(\textbf{Case} $T=\tSeq{\tEnd}{S}$): We apply axiom \textsc{S-EndSeq2}.
	
	(\textbf{Case} $T = \tSeq{\tChoice{\ell}{S_\ell}{L}}{R}$): We apply rule \textsc{S-ChoiceSeqR} to $(T,T)$, arriving at goal $(T,T')$ where $T' = \tChoice{\ell}{\tSeq{S_\ell}{R}}{L}$. Since $T'=\unrOne(T)$, we have that $(T,T')\in\mathcal{R}$.
	
	(\textbf{Case} $T = \tSeq{\tSkip}{S}$): We apply rule \textsc{S-SkipSeqR}, arriving at goal $(T,S)$. Since $S=\unrOne(T)$, we obtain that $(T,S)\in\mathcal{R}$.
	
	(\textbf{Case} $T = \tSeq{(\tSeq{S_1}{S_2})}{S_3}$): We apply rule \textsc{S-SeqSeqR}, arriving at goal $(T,T')$ where $T' = \tSeq{S_1}{(\tSeq{S_2}{S_3})}$. Since $T'=\unrOne(T)$, we obtain that $(T,T')\in\mathcal{R}$.
	
	\textbf{(Case $T = \tSeq{(\tRec{s}{S})}{R}$):} We apply rule \textsc{S-RecSeqR} to $(T,T)$, arriving at goal $(T,T')$, where $T'=\tSeq{(\subst{s}{\tRec{s}{S}}{S})}{R}$. Since $T'=\unrOne(T)$, we have that $(T,T')\in\mathcal{R}$.
	
	Next, we must consider cases $(T,T')\in\mathcal{R}$ where $T\neq T'$, which means, by definition, that $T'=\unrOne(T)$ and therefore that $T \neq \unravel(T)$. Given that $\vdash T$, we have the following case analysis for $T$.
	
	(\textbf{Case} $T = \tRec{x}{U}$): From \cref{lemma:type-substitution} follows that $\vdash \subst{x}{\tRec{x}{U}}{U}$. Since $T'=\unrOne(T)$, we know that $T'=\subst{x}{\tRec{x}{U}}{U}$. We apply rule \textsc{S-RecL} to $(T,T')$, arriving at goal $(T',T')\in\mathcal{R}$.
	
	(\textbf{Case} $T=\tSeq{\tEnd}{S}$): Then $T'=\tEnd$. We apply axiom \textsc{S-EndSeq1L} to $(T,T')$.
	
	(\textbf{Case} $T = \tSeq{\tChoice{\ell}{S_\ell}{L}}{R}$):  The derivation of $\vdash T$ must use rules \textsc{TF-Seq} and \textsc{TF-Choice}, implying that $\vdash S_k$ for each $k\in L$ and $\vdash U$. Again by rule \textsc{TF-Seq}, we get that $\vdash \tSeq{S_k}{R}$ for each $k\in L$ and thus, by rule \textsc{TF-Choice}, $\vdash \tChoice{\ell}{\tSeq{S_\ell}{R}}{L}$. Since $T'=\unrOne(T)$, we know that $T'=\tChoice{\ell}{\tSeq{S_\ell}{R}}{L}$. We apply rule \textsc{S-ChoiceSeqL} to $(T,T')$, arriving at goal $(T',T')\in\mathcal{R}$.
	
	(\textbf{Case} $T = \tSeq{\tSkip}{S}$): The derivation of $\vdash T$ must use rule \textsc{TF-Seq}, implying that $\vdash S$. Since $T'=\unrOne(T)$, we know that $T'= S$. We apply rule \textsc{E-SkipSeqL} to $(T,T')$, arriving at goal $(T',T')\in\mathcal{R}$.
	
	(\textbf{Case} $T = \tSeq{(\tSeq{S_1}{S_2})}{S_3}$): The derivation of $\vdash T$ must use rule \textsc{TF-Seq}, hence $\vdash S_1$, $\vdash S_2$, $\vdash S_3$. Therefore, by rule \textsc{TF-Seq} also, we have $\vdash \tSeq{S_1}{(\tSeq{S_2}{S_3})}$. Since $T'=\unrOne(T)$, we know that $T'= \tSeq{S_1}{(\tSeq{S_2}{S_3})}$. We apply rule \textsc{S-SeqSeqL} to $(T,T')$, arriving at goal $(T',T') \in \mathcal{R}$.
	
	(\textbf{Case} $T = \tSeq{(\tRec{s}{S})}{R}$): The derivation of $\vdash T$ must use rule \textsc{TF-Seq}, implying that $\vdash \tRec{s}{S}$ and $\vdash R$. From \cref{lemma:type-substitution} follows that $\vdash \subst{s}{\tRec{s}{S}}{S}$. By rule \textsc{TF-Seq} we get that $\vdash \tSeq{(\subst{s}{\tRec{s}{S}}{S})}{R}$. Since $T'=\unrOne(T)$, we know that $T'=\tSeq{(\subst{s}{\tRec{s}{S}}{S})}{R}$. We apply rule \textsc{S-RecSeqL} to $(T,T')$, arriving at goal $(T',T')\in\mathcal{R}$.
	
	\paragraph*{Transitivity} We now prove by coinduction that, for all types $T,U,V$ with $\vdash T, \vdash U, \vdash V$, if $T\subS U$ and $U\subS V$, then $T\subS V$. Consider the following relation.
	\begin{gather*}
		\mathcal{R} = \{(T,V)\svert\text{$\vdash T, \vdash V$ and there exists $U$ s.t. $\vdash U, T\subS U$ and $U\subS V$}\}
	\end{gather*}
	We prove that $\mathcal{R}$ is backward closed for the rules of the syntactic subtyping relation, showing that $\mathcal{R} \subseteq {\subS}$. This will give the desired property.
	
	We begin by assuming that no left-preserving rule applies to judgements $T \subS U$. Otherwise, we could apply its symmetric counterpart to judgement $U \subS V$ to get a different $U'$ for which $T \subS U'$ and $U' \subS V$. Without loss of generality, our derivation for $T \subS U$ starts with a finite sequence of left-preserving rules until we reach a type $U'$ that can only be consumed. We could then reach the same type $U'$ by applying a symmetric sequence of right-preserving rules to the derivation for $U \subS V$. We can therefore assume that $U$ is ready to be consumed.
	
	Suppose now a derivation for $T \subS U$ starting with a right-preserving rule, after which we get judgement $T' \subS U$ for some type $T'$. Here we can apply the corresponding rule to $(T,V)$, arriving at $(T',V)$, which is in $\mathcal{R}$ since $T' \subS U$ and $U \subS V$. The case where $U \subS V$ starts with a left-preserving rule can be handled similarly.
	
	What if both derivations for $T \subS U$ and $U \subS V$ start with a progressing rule? In this case, we need to inspect which rule is at the start of the derivation for $T \subS U$.
	
	(\textbf{Case} \textsc{S-Unit}): Then $T=\tUnit$ and $U=\tUnit$. The only progressing rule that can be applied at $U \subS V$ is also \textsc{S-Unit}, implying that $V=\tUnit$ as well. Therefore, we can apply axiom \textsc{S-Unit} to $(T,V)$.
	
	(\textbf{Case} \textsc{S-Arrow}): Then $T = \tArrow{T_1}{m}{T_2}$ and $U = \tArrow{U_1}{n}{U_2}$ for some $T_1,T_2,U_1,U_2,m,n$. Furthermore, we have $U_1\subS T_1$ and $T_2 \subS U_2$ and $m \sqsubseteq n$. The only progressing rule that can be applied to $U \subS V$ is also \textsc{S-Arrow}, implying that $V = \tArrow{V_1}{o}{V_2}$ for some $V_1,V_2,o$. Furthermore, we have $V_1 \subS U_1$, $U_2 \subS V_2$ and $n \sqsubseteq o$. By transitivity of ${\sqsubseteq}$ we obtain $m \sqsubseteq o$. We apply rule \textsc{S-Arrow} to $(T,V)$, arriving at goals $(V_1,T_1),(T_2,V_2)\in\mathcal{R}$.
	
	(\textbf{Case} \textsc{S-Rcd}): Then $T=\tRecord{\ell}{T_\ell}{L}$ and $U=\tRecord{k}{U_k}{K}$ for some $L,K,T_i,U_j,i\in L, j\in K$. Furthermore, we have $K\subseteq L$, $T_j \subS U_j$ for $j \in K$. The only progressing rule that can be applied to $U \subS V$ is also \textsc{S-Rcd}, implying that $V = \tRecord{h}{V_h}{H}$ for some $H,V_h,h\in H$. Furthermore, we have $H \subseteq K$ $U_h \subS V_h$ for each $h \in H$. By transitivity of ${\subseteq}$ we get $H \subseteq L$. We apply rule \textsc{S-Rcd} to $(T,V)$, arriving at goals $(T_h, V_h)\in\mathcal{R}$ for each $h\in H$. Case \textsc{S-IntChoice} is similar.
	
	(\textbf{Case} \textsc{S-Vrt}): Then $T=\tVariant{\ell}{T_\ell}{L}$ and $U=\tVariant{k}{U_k}{K}$ for some $L,K,T_i,U_j,i\in L, j\in K$. Furthermore, we have $L\subseteq K$, $T_j \subS U_j$ for $j \in K$. The only progressing rule that can be applied to $U \subS V$ is also \textsc{S-Vrt}, implying that $V = \tVariant{h}{V_h}{H}$ for some $H,V_h,h\in H$. Furthermore, we have $K \subseteq H$, $U_h \subS V_h$ for each $h \in H$. By transitivity of ${\subseteq}$ we get $L \subseteq H$. We apply rule \textsc{S-Vrt} to $(T,V)$, arriving at goals $(T_h, V_h)\in\mathcal{R}$ for each $h\in H$. Case \textsc{S-ExtChoice} is similar.
	
	(\textbf{Case} \textsc{S-End}): Then $T=\tEnd$ and $U = \tEnd$. The two possible progressing rules for $U\subS V$ are \textsc{S-End} and \textsc{S-EndSeq1R}. In the first case we have $V=\tEnd$, so we apply \textsc{S-End} to $(T,V)$. In the second case we have $V=\tSeq{\tEnd}{S}$ for some $S$, so we apply rule \textsc{S-EndSeq1R} to $(T,V)$.
	
	(\textbf{Case} \textsc{S-In}): Then $T=\tIn{T'}$ and $U=\tIn{U'}$ for some $T',U'$. Furthermore, we have $T' \subS U'$. The two possible progressing rules for $U\subS V$ are \textsc{S-In} and \textsc{S-InSeq1R}. In the first case, we have $V=\tIn{V'}$ for some $V'$. It follows that $U' \subS V'$. We apply rule \textsc{S-In} to $(T,V)$, arriving at goal $(T',V')\in\mathcal{R}$. In the second case, we have $V = \tSeq{\tIn{V'}}{S}$ for some $V'$ and $S$. It follows that $U' \subS V'$ and $S \subS \tSkip$. We apply rule \textsc{S-InSeq1R} to $(T,V)$, arriving at goal $(T',V')\in\mathcal{R}$. The cases \textsc{S-InSeq1L}, \textsc{S-InSeq1R}, \textsc{S-InSeq2} are handled similarly. 
	
	(\textbf{Case} \textsc{S-Out}): Then $T=\tOut{T'}$ and $U=\tOut{U'}$ for some $T',U'$. Furthermore, we have $U' \subS T'$. The two possible progressing rules for $U\subS V$ are \textsc{S-Out} and \textsc{S-OutSeq1R}. In the first case, we have $V=\tOut{V'}$ for some $V'$. It follows that $V' \subS U'$. We apply rule \textsc{S-Out} to $(T,V)$, arriving at goal $(V',T')\in\mathcal{R}$. In the second case, we have $V = \tSeq{\tOut{V'}}{S}$ for some $V'$ and $S$. It follows that $V' \subS U'$ and $S\subS \tSkip$. We apply rule \textsc{S-OutSeq1R} to $(T,V)$, arriving at goal $(V',T')\in\mathcal{R}$. The cases \textsc{S-OutSeq1L}, \textsc{S-OutSeq1R}, \textsc{S-OutSeq2} are handled similarly. 
	
	(\textbf{Case} \textsc{S-EndSeq1L}): Then $T = \tSeq{\tEnd}{S}$ and $U=\tEnd$ for some $S$. The two possible progressing rules for $U \subS V$ are \textsc{S-End} and \textsc{S-EndSeq1R}. In the first case we have $V=\tEnd$, so we can apply rule \textsc{S-EndSeq1L} to $(T,V)$. In the second case we have $V=\tSeq{\tEnd}{R}$ for some $R$, so we can apply rule \textsc{S-EndSeq2} to $(T,V)$.
	
	(\textbf{Case} \textsc{S-EndSeq1R}): Then $T=\tEnd$ and $U=\tSeq{\tEnd}{S}$ for some $S$. The two possible progressing rules for $U \subS V$ are \textsc{S-EndSeq1L} and \textsc{S-EndSeq2}. In the first case we have $V = \tEnd$, so we can apply \textsc{S-End} to $(T,V)$. In the second case we have $V = \tSeq{\tEnd}{R}$ for some $R$, so we can apply \textsc{S-EndSeq1R} to $(T,V)$.
	
	(\textbf{Case} \textsc{S-EndSeq2}): Then $T=\tSeq{\tEnd}{S}$ and $U=\tSeq{\tEnd}{R}$ for some $S,R$. The two possible progressing rules for $U \subS V$ are \textsc{S-EndSeq1L} and \textsc{S-EndSeq2}. In the first case we have $V=\tEnd$, so we can apply \textsc{S-EndSeq1L} to $(T,V)$. In the second case we have $V=\tSeq{\tEnd}{S'}$ for some $S'$, so we apply \textsc{S-EndSeq2} to $(T,V)$.
\end{proof}

\section{Proof of Theorem \ref{thm:soundness-completeness-stx-sem}} \label{sec:proof-soundness-completeness-stx-sem}

Recall the statement of \cref{thm:soundness-completeness-stx-sem}, presented in \cref{sec:semantic-subtyping}.
\syntacticSemanticSoundnessCompleteness*
\begin{proof}
	We analyse both directions of the biconditional separately.
	
	\paragraph*{Direct implication}
	Consider the relation $\mathcal{R} = \{(T,U)\svert \text{$\vdash T$, $\vdash U$ and $T\subS U$}\}$. 
	We must show that $\mathcal{R}$ is an $\mathcal{XYZW}$-simulation with $\mathcal{X,Y,Z,W}$ as defined in \cref{def:hocfst-sim}. This will show that $\mathcal{R}\subseteq{\subS}$, and hence that $T\subS U$ implies $T\simS U$.
	
	The proof has two parts. First consider cases $(T,U)\in\mathcal{R}$ s.t. $\unravel(T)=T$ and $\unravel(U)=U$. We proceed by case analysis for the last rule in the derivation of $T \subS U$, which must be a progressing rule.
	
	(\textbf{Case} \textsc{S-Unit}): Then $T=\tUnit$ and $U=\tUnit$. The unique transition that can be applied to $\tUnit$ is $\transition{\tUnit}{\tUnit}{\tSkip}$ (\textsc{L-Unit}). Since $\tUnit \in \simSX,\simSY$, we should have that if $\transition{T}{\tUnit}{T'}$ for some $T'$, then  $\transition{U}{\tUnit}{U'}$ for some $U'$ with $(T',U')\in\mathcal{R}$, and also that if $\transition{U}{\tUnit}{U'}$ for some $U'$, then  $\transition{T}{\tUnit}{T'}$ for some $T'$ with $(T',U')\in\mathcal{R}$. It is readily verifiable that the single transitions of both $T$ and $U$ match each other. That $(\tSkip,\tSkip)\in\mathcal{R}$ follows from \textsc{S-Skip} and the definition of $\mathcal{R}$.
	
	(\textbf{Case} \textsc{S-Arrow}): Then $T=\tArrow{T_1}{m}{T_2}$, $U=\tArrow{U_1}{n}{U_2}$, $U_1\subS T_1$, $T_2\subS U_2$ and $m\sqsubseteq n$. The only transitions that can be applied to $T$ are $\transition{T}{\lArrowDomain}{T_1}$, $\transition{T}{\lArrowRange}{T_2}$ and, if $m={\lin}$, $\transition{T}{\lArrowLinear}{\tSkip}$ (\textsc{L-ArrowDom}, \textsc{L-ArrowRng} and \textsc{L-LinArrow}). Similarly, the only transitions applicable to $U$ are $\transition{U}{\lArrowDomain}{U_1}$, $\transition{U}{\lArrowRange}{U_2}$ and, if $n={\lin}$, $\transition{U}{\lArrowLinear}{\tSkip}$. 
	
	Since $\lArrowDomain\in \simSZ,\simSW$, we should have that if $\transition{T}{\lArrowDomain}{T'}$ for some $T'$, then  $\transition{U}{\lArrowDomain}{U'}$ for some $U'$ with $(U',T')\in\mathcal{R}$, and also that if $\transition{U}{\lArrowDomain}{U'}$ for some $U'$, then  $\transition{T}{\lArrowDomain}{T'}$ for some $T'$ with $(U',T')\in\mathcal{R}$. That the transitions of $T$ and $U$ by $\lArrowDomain$ match is readily verifiable, and that $(U_1,T_1)\in\mathcal{R}$ is given by $U_1 \subS T_1$ and the definition of $\mathcal{R}$. 
	
	Similarly, since $\lArrowRange\in \simSX,\simSY$, we should have that if $\transition{T}{\lArrowRange}{T'}$ for some $T'$, then  $\transition{U}{\lArrowRange}{U'}$ for some $U'$ with $(T',U')\in\mathcal{R}$, and also that if $\transition{U}{\lArrowRange}{U'}$ for some $U'$, then  $\transition{T}{\lArrowRange}{T'}$ for some $T'$ with $(T',U')\in\mathcal{R}$. That the transitions of $T$ and $U$ by $\lArrowRange$ match is readily verifiable, and that $(T_2,U_2)\in\mathcal{R}$ is given by $T_2 \subS U_2$ and the definition of $\mathcal{R}$. 
	
	Finally, as $\lArrowLinear\in\simSX$, we need to show that if $\transition{T}{\lArrowLinear}{T'}$ for some $T'$, then $\transition{U}{\lArrowLinear}{U'}$ for some $U'$ with $(T',U')\in\mathcal{R}$. As we have seen, $T$ can have at most one such transition, $\transition{T}{\lArrowLinear}{\tSkip}$, and only in the case where $m={\lin}$. From $m\sqsubseteq n$ follows that $n={\lin}$. From this follows that $\transition{U}{\lArrowLinear}{\tSkip}$. We arrive at pair $(\tSkip, \tSkip)$, which is obviously in $\mathcal{R}$.
	
	(\textbf{Case} \textsc{S-Rcd}): Then $T=\tRecord{\ell}{T_\ell}{L}$, $U=\tRecord{k}{U_k}{K}$ and $K\subseteq L$. The only transitions that can be applied to $T$ are $\transition{T}{\lRcdDefault}{\tSkip}$ and $\transition{T}{\lRcdLabel{i}}{T_i}$ for each $i\in L$ (\textsc{L-RcdVrt} and \textsc{L-RcdVrtField}). Similarly, the only transitions that can be applied to $U$ are $\transition{U}{\lRcdDefault}{\tSkip}$ and $\transition{U}{\lRcdLabel{j}}{U_j}$ for each $j\in K$. It is clear from the rules of type formation that $\vdash T_i$ and $\vdash U_j$ for each $i\in L, j\in K$ and, because \textsc{S-Rcd} was used, we know $T_j \subS U_j$ for each $j \in K$.
	
	Since $\lRcdDefault\in\simSX,\simSY$, we should have that if $\transition{T}{\lRcdDefault}{T'}$ for some $T'$, then  $\transition{U}{\lRcdDefault}{U'}$ for some $U'$ with $(T',U')\in\mathcal{R}$, and also that if $\transition{U}{\lRcdDefault}{U'}$ for some $U'$, then  $\transition{T}{\lRcdDefault}{T'}$ for some $T'$ with $(T',U')\in\mathcal{R}$. That the transitions of $T$ and $U$ by $\lRcdDefault$ match is readily verifiable, and that $(\tSkip,\tSkip)\in\mathcal{R}$ is also evident.
	
	Finally, since $\lRcdLabel{j}\in\simSY$ for each $j\in K$, we need to show that if $\transition{U}{\lRcdLabel{j}}{U'}$ for some $U'$, then $\transition{T}{\lRcdLabel{j}}{T'}$ for some $T'$ with $(T',U')\in\mathcal{R}$. As $K\subseteq L$, it is readily verifiable that $T$ matches every transition of $U$ by $\lRcdLabel{j}$ for each $j\in K$. That $(T_j,U_j)\in\mathcal{R}$ follows by $T_j \subS U_j$ for each $j \in K$ and the definition of $\mathcal{R}$.
	
	(\textbf{Case} \textsc{S-Vrt}): Then $T=\tVariant{\ell}{T_\ell}{L}$, $U=\tVariant{k}{U_k}{K}$ and $L\subseteq K$. The only transitions that can be applied to $T$ are $\transition{T}{\lVrtDefault}{\tSkip}$ and $\transition{T}{\lVrtDefault{i}}{T_i}$ for each $i\in L$ (\textsc{L-RcdVrt} and \textsc{L-RcdVrtField}). Similarly, the only transitions that can be applied to $U$ are $\transition{U}{\lVrtDefault}{\tSkip}$ and $\transition{U}{\lVrtLabel{j}}{U_j}$ for each $j\in K$. It is clear from the rules of type formation that $\vdash T_i$ and $\vdash U_j$ for each $i\in L, j\in K$ and, because \textsc{S-Vrt} was used, we know $T_i \subS U_i$ for each $i \in L$.
	
	Since $\lVrtDefault\in\simSX,\simSY$, we should have that if $\transition{T}{\lVrtDefault}{T'}$ for some $T'$, then  $\transition{U}{\lVrtDefault}{U'}$ for some $U'$ with $(T',U')\in\mathcal{R}$, and also that if $\transition{U}{\lVrtDefault}{U'}$ for some $U'$, then  $\transition{T}{\lVrtDefault}{T'}$ for some $T'$ with $(T',U')\in\mathcal{R}$. That the transitions of $T$ and $U$ by $\lVrtDefault$ match is readily verifiable, and that $(\tSkip,\tSkip)\in\mathcal{R}$ is also evident.
	
	Finally, since $\lVrtLabel{i}\in\simSX$ for each $i\in L$, we need to show that if $\transition{T}{\lVrtLabel{i}}{T'}$ for some $T'$, then $\transition{U}{\lVrtLabel{i}}{U'}$ for some $U'$ with $(T',U')\in\mathcal{R}$. As $L\subseteq K$, it is readily verifiable that $U$ matches every transition of $T$ by $\lVrtLabel{i}$ for each $i\in L$. That $(T_i,U_i)\in\mathcal{R}$ follows by $T_i \subS U_i$ for each $i \in L$ and the definition of $\mathcal{R}$.
	
	(\textbf{Case} \textsc{S-End}): Analogous to case \textsc{S-Unit}.
	
	(\textbf{Case} \textsc{S-In}): Then $T=\tIn{T'}$ and $U=\tIn{U'}$. The only transitions that can be applied to $T$ are $\transition{T}{\lInPayload}{T'}$ and $\transition{T}{\lInContinuation}{\tSkip}$ (\textsc{L-Msg1}, \textsc{L-Msg2}). Similarly, the only transitions that can be applied to $U$ are $\transition{U}{\lInPayload}{U'}$ and $\transition{U}{\lInContinuation}{\tSkip}$. It is clear from the rules of type formation that $\vdash T'$ and $\vdash U'$. Furthermore, because \textsc{S-In} was used, $T'\subS U'$. 
	
	Since $\lInPayload,\lInContinuation\in\simSX,\simSY$, we should have that if $\transition{T}{\lInPayload}{T'}$ for some $T'$, then  $\transition{U}{\lInPayload}{U'}$ for some $U'$ with $(T',U')\in\mathcal{R}$, and similarly for $\lInContinuation$. We must also have that if $\transition{U}{\lInPayload}{U'}$ for some $U'$, then  $\transition{T}{\lInPayload}{T'}$ for some $T'$ with $(T',U')\in\mathcal{R}$, and similarly for $\lInContinuation$. That the transitions of $T$ and $U$ by $\lInPayload$ and $\lInContinuation$ match is readily verifiable, and $(T',U'),(\tSkip,\tSkip)\in\mathcal{R}$ follows from $T'\subS U'$, \textsc{S-Skip} and from the definition of $\mathcal{R}$.
	
	(\textbf{Case} \textsc{S-Out}): Then $T=\tOut{T'}$ and $U=\tOut{U'}$. The only transitions that can be applied to $T$ are $\transition{T}{\lOutPayload}{T'}$ and $\transition{T}{\lOutContinuation}{\tSkip}$ (\textsc{L-Msg1}, \textsc{L-Msg2}). Similarly, the only transitions that can be applied to $U$ are $\transition{U}{\lOutPayload}{U'}$ and $\transition{U}{\lOutContinuation}{\tSkip}$. It is clear from the rules of type formation that $\vdash T'$ and $\vdash U'$. Furthermore, because \textsc{S-Out} was used, $U'\subS T'$. 
	
	Since $\lOutPayload\in \simSZ,\simSW$, we should have that if $\transition{T}{\lOutPayload}{T''}$ for some $T''$, then  $\transition{U}{\lOutPayload}{U''}$ for some $U''$ with $(U'',T'')\in\mathcal{R}$, and also that if $\transition{U}{\lOutPayload}{U''}$ for some $U''$, then  $\transition{T}{\lOutPayload}{T''}$ for some $T''$ with $(U'',T'')\in\mathcal{R}$. That the transitions of $T$ and $U$ by $\lOutPayload$ match is readily verifiable, and that $(U',T')\in\mathcal{R}$ is given by $U' \subS T'$ and the definition of $\mathcal{R}$.
	
	Finally, since $\lOutContinuation\in\simSX,\simSY$, we should have that if $\transition{T}{\lOutContinuation}{T'}$ for some $T'$, then  $\transition{U}{\lOutContinuation}{U'}$ for some $U'$ with $(T',U')\in\mathcal{R}$, and also that if $\transition{U}{\lOutContinuation}{U'}$ for some $U'$, then  $\transition{T}{\lOutContinuation}{T'}$ for some $T'$ with $(T',U')\in\mathcal{R}$. That the transitions of $T$ and $U$ by $\lOutContinuation$ match is readily verifiable, and that $(\tSkip,\tSkip)\in\mathcal{R}$ follows from \textsc{S-Skip} and the definition of $\mathcal{R}$.
	
	(\textbf{Case} \textsc{S-ExtChoice}): Analogous to case \textsc{S-Vrt}.
	
	(\textbf{Case} \textsc{S-IntChoice}): Analogous to case \textsc{S-Rcd}.
	
	(\textbf{Case} \textsc{S-Skip}): Then $T=\tSkip$ and $U=\tSkip$. Since no transitions apply to $\tSkip$, the conditions for $\mathcal{XYZW}$-simulation trivially hold.
	
	(\textbf{Case} \textsc{S-EndSeq1L}): Then $T=\tSeq{\tEnd}{S}$ and $U=\tEnd$. The only transition that can be applied to $T$ is $\transition{T}{\tEnd}{\tSkip}$ (\textsc{L-EndSeq}). Similarly, the only transition that can be applied to $U$ is $\transition{U}{\tEnd}{\tSkip}$ (\textsc{L-End}).
	
	Since $\tEnd\in\simSX,\simSY$, we should have that if $\transition{T}{\tEnd}{T'}$ for some $T'$, then  $\transition{U}{\tEnd}{U'}$ for some $U'$ with $(T',U')\in\mathcal{R}$, and also that if $\transition{U}{\tEnd}{U'}$ for some $U'$, then  $\transition{T}{\tEnd}{T'}$ for some $T'$ with $(T',U')\in\mathcal{R}$. That the transitions of $T$ and $U$ by $\tEnd$ match is readily verifiable, and $(\tSkip,\tSkip)\in\mathcal{R}$ follows from \textsc{S-Skip} and the definition of $\mathcal{R}$.
	
	(\textbf{Case} \textsc{S-EndSeq1R}, \textsc{S-EndSeq2}): Analogous to case \textsc{S-EndSeq1L}.
	
	(\textbf{Case} \textsc{S-InSeq1L}): Then $T=\tSeq{\tIn{T'}}{S}$ and $U=\tIn{U'}$. The only transitions that can be applied to $T$ are $\transition{T}{\lInPayload}{T'}$ and $\transition{T}{\lInContinuation}{S}$ (\textsc{L-MsgSeq1}, \textsc{MsgSeq2}). Similarly, the only transitions that can be applied to $U$ are $\transition{U}{\lInPayload}{U'}$ and $\transition{U}{\lInContinuation}{\tSkip}$. It is clear from the rules of type formation that $\vdash T'$,$\vdash U'$ and $\vdash S$. Furthermore, because \textsc{S-InSeq1L} was used, $T'\subS U'$ and $S\subS \tSkip$. 
	
	Since $\lInPayload,\lInContinuation\in\simSX,\simSY$, we should have that if $\transition{T}{\lInPayload}{T'}$ for some $T'$, then  $\transition{U}{\lInPayload}{U'}$ for some $U'$ with $(T',U')\in\mathcal{R}$, and similarly for $\lInContinuation$. For the same reason, we must also have that if $\transition{U}{\lInPayload}{U'}$ for some $U'$, then  $\transition{T}{\lInPayload}{T'}$ for some $T'$ with $(T',U')\in\mathcal{R}$, and similarly for $\lInContinuation$. That the transitions of $T$ and $U$ by $\lInPayload$ and $\lInContinuation$ match is readily verifiable, and $(T',U'),(S,\tSkip)\in\mathcal{R}$ follows from $T'\subS U'$, from $S \subS \tSkip$ and from the definition of $\mathcal{R}$.
	
	(\textbf{Case} \textsc{S-InSeq1R}, \textsc{S-InSeq2}): Analogous to case \textsc{S-InSeq1L}.
	
	(\textbf{Case} \textsc{S-OutSeq1L}): Then $T=\tSeq{\tOut{T'}}{S}$ and $U=\tOut{U'}$. The only transitions that can be applied to $T$ are $\transition{T}{\lOutPayload}{T'}$ and $\transition{T}{\lOutContinuation}{S}$ (\textsc{L-MsgSeq1}, \textsc{MsgSeq2}). Similarly, the only transitions that can be applied to $U$ are $\transition{U}{\lOutPayload}{U'}$ and $\transition{U}{\lOutContinuation}{\tSkip}$. It is clear from the rules of type formation that $\vdash T'$, $\vdash U'$ and $\vdash S$. Furthermore, because \textsc{S-OutSeq1L} was used, $U'\subS T'$ and $S\subS \tSkip$. 
	
	Since $\lOutPayload\in \simSZ,\simSW$, we should have that if $\transition{T}{\lOutPayload}{T''}$ for some $T''$, then  $\transition{U}{\lOutPayload}{U''}$ for some $U''$ with $(U'',T'')\in\mathcal{R}$, and also that if $\transition{U}{\lOutPayload}{U''}$ for some $U''$, then  $\transition{T}{\lOutPayload}{T''}$ for some $T''$ with $(U'',T'')\in\mathcal{R}$. That the transitions of $T$ and $U$ by $\lOutPayload$ match is readily verifiable, and that $(U',T')\in\mathcal{R}$ is given by $U' \subS T'$ and the definition of $\mathcal{R}$.
	
	Finally, since $\lOutContinuation\in\simSX,\simSY$, we should have that if $\transition{T}{\lOutContinuation}{T'}$ for some $T'$, then  $\transition{U}{\lOutContinuation}{U'}$ for some $U'$ with $(T',U')\in\mathcal{R}$, and also that if $\transition{U}{\lOutContinuation}{U'}$ for some $U'$, then  $\transition{T}{\lOutContinuation}{T'}$ for some $T'$ with $(T',U')\in\mathcal{R}$. That the transitions of $T$ and $U$ by $\lOutContinuation$ match is readily verifiable, and $(S,\tSkip)\in\mathcal{R}$ follows from $S \subS \tSkip$ and the definition of $\mathcal{R}$.
	
	(\textbf{Case} \textsc{S-OutSeq1R}, \textsc{S-OutSeq2}): Analogous to case \textsc{S-OutSeq1L}.
	
	Now consider that $T \neq \unravel(T)$. From $T\subS U$ follows that $\unravel(T)\subS U$, resulting from  the application of right-preserving rules. If $U=\unravel(U)$, then the above case analysis shows that the conditions for $\mathcal{XYZW}$-simulation between $\unravel(T)$ and $U$ hold. Since $T$ and $\unravel(T)$ have the same transitions, i.e., $\transition{T}{a}{T'}$ iff $\transition{\unravel(T)}{a}{T'}$ for some $T'$, the conditions for $\mathcal{XYZW}$-simulation between $T$ and $U$ also hold. Otherwise, if $U\neq\unravel(U)$, it similarly follows that $\unravel(T)\subS\unravel(U)$, resulting from the application of left-preserving rules. The previous case analysis shows that the conditions for $\mathcal{XYZW}$-simulation between $\unravel(T)$ and $\unravel(U)$ hold. Since $U$ and $\unravel(U)$ have the same transitions, the same conditions also hold between $T$ and $U$. The case with $T=\unravel(U), U\neq \unravel(U)$ is analogous.\smallskip
	\paragraph*{Reverse implication}
	Consider the relation $\mathcal{S} = \{(T,U)\svert\text{$\wellFormed{}{T}$, $\wellFormed{}{U}$ and $T \simS U$}\}$.
	We prove that relation $\mathcal{S}$ is backward closed for the rules of the syntactic subtyping relation. This will show that $\mathcal{S}\subseteq {\subS}$, and hence that $T\simS U$ implies $T \subS U$.
	
	The proof has two parts. First, consider the cases where both $T$ and $U$ fit a type constructor, i.e., $T=\unravel(T)$ and $U=\unravel(U)$. We proceed by case analysis on the structure of $T$.
	
	(\textbf{Case} $T=\tUnit$): The only transition that applies to $T$ is $\transition{T}{\tUnit}{\tSkip}$. Since $T\simS U$ and $U=\unravel(U)$, then $U=\tUnit$. Therefore we can apply \textsc{S-Unit}.
	
	(\textbf{Case} $T=\tArrow{T_1}{m}{T_2}$): Two transitions apply to $T$ regardless of $m$: $\transition{T}{\lArrowDomain}{T_1}$ and $\transition{T}{\lArrowRange}{T_2}$. Since $T \simS U$ and $U = \unravel(U)$, we know that $U = \tArrow{U_1}{n}{U_2}$ and that, regardless of $n$, $\transition{U}{\lArrowDomain}{U_1}$ and $\transition{U}{\lArrowRange}{U_2}$. Furthermore, we know that $U_1 \simS T_1$ (since ${\lArrowDomain}\in\simSZ,\simSW$) and that  $T_2 \simS U_2$ (since ${\lArrowRange} \in \simSX,\simSY$).
	
	Before we apply \textsc{S-Arrow}, we need to have $m\sqsubseteq n$. We know that $\transition{T}{\lArrowLinear}{\tSkip}$ iff $m={\lin}$ and that $\transition{U}{\lArrowLinear}{\tSkip}$ iff $n={\lin}$. Recall that ${\lArrowLinear}\in\simSX$, which means that the only case where $n\not\sqsubseteq m$ ($m=1$ and $n=*$) cannot occur, for it would contradict $T \simS U$ (since $U$ could not match a transition of $T$ by a label in $\simSX$). 
	
	We can therefore apply \textsc{S-Arrow}, arriving at $(U_1,T_1),(T_2,U_2)\in\mathcal{S}$. 
	
	(\textbf{Case} $T=\tRecord{\ell}{T_\ell}{L}$): The only transitions that can be applied to $T$ are $\transition{T}{\lRcdDefault}{\tSkip}$ and $\transition{T}{\lRcdLabel{i}}{T_i}$ for each $i\in L$. Since $T \simS U$ and $U=\unravel(U)$, we have $U = \tRecord{k}{U_k}{K}$ with transitions $\transition{U}{\lRcdDefault}{\tSkip}$ and $\transition{U}{\lRcdLabel{j}}{U_j}$ for $j \in K$. Since labels of the form $\lRcdLabel{\ell}$ belong to $\simSY$, we know that $K\subseteq L$, for $T$ must be able to match all transitions of $U$ by $\lRcdLabel{j}$ for each $j\in K$. From this we obtain $T_j \simS U_j$ for each $j\in K$, arriving at $(T_j,U_j)\in\mathcal{S}$ for each $j\in K$. 
	
	(\textbf{Case} $T=\tVariant{\ell}{T_\ell}{L}$): The only transitions that can be applied to $T$ are $\transition{T}{\lVrtDefault}{\tSkip}$ and $\transition{T}{\lVrtLabel{i}}{T_i}$ for each $i\in L$. Since $T \simS U$ and $U=\unravel(U)$, we have $U = \tVariant{k}{U_k}{K}$ with transitions $\transition{U}{\lVrtDefault}{\tSkip}$ and $\transition{U}{\lVrtLabel{j}}{U_j}$ for $j \in K$. Since labels of the form $\lVrtLabel{\ell}$ belong to $\simSX$, we know that $L\subseteq K$, for $U$ must be able to match all transitions of $T$ by $\lVrtLabel{i}$ for each $i\in L$. From this we obtain $T_i \simS U_i$ for each $i\in L$, arriving at $(T_i,U_i)\in\mathcal{S}$ for each $i\in L$. 
	
	(\textbf{Case} $T=x$): Cannot occur, since $\not\wellFormed{}{T}$.
	
	(\textbf{Case} $T=\tEnd$): Analogous to the case where $T=\tUnit$. 
	
	(\textbf{Case} $T=\tIn{T'}$): The only transitions applicable to $T$ are $\transition{T}{\lInPayload}{T'}$ and $\transition{T}{\lInContinuation}{\tSkip}$. Since $T \simS U$ and $U = \unravel(U)$, then either $U=\tIn{U'}$ or $U=\tSeq{\tIn{U'}}{V}$ with $V \simS \tSkip$. In either case, $T' \simS U'$. In the first case, we can apply \textsc{S-In}, arriving at $(T',U')\in\mathcal{S}$. In the second case, we can apply \textsc{S-InSeq1R}, arriving at $(T',U'), (V,\tSkip)\in\mathcal{S}$.
	
	(\textbf{Case} $T=\tOut{T'}$): The only transitions applicable to $T$ are $\transition{T}{\lOutPayload}{T'}$ and $\transition{T}{\lOutContinuation}{\tSkip}$. Since $T \simS U$ and $U = \unravel(U)$, then either $U=\tOut{U'}$ or $U=\tSeq{\tOut{U'}}{V}$ with $V \simS \tSkip$. In either case, $U' \simS T'$. In the first case, we can apply \textsc{S-Out}, arriving at $(U',T')\in\mathcal{S}$. In the second case we can apply \textsc{S-OutSeq1R}, arriving at $(T',U'),(V,\tSkip)\in\mathcal{S}$.
	
	(\textbf{Case} $T=\tIntChoice{\ell}{S_\ell}{L}$): Analogous to case $T=\tRecord{\ell}{T_\ell}{L}$.
	
	(\textbf{Case} $T=\tExtChoice{\ell}{S_\ell}{L}$): Analogous to case $T=\tVariant{\ell}{T_\ell}{L}$.
	
	(\textbf{Case} $T=\tSkip$): No transitions apply to $T$. Since $T\simS U$ and $U=\unravel(U)$, then $U=\tSkip$. Therefore we can apply \textsc{S-Skip}.
	
	(\textbf{Case} $T=\tSeq{\tIn{T_1}}{T_2}$): The only transitions applicable to $T$ are $\transition{T}{\lInPayload}{T_1}$ and $\transition{T}{\lInContinuation}{T_2}$. Since $T \simS U$ and $U = \unravel(U)$, then either $U=\tIn{U_1}$ or $U=\tSeq{\tIn{U_1}}{U_2}$. In the first case, $T_1 \simS U_1$ and $T_2 \simS \tSkip$; we can apply \textsc{S-InSeq1L}, arriving at $(T_1,U_1)\in\mathcal{S}$. In the second case, $T_1 \simS U_1$ and $T_2 \simS U_2$; we can apply \textsc{S-InSeq2}, arriving at $(T_1,U_1),(T_2,U_2)\in\mathcal{S}$.
	
	(\textbf{Case} $T=\tSeq{\tOut{T_1}}{T_2}$): The only transitions applicable to $T$ are $\transition{T}{\lOutPayload}{T_1}$ and $\transition{T}{\lOutContinuation}{T_2}$. Since $T \simS U$ and $U = \unravel(U)$, then either $U=\tOut{U_1}$ or $U=\tOut{\tIn{U_1}}{U_2}$. In the first case, $U_1 \simS T_1$ and $T_2 \simS \tSkip$; we can apply \textsc{S-OutSeq1L}, arriving at $(U_1,T_1)\in\mathcal{S}$. In the second case, $U_1 \simS T_1$ and $T_2 \simS U_2$; we can apply \textsc{S-OutSeq2}, arriving at $(U_1,T_1),(T_2,U_2)\in\mathcal{S}$.
	
	Now consider that $T \neq \unravel(T)$. From $T \simS U$ it follows that $T' \simS U$ where $T'=\unrOne(T)$, due to the fact that $T$ and $T'$ have the same transitions (\cref{lemma:unraveling}). Then we can apply an appropriate right-preserving rule to $(T,U)$, arriving at $(T',U)\in\mathcal{S}$. The case with $T=\unravel(T),U\neq\unravel(U)$ is analogous.
\end{proof}
\section{Proof of Theorem \ref{thm:soundness-grammars}} \label{sec:proof-soundness-grammars}
\begin{lemma}
	\label{lemma:bisim-implies-sim}
	Let $\vec X$ and $\vec Y$ be words from a simple GNF grammar with productions $\prods$. If $\vec X \sim_\prods \vec Y$, then $\vec X \simS_\prods \vec Y$ and $\vec Y \simS_\prods \vec X$. 
\end{lemma}
\begin{proof}
	Consider the relation $\mathcal{R} = \{(\vec X,\vec Y) \mid \vec X \sim_\prods \vec Y\}$. We want to show that $\mathcal{R}\subseteq{\simS_\prods}$ and $\mathcal{R}\subseteq{\lesssim_\prods^{-1}}$. 
	
	To show that $\mathcal{R}\subseteq{\simS_\prods}$, we show that for all $(\vec X,\vec Y)\in\mathcal{R}$: for every label $a\in\simSX\cup\simSZ$ and word $\vec X$, if $\transitionGNF{\prods}{\vec X}{a}{\vec X'}$, then there exists a word $\vec Y'$ s.t. $\transitionGNF{\prods}{\vec Y}{a}{\vec Y'}$ and $(\vec X',\vec Y')\in\mathcal{R}$ if $a\in\simSX$, or otherwise $(\vec Y',\vec X')\in\mathcal{R}$ if $a\in\simSZ$; and, for every label $a\in\simSY\cup\simSW$ and word $\vec Y$, if $\transitionGNF{\prods}{\vec Y}{a}{\vec Y'}$, then there exists a word $\vec X'$ s.t. $\transitionGNF{\prods}{\vec X}{a}{\vec X'}$ and $(\vec X',\vec Y')\in\mathcal{R}$ if $a\in\simSY$, or otherwise $(\vec Y',\vec X')\in\mathcal{R}$ if $a\in\simSW$. 
	
	First, let there be $a \in \simSX$ and $\vec X'$ s.t. $\transition{\vec X}{a}{\vec X'}$. Since $\vec X \sim_\prods \vec Y$, we know there is a $\vec Y'$ s.t. $\transitionGNF{\prods}{\vec Y}{a}{\vec Y'}$ and $\vec X' \sim_\prods \vec Y'$ and hence that $(\vec X',\vec Y')\in\mathcal{R}$. 
	
	Next, let there be $a\in\simSY$ and $\vec Y'$ s.t. $\transition{\vec Y}{a}{\vec Y'}$. Since $\vec X \sim_\prods \vec Y$, we know there is a $\vec X'$ s.t. $\transition{\vec X}{a}{\vec X'}$ and $\vec X'\sim_\prods \vec Y'$. Hence $(\vec X',\vec Y')\in\mathcal{R}$. 
	
	Now let there be $a\in\simSZ$ and $\vec X'$ s.t. $\transition{\vec X}{a}{\vec X'}$. Since $\vec X \sim_\prods \vec Y$, we know there is a $\vec Y'$ s.t. $\transition{\vec Y}{a}{\vec Y'}$ and $\vec X' \sim_\prods \vec Y'$. By the symmetry of $\sim_\prods$ we get $\vec Y' \sim_\prods \vec X'$, and therefore $(\vec Y',\vec X')\in\mathcal{R}$. 
	
	Finally, let there be $a\in\simSW$ and $\vec Y'$ s.t. $\transition{\vec Y}{a}{\vec Y'}$. Since $\vec X \sim_\prods \vec Y$, we know there is a $\vec X'$ s.t. $\transition{\vec X}{a}{\vec X'}$ and $\vec X'\sim_\prods \vec Y'$. By the symmetry of $\sim_\prods$ we get $\vec Y' \sim_\prods \vec X'$, and therefore $(\vec Y',\vec X')\in\mathcal{R}$.
	
	We proceed similarly to show that $\mathcal{R}\subseteq{\simS^{-1}_\prods}$. For all $(\vec X,\vec Y)\in\mathcal{R}$: for every label $a\in\simSX\cup\simSZ$ and type $\vec Y'$, if $\transitionGNF{\prods}{\vec Y}{a}{\vec Y'}$, then there exists a word $\vec X'$ s.t. $\transitionGNF{\prods}{\vec X}{a}{\vec X'}$ and $(\vec X',\vec Y')\in\mathcal{R}$ if $a\in\simSX$, or otherwise $(\vec Y',\vec X')\in\mathcal{R}$ if $a\in\simSZ$; and, for every label $a\in\simSY\cup\simSW$ and word $\vec X'$, if $\transitionGNF{\prods}{\vec X}{a}{\vec X'}$, then there exists a word $\vec Y'$ s.t. $\transitionGNF{\prods}{\vec Y}{a}{\vec Y'}$ and $(\vec X',\vec Y')\in\mathcal{R}$ if $a\in\simSY$, or otherwise $(\vec Y',\vec X')\in\mathcal{R}$ if $a\in\simSW$.
	
	First, let there be $a\in\simSX$ and $\vec Y'$ s.t. $\transition{\vec Y}{a}{\vec Y'}$. Since $\vec X\sim_\prods \vec Y$, we know there is a $\vec X'$ s.t. $\transitionGNF{\prods}{\vec X}{a}{\vec X'}$ and $\vec X'\sim_\prods \vec Y'$. Hence $(\vec X',\vec Y')\in\mathcal{R}$. 
	
	Next, let there be $a\in\simSY$ and $\vec X'$ s.t. $\transition{\vec X}{a}{\vec X'}$. Since $\vec X \sim_\prods U$, we know there is a $\vec Y'$ s.t. $\transitionGNF{\prods}{\vec Y}{a}{\vec Y'}$ and $\vec X'\sim_\prods \vec Y'$. Hence $(\vec X',\vec Y')\in\mathcal{R}$.
	
	Now let there be $a\in\simSZ$ and $\vec Y'$ s.t. $\transition{\vec Y}{a}{\vec Y'}$. Since $\vec X\sim \vec Y$, we know there is a $\vec X'$ s.t. $\transition{\vec X}{a}{\vec X'}$ and $\vec X'\sim_\prods \vec Y'$. By the symmetry of $\sim_\prods$ we get $\vec Y'\sim_\prods \vec X'$, and therefore $(\vec Y',\vec X')\in\mathcal{R}$.  
	
	Finally, let there be $a\in\simSW$ and $\vec X'$ s.t. $\transition{\vec X}{a}{\vec X'}$. Since $\vec X\sim \vec Y$ we know there is $\vec Y'$ s.t. $\transition{\vec Y}{a}{\vec Y'}$ and $\vec X'\sim \vec Y'$. By the symmetry of $\sim_\prods$ we get $\vec Y'\sim_\prods \vec X'$, and therefore $(\vec Y',\vec X')\in\mathcal{R}$.
\end{proof}
\begin{lemma}
	\label{lemma:bot-bisim}
	Let $T$ be a type and let $\mathcal{N}$ be the set of non-terminal symbols in the simple GNF grammar $\grm(T,\prods_0)$ for any $\prods_0$. For every $\vec X,\vec Y \in \mathcal{N}^*$ we have that  $\vec X \sim_\prods \vec X\bot\vec Y$.
\end{lemma}
\begin{proof}
	Immediate from the fact that $\bot$ has no productions.
\end{proof}
\begin{lemma}[The behaviours of types and their words coincide]
	\label{lemma:behaviour-types-words-coincide}
	Let $\wellFormed{}{T}$ and $(\vec{X}_T, \prods)=\grm(T,\prods_0)$ for some $\prods_0$. Then,
	\begin{itemize}
		\item If $\transition{T}{a}{U}$ for some $U$, then there exists $\vec{X}'$ such that $\transitionGNF{\prods}{\vec{X}_T}{a}{\vec{X}'}$ and $\vec X_U \sim_\prods \vec{X}'$, where $(\vec X_U,\prods') = \grm(U,\prods)$;
		\item If $\transitionGNF{\prods}{\vec{X}_T}{a}{\vec{X}'}$ for some $\vec{X}'$, then there exists $U$ such that $\transition{T}{a}{U}$ and $\vec X' \sim_\prods \vec X_U$, where $(\vec X_U,\prods') = \grm(U,\prods)$.
	\end{itemize}
\end{lemma}
\begin{proof}
	Let us define the relation
	\begin{gather*}
		\mathcal{R} = \{(T,\vec{Y}) \mathrel{|} \text{$\vec{X}_T \sim_\prods \vec{Y}$, where $(\vec{X}_T,\prods)=\grm(T,\emptyset)$}\}.
	\end{gather*}
	We now prove that $\mathcal{R}$ is backward closed for the transition relations, showing the desired property. We begin by considering the cases in which $T$ fits a type constructor, i.e., $T = \unravel(T)$.
	We have the following case analysis for $T$.
	
	(\textbf{Case} $T=\tUnit$): by rule \textsc{L-Unit}, the LTS at $T$ has the unique transition $\transition{T}{\tUnit}{\tSkip}$. Similarly, the LTS at $\vec X_T$ has the unique transition $\transitionGNF{\prods}{\vec X_T}{\tUnit}{\varepsilon}$. Since $\vec X_T \sim_\prods \vec Y$, the LTS at $\vec Y$ has the unique transition $\transitionGNF{\mathcal P}{\vec Y}{\tUnit}{\vec Y'}$ for some $\vec Y$ such that $\vec Y \sim_\prods \varepsilon$. Since $\word(\tSkip)=\varepsilon$, we have $\word(\tSkip) \sim_\prods \vec Y'$ and therefore $(\tSkip, \vec Y')\in \mathcal{R}$.
	
	(\textbf{Case} $T=\tArrow{U}{*}{V}$): By rules \textsc{L-ArrowDom} and \textsc{L-ArrowRng}, we know that $T$ has exactly two transitions: $\transition{T}{\lArrowDomain}{U}$ and $\transition{T}{\lArrowRange}{V}$ (rule \textsc{L-LinArrow} does not apply). Similarly, the LTS at $\vec X_T$ has exactly two transitions: $\transitionGNF{\prods}{\vec X_T}{\lArrowDomain}{\word(U)}$ and $\transitionGNF{\prods}{\vec X_T}{\lArrowRange}{\word(V)}$. Since $\vec{X}_T\sim_\prods \vec Y$, the LTS at $\vec Y$ has exactly two transitions, $\transitionGNF{\prods}{\vec Y}{\lArrowDomain}{\vec Y_1}$ and $\transitionGNF{\prods}{\vec Y}{\lArrowDomain}{\vec Y_2}$ for some $\vec Y_1,\vec Y_2$ s.t. $\vec Y_1\sim_\prods \word(U)$ and $\vec Y_2\sim_\prods\word(V)$. Hence $(U,\vec Y_1),(V, \vec Y_2)\in\mathcal{R}$.
	
	(\textbf{Case} $T=\tArrow{U}{{\lin}}{V}$): Similar to the previous case, but the LTS at $T$ has a single additional transition in $\transition{T}{\lArrowLinear}{\tSkip}$, as does the LTS at $\vec X_T$ in $\transitionGNF{\prods}{\vec X_T}{\lArrowLinear}{\varepsilon}$. Since $\vec X_T \sim_\prods \vec Y$, we know $\vec Y$ also has an additional transition, $\transitionGNF{\prods}{\vec Y}{\lArrowLinear}{\vec Y'}$ for some $\vec Y'$ s.t. $\vec Y'\sim_\prods\varepsilon$. Since $\word(\tSkip)=\varepsilon$, we have $(\tSkip,\vec Y')\in\mathcal{R}$.
	
	(\textbf{Case} $T = \tRcdVrt{\ell}{T}{L}$): By rules \textsc{L-RcdVrt} and \textsc{L-RcdVrtField}, the LTS at $T$ has exactly the transitions $\transition{T}{\llparenthesis\rrparenthesis_\checkmark}{\tSkip}$ and $\transition{T}{\llparenthesis\rrparenthesis_k}{T_k}$ for each $k\in L$. Similarly, the LTS at $\vec X_T$ has exactly the transitions  $\transitionGNF{\prods}{\vec X_T}{\llparenthesis\rrparenthesis_\checkmark}{\bot}$ and $\transitionGNF{\prods}{\vec X_T}{\llparenthesis\rrparenthesis_k}{\word(T_k)}$ for each $k\in L$. Since $\vec X_T \sim_\prods \vec Y$, the LTS at $\vec Y$ has exactly the transitions $\transitionGNF{\prods}{\vec Y}{\llparenthesis\rrparenthesis_k}{\vec Y_k}$ for some $\vec Y_k$ s.t. $\word(T_k)\sim_\prods \vec Y$ for each $k \in L$ and additionally $\transitionGNF{\prods}{\vec Y}{\llparenthesis\rrparenthesis_\checkmark}{\vec Y_\checkmark}$ for some $\vec Y_\checkmark$ s.t. $\vec Y_\checkmark \sim_\prods \bot$. It follows that $(T_k, \vec Y_k)\in\mathcal{R}$ for each $k\in L$. Observe that, since neither $\varepsilon$ or $\bot$ have any transitions, we have $\varepsilon \sim_\prods \bot$. From this and $\word(\tSkip)=\varepsilon$ it follows that $(\tSkip, \vec Y_\checkmark)\in \mathcal{R}$.
	
	(\textbf{Case} $T=x$): Cannot occur, since $\not\wellFormed{}{x}$.
	
	(\textbf{Case} $T=\tEnd$): by rule \textsc{L-End}, the LTS at $T$ has the unique transition $\transition{T}{\tEnd}{\tSkip}$. Similarly, the LTS at $\vec X_T$ has the unique transition $\transitionGNF{\prods}{\vec X_T}{\tEnd}{\bot}$. Since $\vec X_T \sim_\prods \vec Y$, the LTS at $\vec Y$ has the unique transition $\transitionGNF{\mathcal P}{\vec Y}{\tEnd}{\vec Y'}$ for some $\vec Y$ such that $\vec Y \sim_\prods \bot$. Since $\word(\tSkip)=\varepsilon$ and $\varepsilon \sim_\prods \bot$, we have $\word(\tSkip) \sim_\prods \vec Y'$ and therefore $(\tSkip, \vec Y')\in \mathcal{R}$.
	
	(\textbf{Case}  $T=\tMsg{U}$): By rule \textsc{L-Msg}, the LTS at $T$ has exactly two transitions $\transition{T}{\lMsgPayload}{U}$ and $\transition{T}{\lMsgContinuation}{\tSkip}$. Similarly, the LTS at $\vec X_T$ has exactly two transitions $\transitionGNF{\prods}{\vec X_T}{\lMsgPayload}{\word(U)\bot}$ and $\transitionGNF{\prods}{\vec X_T}{\lMsgContinuation}{\varepsilon}$. Since $\vec X_T \sim_\prods \vec Y$, the LTS at $\vec Y$ has exactly two transitions $\transitionGNF{\prods}{\vec Y}{\lMsgPayload}{\vec Y_1}$ and $\transitionGNF{\prods}{\vec Y}{\lMsgContinuation}{\vec Y_2}$ for some $\vec Y_1,\vec Y_2$ s.t. $\word(U)\bot \sim_\prods \vec Y_1$ and $\varepsilon \sim_\prods \vec Y_2$. By Lemma \ref{lemma:bot-bisim}, we have $\word(U) \sim_\prods \word(U)\bot \sim_\prods \vec Y_1$, and by the definition of $\grm$, $\word(\tSkip) \sim_\prods \varepsilon \sim_\prods \vec Y_2$. Hence $(U, \vec Y_1), (\tSkip, \vec Y_2) \in \mathcal{R}$. 
	
	(\textbf{Case} $T = \tChoice{\ell}{T}{L}$): By rules \textsc{L-RcdVrt} and \textsc{L-RcdVrtField}, the LTS at $T$ has exactly the transitions $\transition{T}{\choicebranch_\checkmark}{\tSkip}$ and $\transition{T}{\choicebranch_k}{T_k}$ for each $k\in L$. Similarly, the LTS at $\vec X_T$ has transitions $\transitionGNF{\prods}{\vec X_T}{\choicebranch_\checkmark}{\bot}$ and $\transitionGNF{\prods}{\vec X_T}{\choicebranch_k}{\word(T_k)}$ for each $k\in L$. Since $\vec X_T \sim_\prods \vec Y$, the LTS at $\vec Y$ has exactly the transitions $\transitionGNF{\prods}{\vec Y}{\choicebranch_\checkmark}{\vec Y_\checkmark}$ for some $\vec Y_\checkmark$ s.t. $\vec Y_\checkmark \sim_\prods \bot$ and $\transitionGNF{\prods}{\vec Y}{\choicebranch_k}{\vec Y_k}$ for some $\vec Y_k$ s.t. $\word(T_k)\sim_\prods \vec Y_k$ for each $k \in L$. Since $\bot$ has no productions, we have $\bot \sim_\prods \varepsilon$ and by transitivity $\vec Y \sim_\prods \varepsilon$. Hence $(\tSkip, \vec Y_\checkmark)\in \mathcal{R}$ and $(T_k, \vec Y_k) \in \mathcal{R}$ for each $k\in L$.
	
	(\textbf{Case} $T=\tSeq{\inout U}{S}$): By rules \textsc{L-MsgSeq1} and \textsc{L-MsgSeq2}, the LTS at $T$  has exactly two transitions, $\transition{T}{\lMsgPayload}{U}$ and $\transition{T}{\lMsgContinuation}{S}$. Given that $\word(T) = \word(\inout U)\cdot\word(S)$ and that $\word(\inout U)$ yields exactly two productions $\word(\inout U) \to \lMsgPayload \word(U) \bot$ and $\word(\inout U) \to \lMsgContinuation \word(S)$, we have for the LTS at $\vec X_T$ exactly two transitions $\transitionGNF{\mathcal{P}}{X_T}{\lMsgPayload}{\word(U) \bot \word(S)}$ and $\transitionGNF{\mathcal{P}}{X_T}{\lMsgContinuation}{\word(S)}$. Since $X_T \sim_\prods \vec Y$, the LTS at $\vec Y$ has exactly two transitions $\transitionGNF{\prods}{\vec Y}{\lMsgPayload}{\vec Y_1}$ and $\transitionGNF{\prods}{\vec Y}{\lMsgContinuation}{\vec Y_2}$ for some $\vec Y_1, \vec Y_2$ s.t. $\word(U) \bot \word(S) \sim_\prods \vec Y_1$ and $\word(S) \sim_\prods \vec Y_2$. By Lemma \ref{lemma:bot-bisim}, we have $\word(U) \sim_\prods \word(U)\bot \word(S) \sim_\prods \vec Y_1$. Hence $(U,\vec Y_1),(S, \vec Y_2)\in\mathcal{R}$.
	
	(\textbf{Case} $T=\tSeq{x}{S}$): Cannot occur, since $\not\wellFormed{}{\tSeq{x}{S}}$.
	
	Now we consider the cases in which $T \neq \unravel(T)$. It is straightforward to check that the LTS at $T$ has a transition $\transition{T}{a}{U}$ iff the LTS at $\unravel(T)$ has a corresponding transition $\transition{\unravel(T)}{a}{U}$ (with the same $U$). This is a consequence of the fact that the LTS rules for $T$ essentially follow its unfolding, which eventually terminates due to contractivity. On the side of grammars, we have $\word(\unravel(T)) \sim_\prods \word(T) \sim_\prods \vec Y$. Now suppose that $\transition{T}{a}{U}$; then $\transition{\unravel(T)}{a}{U}$; since $\word(\unravel(T)) \sim_\prods \vec Y$, our previous case analysis yields $\transition{\vec Y}{a}{\vec Y'}$ for some $\vec Y'$ s.t. $\word(U)\sim_\prods\vec Y'$; concluding that $(U, \vec Y') \in \mathcal{R}$. Conversely, suppose that $\transitionGNF{\prods}{\vec Y}{a}{\vec Y'}$ for some $\vec Y'$. Since $\word(\unravel(T)) \sim_\prods \vec Y$, our previous case analysis yields $\transitionGNF{\prods}{\unravel(T)}{a}{U}$ for some $U$ s.t. $\word(U) \sim_\prods \vec Y'$. We conclude that $\transition{T}{a}{U}$ and therefore that $(U, \vec Y') \in \mathcal{R}$.
\end{proof}
We are now able to prove \cref{thm:soundness-grammars}.
\sfg*
\begin{proof}
	Consider the following relation on pairs of types
	\begin{gather*}
		\mathcal{R}=\{(V,W) \,|\, \vec{X}_V \simS_\prods \vec{X}_W\}\text{,}
	\end{gather*}
	where $\wellFormed{}{V}$, $\wellFormed{}{W}$, $(\vec{X}_V,\prods') = \grm(V,\emptyset)$ and $(\vec{X}_W,\prods)=\grm(W,\prods')$.
	
	To prove the desired property, we show that $\mathcal{R}\subseteq{\simS}$, in other words, that $\mathcal{R}$ is an $\mathcal{XYZW}$-simulation with $\mathcal{X,Y,Z,W}$ defined as the sets generated, respectively, by the grammars for $a_\mathcal{X},a_\mathcal{Y},a_\mathcal{Z},a_\mathcal{W}$ in Definition \ref{def:hocfst-sim}. Take $(V,W)\in\mathcal{R}$ with $(\vec{X}_V,\prods') = \grm(V,\emptyset)$ and $(\vec{X}_W,\prods)=\grm(W,\prods')$.
	
	First, let there be $a\in \simSX$ and $V'$ s.t. $\transition{V}{a}{V'}$. We want to show that there is a $W'$ s.t. $\transition{W}{a}{W'}$ and $(V',W')\in\mathcal{R}$. By Lemma \ref{lemma:behaviour-types-words-coincide}, there exists $\vec{Y}$ such that $\transitionGNF{\prods}{\vec{X}_V}{a}{\vec{Y}}$ and $\word(V')\sim_\prods\vec{Y}$. Since $\vec{X}_V \simS_\prods \vec{X}_W$, we know there exists a matching word $\vec{Z}$ such that $\transitionGNF{\prods}{\vec{X}_W}{a}{\vec{Z}}$ and that $\vec{Y} \simS_\prods \vec{Z}$. Again by Lemma \ref{lemma:behaviour-types-words-coincide}, there exists $W'$ such that $\transition{W}{a}{W'}$ and $\vec Z \sim_\prods \word(W')$. Let $(\vec{X}_{V'}, \prods') = \grm(V', \emptyset)$ and $(\vec{X}_{W'}, \prods) = \grm(W', \prods')$, and recall that $\vec{X}_{V'} = \word(V')$ and $\vec{X}_{W'} = \word(W')$. Having $\vec{X}_{V'} \sim_\prods \vec{Y}$ and $\vec{X}_{W'} \sim_\prods \vec{Z}$, we know by Lemma \ref{lemma:bisim-implies-sim} that $\vec{X}_{V'} \simS_\prods \vec{Y}$ and $\vec{Z} \simS_\prods \vec{X}_{W'}$. Recalling that $\vec Y \simS_\prods \vec Z$, we establish by transitivity that $\vec{X}_{V'}\simS_{\prods}\vec{X}_{W'}$ and therefore that $(V',W')\in\mathcal{R}$. 
	
	Next, let there be $a \in \simSY$ and $W'$ s.t. $\transition{W}{a}{W'}$. We want to show that there is a $V'$ s.t. $\transition{V}{a}{V'}$ and $(V',W')\in\mathcal{R}$. Lemma \ref{lemma:behaviour-types-words-coincide} gives us some $\vec{Z}$ with $\transitionGNF{\prods}{\vec{X}_W}{a}{\vec{Z}}$ and $\word(W')\sim_\prods\vec{Z}$. Since $a\in\simSY$, we know from $\vec{X}_V \simS_\prods \vec{X}_W$ that there is a matching word $\vec{Y}$ such that $\transitionGNF{\prods}{\vec{X}_V}{a}{\vec{Y}}$ and that $\vec{Y}\simS_\prods \vec{Z}$. Once again by Lemma \ref{lemma:behaviour-types-words-coincide} we know there exists $V'$ such that $\transition{V}{a}{V'}$ and $\vec Y \sim_\prods \word(V')$. Let $(\vec{X}_{V'}, \prods') = \grm(V', \emptyset)$ and $(\vec{X}_{W'}, \prods) = \grm(W', \prods')$. Having $\vec Y \sim_\prods \vec{X}_{V'}$ and $\vec{X}_{W'} \sim_\prods \vec{Z}$, we know by Lemma \ref{lemma:bisim-implies-sim} that $\vec{X}_{V'} \simS_\prods \vec Y$ and $\vec{Z} \simS_\prods \vec{X}_{W'}$. Recalling that $\vec{Y}\simS_\prods\vec{Z}$, we can establish by transitivity that $\vec{X}_{V'}\simS_{\prods}\vec{X}_{W'}$ and therefore that $(V',W')\in\mathcal{R}$.
	
	Now let there be $a \in \simSZ$ and $V'$ s.t. $\transition{V}{a}{V'}$. We want to show that there is a $W'$ s.t. $\transition{W}{a}{W'}$ and $(W',V')\in\mathcal{R}$. By Lemma \ref{lemma:behaviour-types-words-coincide}, there exists $\vec Y$ s.t. $\transition{\vec X_V}{a}{\vec Y}$ and $\word(V')\sim_\prods \vec Y$. Since $\vec X_V \simS_\prods \vec X_W$, we know there exists a matching word $\vec Z$ s.t. $\transition{\vec X_W}{a}{\vec Z}$ and that $\vec Z \simS_\prods \vec Y$. Again by Lemma \ref{lemma:behaviour-types-words-coincide}, there exists $W'$ s.t. $\transition{W}{a}{W'}$ and $\vec Z \sim_\prods \word(W')$. Let $(\vec{X}_{V'}, \prods') = \grm(V', \emptyset)$ and $(\vec{X}_{W'}, \prods) = \grm(W', \prods')$, and recall that $\vec{X}_{V'} = \word(V')$ and $\vec{X}_{W'} = \word(W')$. Having $\vec{X}_{W'} \sim_\prods \vec{Z}$ and $\vec{X}_{V'} \sim_\prods \vec{Y}$, we know by Lemma \ref{lemma:bisim-implies-sim} that $\vec{X}_{W'} \simS_\prods \vec{Z}$ and $\vec{Y} \simS_\prods \vec{X}_{V'}$. Recalling that $\vec Z \simS_\prods \vec Y$, we establish by transitivity that $\vec{X}_{W'}\simS_{\prods}\vec{X}_{V'}$ and therefore that $(W',V')\in\mathcal{R}$. 
	
	Finally, let there be $a \in \simSW$ and $W'$ s.t. $\transition{W}{a}{W'}$. We want to show that there is a $V'$ s.t. $\transition{V}{a}{V'}$ and $(V',W')\in\mathcal{R}$. Lemma \ref{lemma:behaviour-types-words-coincide} gives us some $\vec{Z}$ with $\transitionGNF{\prods}{\vec{X}_W}{a}{\vec{Z}}$ and $\word(W')\sim_\prods\vec{Z}$. Since $a\in\simSY$, we know from $\vec{X}_V \simS_\prods \vec{X}_W$ that there is a matching word $\vec{Y}$ such that $\transitionGNF{\prods}{\vec{X}_V}{a}{\vec{Y}}$ and that $\vec{Z}\simS_\prods \vec{Y}$. Once again by Lemma \ref{lemma:behaviour-types-words-coincide} we know there exists $V'$ such that $\transition{V}{a}{V'}$ and $\vec Y \sim_\prods \word(V')$. Let $(\vec{X}_{V'}, \prods') = \grm(V', \emptyset)$ and $(\vec{X}_{W'}, \prods) = \grm(W', \prods')$. Having $\vec{X}_{W'} \sim_\prods \vec{Z}$ and $\vec Y \sim_\prods \vec{X}_{V'}$, we know by Lemma \ref{lemma:bisim-implies-sim} that $\vec{X}_{W'} \simS_\prods \vec{Z}$ and $\vec{Y} \simS_\prods \vec{X}_{V'}$. Recalling that $\vec{Z}\simS_\prods\vec{Y}$, we can establish by transitivity that $\vec{X}_{V'}\simS_{\prods}\vec{X}_{W'}$ and therefore that $(V',W')\in\mathcal{R}$.
\end{proof}
\section{Pruning}
The grammars generated by procedure $\grm$ may contain unreachable words, which can be ignored by the algorithm. Intuitively, these words correspond to communication actions that cannot be fulfilled, such as part $\tIn{\tBool}$ in type $\tSeq{(\tRec{s}{\tSeq{\tOut{\tInt}}{s}})}{\tIn{\tBool}}$. Formally, these words appear in productions following what are known as \textit{unnormed words}.
\begin{definition} 
	Let $\vec a$ be a non-empty sequence of non-terminal symbols $a_1,\ldots,a_n$. Write $\transitionGNF{\prods}{\vec Y}{\vec a}{\vec Z}$ when $\transitionGNF{\prods}{\vec Y}{a_1}{\transitionGNF{\prods}{\ldots}{a_n}{\vec Z}}$. We say that a word $\vec Y$ is \textit{normed} if $\transitionGNF{\prods}{\vec Y}{\vec a}{\varepsilon}$ for some $\vec a$, and \textit{unnormed} otherwise. If $\vec Y$ is normed and $\vec a$ is the shortest path such that $\transitionGNF{\prods}{\vec Y}{\vec a}{\varepsilon}$, then $\vec a$ is called the \textit{minimal path} of $\vec Y$, and its length is the \textit{norm} of $\vec Y$, denoted $\fun{norm}(\vec Y)$.
\end{definition}
It is known that any unnormed word $\vec Y$ is bisimilar to its concatenation with any other word, i.e., if $\vec Y$ is unnormed, then $\vec Y \sim_\prods \vec Y \vec X$. It is easy to show that ${\sim_\prods} \subseteq {\simS_\prods}$, and hence that $\vec Y \simS_\prods \vec Y \vec X$. In this case, $\vec X$ is said to be unreachable and can be safely removed from the grammar. We call the procedure of removing all unreachable symbols from a grammar \textit{pruning}, and denote the pruned version of a grammar $\prods$ by $\prune(\prods)$.
\pruningPreservesXYZWSimilarity*
\begin{proof}
	For the direct implication, the $\mathcal{XYZW}$-simulation for $\vec X$ and $\vec Y$ over $\prods$ is also an $\mathcal{XYZW}$-simulation for $\vec X$ and $\vec Y$ over $\prune(\prods)$. For the reverse implication, if $\mathcal{R}'$ is an $\mathcal{XYZW}$-simulation for $\vec X$ and $\vec Y$ over $\prune(P)$, then relation
	\begin{gather*}
		\mathcal{R} = \mathcal{R}' \cup \{(\vec V W, \vec V W \vec Z) \,|\, (W \to
		\vec V W \vec Z) \in \prods, W \,\text{unnormed}\}
	\end{gather*}
	is an $\mathcal{XYZW}$-simulation for $\vec X$ and $\vec Y$ over $\mathcal{P}$.
\end{proof}
\section{Proof of Theorem \ref{thm:soundness-alg}}
To prove the soundness of our algorithm, we need to prove the soundness of its three phases: translation from types to grammars, grammar pruning and exploration of an $\mathcal{XYZW}$-expansion tree (with $\mathcal{X}$, $\mathcal{Y}$, $\mathcal{Z}$ and $\mathcal{W}$ instantiated as in \cref{def:hocfst-sim}). The soundness of the first two phases is given by \cref{thm:soundness-grammars} and \cref{thm:pruning-preserves-xyzw}, respectively. It remains now to prove that the last phase is also sound.

The exploration of the $\mathcal{XYZW}$-expansion tree is carried out through expansion and simplification steps. Expansion steps attempt to build an $\mathcal{XYZW}$-simulation from an initial pair of grammar words $(\vec X, \vec Y)$, while simplification steps are used modify the construction of the tree, attempting to keep some branches of the tree finite even when the corresponding $\mathcal{XYZW}$-simulation is not. This procedure returns $\True$ whenever an empty node is reached (meaning there is a successful branch), or $\False$ if all nodes fail to expand (meaning there is no successful branch). Given these stopping conditions, the soundness of the exploration procedure relies on the tree it yields having the following property: there is a successful branch iff $\vec{X} \xyzwsimP \vec{Y}$. This is known as the \emph{safeness property}.

The proof of this property is given in two parts: in the first we show the property holds for a tree constructed through expansion steps only; in the second we show that the simplification rules modify the tree safely, i.e., if their application results in a tree with a successful branch, then an $\mathcal{XYZW}$-simulation can actually be constructed.
\sp*
\begin{proof}
	In an expansion tree without simplification both directions follow directly from the definition of expansion. Recall that in an $\mathcal{XYZW}$-expansion tree, each child is an $\mathcal{XYZW}$-expansion of its parent. Observe then that in an $\mathcal{XYZW}$-expansion tree rooted at $\{(\vec X, \vec Y)\}$, the union of all nodes along a successful branch (i.e. infinite or containing an empty leaf) constitutes an $\mathcal{XYZW}$-simulation (over productions $\prods$) that includes $(\vec X, \vec Y)$. Hence $\vec X \simS_\prods Y$.
	
	In an expansion tree with simplifications, the union of all nodes along a successful branch need not be an $\mathcal{XYZW}$-simulation, only a (hopefully finite) representation of it, i.e., a set with which we can reconstruct it by reversing the simplifications. It remains to show how this can be done for each simplification rule:
	\begin{itemize}
		\item \textsc{Reflexivity}: Let $N$ be a node at depth $n$ such that $\{(\vec X_i, \vec X_i)\}_{i\in1..j}\subseteq N$ for some $j$. Applying \textsc{Reflexivity}, its simplification is $N' = N\setminus\{(\vec X_i, \vec X_i)\}_{i\in1..j}$. Observe that the reflexive closure of the union of all nodes along the successful branch containing $N'$ is an $\mathcal{XYZW}$-simulation containing $N$.
		
		\item \textsc{Preorder}: Let $\leq_N$ be the least preorder containing the ancestors of a node $N$. Applying \textsc{Preorder}, its simplification is $N'=N\setminus{\leq_N}$. Observe that the reflexive and transitive closure of the union of all nodes along the successful branch containing $N'$ is an $\mathcal{XYZW}$-simulation containing $N$.
		
		\item \textsc{Split}: Let $N$ be a node containing a pair of the form $(X_0 \vec X, Y_0 \vec Y)$ with $\fun{norm}(X_0) \leq \fun{norm}(Y_0)$ (the case where $\fun{norm}(X_0) > \fun{norm}(Y_0)$ is similar). Let sequence $\vec a = a_1,\dots,a_k$ be a minimal path for $X_0$, and $\vec Z$ be a word such that $\transitionGNF{\prods}{Y_0}{\vec a}{\vec Z}$. Applying \textsc{Split} to $N$ yields $N$ itself and a sibling $N'$ with pairs $(X_0\vec Z, Y_0),(\vec X, \vec Z\vec Y)$ in place of $(X_0 \vec X, Y_0 \vec Y)$. We need to show that, assuming there is an $\mathcal{XYZW}$-simulation over $\prods$ containing $N'$, it is possible to obtain an $\mathcal{XYZW}$-simulation over $\prods$ containing $N$.
		
		Let $\mathcal{R'}$ be an $\mathcal{XYZW}$-simulation over $\prods$ containing $N'$ and $\mathcal{S}_1$ be the smallest $\mathcal{XYZW}$-simulation over $\prods$ that includes pair $(X_0\vec Z, Y_0)$. Then, relation 
		$$\mathcal{R} = \mathcal{R'} \cup \{(X_0 \vec X, Y_0 \vec Y)\} \cup \{(\vec X_1 \vec X, \vec Y_1 \vec Y) \mathrel{|} (\vec X_1 \vec Z,\vec Y_1) \in \mathcal{S}_1\}$$ 
		is an $\mathcal{XYZW}$-simulation over $\prods$ containing $N$. 
		
		Since $N'$ includes all pairs of $N$ except $(X_0\vec X, Y_0\vec Y)$ and $\mathcal{R}'$ is an $\mathcal{XYZW}$-simulation over $\prods$ containing $N'$, it follows from the union of $\mathcal{R}'$ with $\{(X_0 \vec X, Y_0 \vec Y)\}$ that $\mathcal{R}$ contains $N$. 
		
		All that remains now is to show that $\mathcal{R}$ is an $\mathcal{XYZW}$-simulation over $\prods$. Since $\mathcal{R}'$ is already such a relation, we need to show that it remains so after adding to it every pair in $\{(X_0\vec X, Y_0\vec Y)\}\cup\{(\vec X_1 \vec X, \vec Y_1 \vec Y) \mathrel{|} (\vec X_1 \vec Z, \vec Y_1) \in \mathcal{S}_1\}$. For $(X_0\vec X, Y_0\vec Y)$ this is easy: we observe that the pairs containing the derivatives of all matching transitions of $X_0\vec X$ and $Y_0\vec Y$ are elements of $\{(\vec X_1 \vec X, \vec Y_1 \vec Y) \mathrel{|} (\vec X_1 \vec Z, \vec Y_1) \in \mathcal{S}_1\}$. For the pairs in $\{(\vec X_1 \vec X, \vec Y_1 \vec Y) \mathrel{|} (\vec X_1 \vec Z, \vec Y_1) \in \mathcal{S}_1\}$ we need to distinguish two cases:
		\begin{itemize}
			\item (\textbf{Case} $\vec X_1 \neq \varepsilon$): We observe that, from the definition of $\mathcal{S}_1$, it follows that relation $\{(\vec X_1 \vec X, \vec Y_1 \vec Y) \mathrel{|} (\vec X_1 \vec Z, \vec Y_1) \in \mathcal{S}_1\}$, which is contained in $\mathcal{R}$, contains the paired derivatives of all matching transitions of $X_1\vec X$ and $Y_1 \vec Y$.
			\item (\textbf{Case} $\vec X_1 = \varepsilon$): We observe that, since $\mathcal{R}$ contains $\mathcal{R}'$ (an $\mathcal{XYZW}$-simulation containing $N'$), it must include the paired derivatives of all matching transitions of $\vec X$ and $Y_1 \vec Y$.  
		\end{itemize}
	\end{itemize}
\end{proof}
With the safeness property established, we can finally put together a soundness proof encompassing all of the three phases of the algorithm.
\begin{lemma}
	\label{thm:soundness-pruned}
	If $\subG(\vec X_T, \vec X_U,
	\prune(\prods))$ returns \True, then $\vec X_T \simS_{\prune(\prods)} \vec X_U$.
\end{lemma}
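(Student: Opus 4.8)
The plan is to reduce the statement to the safeness property of \cref{lemma:safeness} by reading off the operational meaning of $\subG$ returning \True. First I would inspect the pseudo-code of $\fun{explore}$ and observe that \True is produced only in the branch guarded by $\emptyf(n)$, that is, exactly when the empty node is dequeued. Since every node ever placed on the queue is either the root $\{(\vec X_T,\vec X_U)\}$ or is obtained from its parent by a single $\expand$ step followed by $\simplify$, the empty node that triggers \True lies at the end of a finite path from the root in the $\mathcal{XYZW}$-expansion-with-simplification tree that $\subG$ constructs. An empty node is an empty leaf, so this path is a \emph{successful branch} of that tree.

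Next I would transfer this successful branch to the abstract $\mathcal{XYZW}$-expansion tree of \cref{def:xyzw-expansion} rooted at $\{(\vec X_T,\vec X_U)\}$ over the productions $\prune(\prods)$. This is precisely the purpose of the safeness results: each simplification rule (\textsc{Reflexivity}, \textsc{BPA1}, \textsc{BPA2}, and the \textsc{Preorder} rule that replaces \textsc{Congruence}) maintains the parent--child preservation lemma for $\xyzwsim$, and the expansion step does too. Walking up the successful branch from the empty leaf and applying, at each step, the direction of that preservation lemma which says that a node is contained in $\xyzwsim$ whenever one of its children is --- the base case being $\emptyset\subseteq{\xyzwsim_{\prune(\prods)}}$ --- yields $\{(\vec X_T,\vec X_U)\}\subseteq{\xyzwsim_{\prune(\prods)}}$, hence $\vec X_T\xyzwsim_{\prune(\prods)}\vec X_U$. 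Instantiating $\mathcal{X},\mathcal{Y},\mathcal{Z},\mathcal{W}$ with the label sets of \cref{def:hocfst-sim} turns this into $\vec X_T\simS_{\prune(\prods)}\vec X_U$, which is the claim. Alternatively, I could package the same reasoning as: the existence of a successful branch is preserved when the simplification steps are undone, so the abstract expansion tree to which \cref{lemma:safeness} applies has a successful branch, and \cref{lemma:safeness} finishes the argument directly.

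I expect the difficulty to be bookkeeping rather than conceptual. The delicate points are checking that the tree actually explored by the breadth-first procedure $\subG$ --- with its interleaving of $\expand$ and $\simplify$, its sibling nodes produced by the BPA rules, and its pair-dropping by \textsc{Preorder} --- is faithfully covered, step by step, by the safeness guarantees, and verifying that the node $\subG$ flags as empty is genuinely a descendant of the root along a single branch and not merely some node encountered during the search. No completeness or termination argument is needed here: for soundness we only use the one implication that an empty node witnesses $\mathcal{XYZW}$-similarity of the two starting words, whose substantive content is already carried by \cref{lemma:safeness} and the preservation lemma for the simplification rules.
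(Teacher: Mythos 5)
Your proposal is correct and follows essentially the same route as the paper: the paper's proof simply observes that $\subG$ returns \True{} only upon reaching an empty node, i.e.\ a finite successful branch of the expansion tree rooted at $\{(\vec X_T,\vec X_U)\}$, and then concludes by the safeness property (\cref{lemma:safeness}), with the safety of the simplification rules (including the \textsc{Preorder} rule) handled separately in the appendix, exactly as you invoke them. Your more explicit walk up the branch via the parent--child preservation lemma is just an unfolding of what the paper leaves implicit in citing \cref{lemma:safeness}.
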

\begin{proof}
	Function $\subG$ returns $\True$ whenever it finds a finite successful branch (i.e., a branch terminating in
	an empty node) in the expansion tree rooted at $\{(\vec X, \vec Y )\}$. Conclude with the safeness property, \cref{lemma:safeness}.
\end{proof}
\soundness*
\begin{proof}
	From \cref{thm:soundness-grammars}, \cref{thm:pruning-preserves-xyzw} and \cref{thm:soundness-pruned}.
\end{proof}
\section{Generating subtyping pairs}
\label{sec:appendix-subtyping-pairs}
We rely on a number of properties of subtyping to generate valid and invalid subtyping pairs. Before enumerating these properties, we introduce the following definition.
\begin{definition}
	The sets of free references in covariant and contravariant positions in a type $T$, respectively $\mathsf{freeCov}(T)$ and $\mathsf{freeContrav}(T)$, are mutually defined by induction on the structure of $T$:
	\begin{align*}
		\mathsf{freeCov}(\tArrow{T}{m}{U}) &= \mathsf{freeContrav}(T) \cup \mathsf{freeCov}(U)\\ 
		\mathsf{freeCov}(\tRcdVrt{\ell}{T_\ell}{L}) &= \bigcup_{k\in L} \mathsf{freeCov}(T_k)\\
		\mathsf{freeCov}(\tIn{T}) &= \mathsf{freeCov}(T)\\
		\mathsf{freeCov}(\tOut{T}) &= \mathsf{freeContrav}(T)\\
		\mathsf{freeCov}(\tSeq{S}{R}) &= \mathsf{freeCov}(S)\cup\mathsf{freeCov}(R)\\
		\mathsf{freeCov}(\tRec{x}{T}) &= \mathsf{freeCov}(T) \setminus \{x\}\\
		\mathsf{freeCov}(x) &= \{x\}
	\end{align*}
	(and in all other cases by $\mathsf{freeCov}(T) = \emptyset$)
	\begin{align*} 
		\mathsf{freeContrav}(\tArrow{T}{m}{U}) &= \mathsf{freeCov}(T) \cup \mathsf{freeContrav}(U)\\ 
		\mathsf{freeContrav}(\tRcdVrt{\ell}{T_\ell}{L}) &= \bigcup_{k\in L} \mathsf{freeContrav}(T_k)\\
		\mathsf{freeContrav}(\tIn{T}) &= \mathsf{freeContrav}(T)\\
		\mathsf{freeContrav}(\tOut{T}) &= \mathsf{freeCov}(T)\\
		\mathsf{freeContrav}(\tSeq{S}{R}) &= \mathsf{freeContrav}(S)\cup\mathsf{freeContrav}(R)\\
		\mathsf{freeContrav}(\tRec{x}{T}) &= \mathsf{freeContrav}(T) \setminus \{x\}
	\end{align*}
	(and in all other cases by $\mathsf{freeContrav}(T) = \emptyset$)
\end{definition}
The following theorem enumerates the properties of the subtyping relation. From it we can derive generation algorithm for valid subtyping pairs, parameterized on the size $i$ of the pair: \textit{if $i=0$, then select one of the pairs in \cref{thm:props-base}; if $i\ge 1$, then select one of the pairs in the remaining items.}
\begin{theorem}[Properties of subtyping (valid)]
	\label{thm:props-subtyping-valid}\
	\begin{enumerate}
		\item\label{thm:props-base} $\tUnit \subS \tUnit$, $\tEnd \subS \tEnd$ and $\tSkip \subS \tSkip$;
		\item $\tArrow{T}{m}{U} \subS \tArrow{V}{n}{W}$ if $V\subS T$, $U\subS W$ and $m\sqsubseteq n$;
		\item $\tRecord{\ell}{T_\ell}{L} \subS \tRecord{k}{U_k}{L}$ if $K \subseteq L$ and $T_j\subS U_j (\forall j\in K)$;
		\item $\tVariant{\ell}{T_\ell}{L}\subS \tVariant{k}{U_k}{K}$ if $L \subseteq K$ and $T_j \subS U_j (\forall j\in L)$; 
		\item \label{item:valid-5} $\tSeq{S_1}{S_2} \subS \tSeq{R_1}{R_2}$ if $S_1 \subS R_1$ and $S_2 \subS R_2$; 
		\item $\tRec{x}{T} \subS \tRec{x}{U}$ if:
		\begin{enumerate}
			\item \label{thm:props-rec} $T\subS U$ and $x\notin \fun{freeContrav}(T)\cup\fun{freeContrav}(U)$; 
			\item $T \subS U$ and $U \subS T$; 
		\end{enumerate}
		\item $\tIntChoice{\ell}{S_\ell}{L} \subS \tIntChoice{k}{R_k}{L}$ if $K \subseteq L$ and $S_j\subS R_j (\forall j\in L)$;
		\item $\tExtChoice{\ell}{S_\ell}{L}\subS \tExtChoice{k}{R_k}{K}$ if $L \subseteq K$ and $S_j \subS R_j (\forall j\in L)$;
		\item $\tSeq{\tEnd}{S}\subS\tEnd$, $\tEnd\subS\tSeq{\tEnd}{S}$ and $\tSeq{\tEnd}{S}\subS\tSeq{\tEnd}{R}$ for any $S,R$;
		\item $\tSeq{S}{\tSkip}\subS R$, $\tSeq{\tSkip}{S}\subS R$, $S\subS\tSeq{R}{\tSkip}$ and $S\subS\tSeq{\tSkip}{R}$ if $S\subS R$;
		\item $\tSeq{\tIntChoice{\ell}{S_\ell}{L}}{S'} \subS \tIntChoice{k}{\tSeq{R_k}{R'}}{L}$ and $\tIntChoice{\ell}{\tSeq{S_\ell}{S'}}{L} \subS \tSeq{\tIntChoice{k}{R_k}{L}}{R'}$ if $K \subseteq L$, $S_j\subS R_j (\forall j\in L)$ and $S'\subS R'$;
		\item $\tSeq{\tExtChoice{\ell}{S_\ell}{L}}{S'} \subS \tExtChoice{k}{\tSeq{R_k}{R'}}{L}$ and $\tExtChoice{\ell}{\tSeq{S_\ell}{S'}}{L} \subS \tSeq{\tExtChoice{k}{R_k}{L}}{R'}$ if $L \subseteq K$, $S_j\subS R_j (\forall j\in L)$ and $S'\subS R'$;
		\item $\tSeq{S_1}{(\tSeq{S_2}{S_3})} \subS \tSeq{(\tSeq{R_1}{R_2})}{R_3}$ and $\tSeq{(\tSeq{S_1}{S_2})}{S_3} \subS \tSeq{R_1}{(\tSeq{R_2}{R_3})}$ if $S_1\subS R_1$, $S_2\subS R_2$ and $S_3 \subS R_3$;
		\item $\tRec{x}{T}\subS U$ if $T\subS U$ and $x \notin \free(T)$;
		\item $T\subS \tRec{x}{U}$ if $T\subS U$ and $x \notin \free(U)$;
		\item $\tRec{x}{T}\subS \subst{y}{\tRec{y}{U}}{U}$ if $\tRec{x}{T}\subS \tRec{y}{U}$.
	\end{enumerate}
\end{theorem}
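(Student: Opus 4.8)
The plan is to sort the sixteen items by the amount of work they require. A first block is essentially free: items 1, 2, 3, 4, 7, 8 and 9 are verbatim the conclusions of the axioms and rules \textsc{S-Unit}, \textsc{S-End}, \textsc{S-Skip}, \textsc{S-Arrow}, \textsc{S-Rcd}, \textsc{S-Vrt}, \textsc{S-IntChoice}, \textsc{S-ExtChoice}, \textsc{S-EndSeq1L}, \textsc{S-EndSeq1R} and \textsc{S-EndSeq2} of \cref{fig:syntactic-sub}, so each follows by one rule application once its side conditions hold. Items 14 and 15 reduce to their hypothesis $T\subS U$ after \textsc{S-RecL} (resp.\ \textsc{S-RecR}), using $x\notin\free(T)$ to identify $\subst{x}{\tRec{x}{T}}{T}$ with $T$, and the \textsc{S-SkipSeqL}/\textsc{S-SkipSeqR} halves of item 10 are single rule applications. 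Item 16 is obtained through the semantic characterisation \cref{thm:soundness-completeness-stx-sem}: by \cref{lemma:unraveling} the types $\tRec{y}{U}$ and $\subst{y}{\tRec{y}{U}}{U}$ induce exactly the same transitions, so $\tRec{x}{T}\simS\tRec{y}{U}$ iff $\tRec{x}{T}\simS\subst{y}{\tRec{y}{U}}{U}$.

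The remaining items --- 5, 6(a), 6(b), and 11--13, which fall back to item 5 after \textsc{S-ChoiceSeqL}/\textsc{S-ChoiceSeqR} (or \textsc{S-SeqSeqL}/\textsc{S-SeqSeqR}) followed by \textsc{S-IntChoice}/\textsc{S-ExtChoice}, together with the $\tSeq{-}{\tSkip}$ halves of item 10 --- are the substantive ones, and I would prove them by coinduction, moving between $\subS$ and $\simS$ via \cref{thm:soundness-completeness-stx-sem}. For item 5 the key ingredient is a decomposition lemma for the transitions of a sequential composition: if $S$ is terminated then $\tSeq{S}{R}$ has exactly the transitions of $R$, and otherwise $\transition{\tSeq{S}{R}}{a}{T}$ holds precisely when $\transition{S}{a}{S'}$ with $T$ bisimilar to $\tSeq{S'}{R}$ --- the ``bisimilar'' being forced by \textsc{L-MsgSeq2}, \textsc{L-EndSeq} and \textsc{L-ChoiceFieldSeq}, whose targets match $\tSeq{S'}{R}$ only up to the identity law for $\tSkip$. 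Given this lemma, the relation ${\simS}\cdot\{(\tSeq{S_1}{S_2},\tSeq{R_1}{R_2})\mid S_1\simS R_1,\ S_2\simS R_2\}\cdot{\simS}$ is an $\mathcal{XYZW}$-simulation for the label sets of \cref{def:hocfst-sim}, with the $\simS$ factors absorbing the ``up to bisimilarity'' slack, which gives $\tSeq{S_1}{S_2}\simS\tSeq{R_1}{R_2}$ and hence item 5; the right-identity law $\tSeq{R}{\tSkip}\simS R$ needed for the $\tSeq{-}{\tSkip}$ halves of item 10 is the same decomposition reasoning applied to $\{(\tSeq{R}{\tSkip},R)\}\cup{\simS}$, followed by transitivity of $\simS$. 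A purely syntactic route also works: close the conclusion pairs of item 5 under the preserving rules and check backward closure by case analysis on $\unravel(S_1)$ --- the message case closing through \textsc{S-InSeq2}/\textsc{S-OutSeq2}, the choice case through \textsc{S-ChoiceSeqL}/\textsc{S-ChoiceSeqR}, the $\tEnd$ case through \textsc{S-EndSeq2}, and the terminated case through transitivity of the preorder of \cref{thm:syntactic-preorder}.

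For item 6 the engine is a substitution lemma stated over simultaneous substitutions: if $A\subS B$ and $\sigma,\tau$ are substitutions on $\free(A)\cup\free(B)$ such that $\sigma(y)\subS\tau(y)$ whenever $y$ occurs only covariantly in $A$ and $B$, $\tau(y)\subS\sigma(y)$ whenever $y$ occurs only contravariantly, and both whenever $y$ occurs in both polarities, then $\sigma A\subS\tau B$. Items 6(a) and 6(b) are then seeded into a single coinductive relation alongside all such substitution-shaped pairs: a seed $(\tRec{x}{T},\tRec{x}{U})$ unravels under \textsc{S-RecL} and \textsc{S-RecR} to $(\subst{x}{\tRec{x}{T}}{T},\subst{x}{\tRec{x}{U}}{U})$, a substitution-shaped pair with $A=T$, $B=U$, $\sigma(x)=\tRec{x}{T}$ and $\tau(x)=\tRec{x}{U}$, whose constraint on $x$ is exactly the side condition $x\notin\fun{freeContrav}(T)\cup\fun{freeContrav}(U)$ of 6(a) --- or is furnished in both directions by the hypotheses of 6(b); backward closure of the substitution-shaped pairs is checked by case analysis on $\unravel(A)$, each case either applying the matching progressing rule of \cref{fig:syntactic-sub} and pushing $\sigma,\tau$ through the exposed subterms, their polarities flipping exactly at the contravariant positions of \textsc{S-Arrow} and \textsc{S-Out} (where the two-directional requirement is consumed), or, when $\unravel(A)$ is a variable $y$, appealing to membership of $(\sigma(y),\tau(y))$ in the relation. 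I expect this combined coinductive argument --- lining up the polarity bookkeeping with the flips induced by simultaneous substitution and keeping the candidate relation closed under unfolding on \emph{both} sides --- to be the main obstacle, with the decomposition lemma for item 5 the second most delicate point, precisely because of the ``up to bisimilarity'' mismatch in the sequential-composition transition rules.
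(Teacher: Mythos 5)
Your plan is far more detailed than the paper's own proof, which dismisses the entire theorem in one sentence (``by observation of the syntactic subtyping rules and the definition of $\mathsf{freeContrav}$'' for item 6(a)). You correctly separate the items that really are single rule instances (1--4, 7--9, the \textsc{S-SkipSeq} halves of 10, and 14--15, all of which follow because $\subS$, being a greatest fixed point, is forward-closed under its rules) from the items the paper silently glosses over --- 5, 6, the $\tSeq{S}{\tSkip}$/$\tSeq{R}{\tSkip}$ halves of 10, and 11--13, which are not conclusions of any rule and need an actual argument --- and your reductions of 11--13 to item 5 through \textsc{S-ChoiceSeqL/R}, \textsc{S-SeqSeqL/R} and the choice rules, and of 16 through the semantic characterisation of \cref{thm:soundness-completeness-stx-sem}, are sound. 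Proving item 5 semantically via an up-to-$\simS$ $\mathcal{XYZW}$-simulation, and item 6 via a polarity-respecting simultaneous-substitution lemma run coinductively, is a legitimate (and more honest) route than the paper's; the substitution lemma is exactly the device that makes the paper's appeal to $\mathsf{freeContrav}$ precise, and you rightly flag the variable-at-the-head-of-a-sequence and double-unfolding bookkeeping as the delicate part.

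Two auxiliary claims in your sketch are, however, false as stated and need repair. First, the decomposition lemma for $\tSeq{S}{R}$: for \textsc{L-EndSeq} the target of $\transition{\tSeq{\tEnd}{R}}{\tEnd}{\tSkip}$ is $\tSkip$, which is \emph{not} bisimilar to $\tSeq{\tSkip}{R}$ unless $R$ is terminated, and for the payload transition \textsc{L-MsgSeq1} the target is the bare payload, not anything of the form $\tSeq{S'}{R}$ (the payload may even be a functional type, so $\tSeq{S'}{R}$ need not exist). The correct statement is: if $S$ is terminated then $\tSeq{S}{R}$ has exactly the transitions of $R$; otherwise its $\lMsgContinuation$- and $\lChoiceLabel{k}$-targets agree with $\tSeq{S'}{R}$ up to the $\tSkip$ identity, while its payload and $\tEnd$ targets coincide with those of $S$ alone. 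With this version your case analysis on $\unravel(S_1)$ still closes: the $\tEnd$ case yields the pair $(\tSkip,\tSkip)$ and the payload case yields a pair already in $\simS$. Second, and relatedly, the candidate ${\simS}\cdot P\cdot{\simS}$ does not contain $\simS$ itself (a payload pair such as $(\tUnit,\tUnit)$ cannot be factored through a sequenced pair), so you must take its union with $\simS$ --- which is itself an $\mathcal{XYZW}$-simulation --- before checking closure; similarly $\{(\tSeq{R}{\tSkip},R)\}$ must be read as a family over all $R$, closed together with $\simS$. These are routine fixes that leave the structure of your argument intact.
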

\begin{proof}
	By observation of the syntactic subtyping rules and the definition of $\mathsf{freeContra}$ for \cref{thm:props-rec}.
\end{proof}

Identifying the set of references in contravariant positions makes it easier to generate valid subtyping pairs featuring recursion. Observe that, despite looking so, $\tRec{t}{\tArrow{t}{*}{t}}$ is not a subtype of $\tRec{t}{\tArrow{t}{\lin}{t}}$ (their unfolding makes it clear). This is because the same self-reference appears in both covariant and contravariant positions (hence the first must be simultaneously subtype and supertype of the latter, which cannot happen because of their multiplicities). We avoid generating such pairs in \cref{thm:props-rec} by ensuring that references introduced by a pair of recursive types only appear in covariant positions of their bodies, unless they are equivalent, in which case there is no such restriction.

To generate invalid subtyping pairs, we follow the same algorithm but inject also the invalid pairs that occur in each item of the following theorem. 
\begin{theorem}[Properties of subtyping (invalid)]
	\label{thm:props-subtyping-invalid}\
	\begin{enumerate}
		\item \label{item:invalid-1} $\tInt \not\subS \tUnit$, $\tUnit \not\subS \tInt$, $\tSkip \not\subS\tEnd$ and \item $\tEnd \not\subS \tSkip$;
		\item $\tRecord{\ell}{T_\ell}{L} \not\subS \tRecord{k}{U_k}{L}$ if $L \subsetneq K$ and $T_j\subS U_j (\forall j\in L)$;
		\item $\tVariant{\ell}{T_\ell}{L}\not\subS \tVariant{k}{U_k}{K}$ if $K \subsetneq L$ and $T_j \subS U_j (\forall j\in K)$; 
		\item $\tArrow{T}{\lin}{U} \not\subS \tArrow{V}{\un}{W}$, with $V\subS T, U\subS W$;
		\item\label{item:arrow1} $\tArrow{T}{m}{U} \not\subS \tArrow{V}{n}{W}$, with $T\subS V$, $T \not\sim V$, $m\sqsubseteq n$ and $U\subS W$;
		\item $\tArrow{T}{m}{U} \not \subS \tArrow{V}{n}{W}$, with $V\subS T$, $m\sqsubseteq n$, $W\subS U$ and $W\not\sim U$;
		\item $\tIn{T} \not\subS \tOut{T}$;
		\item $\tOut{T} \not\subS \tIn{T}$;
		\item $\tIn{T} \not\subS \tIn{U}$, with $T \not\sim U, U \subS T$;
		\item $\tOut{T} \not\subS \tOut{U}$, with $T\not\sim U, T \subS U$;
		\item $\tIntChoice{\ell}{T_\ell}{L} \not\subS \tIntChoice{k}{T_k}{K}$ if $L \subsetneq K$ and $T_j\subS U_j (\forall j\in L)$;
		\item $\tExtChoice{\ell}{T_\ell}{L}\not\subS \tExtChoice{k}{T_k}{K}$ if $K \subsetneq L$ and $T_j \subS U_j (\forall j\in K)$.
	\end{enumerate}
\end{theorem}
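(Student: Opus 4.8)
\textbf{Proof proposal for \cref{thm:props-subtyping-invalid}.}

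The plan is to refute each instance by contradiction, relying on the structural analysis of subtyping derivations from \cref{sec:preliminaries}: by \cref{lemma:unraveling} we may assume both sides of a judgement are unraveled, and then $T\subS U$ can be closed only by the unique \emph{progressing} rule whose conclusion matches the head constructors of $T$ and $U$. Several items follow by mere inspection of that rule. For \cref{item:invalid-1}, no progressing rule relates the head constructors $\tInt$ and $\tUnit$, or $\tSkip$ and $\tEnd$, so those judgements have no derivation; likewise the items $\tIn T\not\subS\tOut T$ and $\tOut T\not\subS\tIn T$ hold because no progressing rule relates an input head with an output head. For the record, variant and choice items, the only candidate rule is \textsc{S-Rcd}, \textsc{S-Vrt}, \textsc{S-IntChoice} or \textsc{S-ExtChoice} respectively, and each carries an inclusion side condition on the label sets ($K\subseteq L$ or $L\subseteq K$) that is directly contradicted by the strict inclusion assumed in the hypothesis. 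For $\tArrow{T}{\lin}{U}\not\subS\tArrow{V}{\un}{W}$, the only applicable rule is \textsc{S-Arrow}, which would require $\lin\sqsubseteq\un$; but the multiplicity preorder of \cref{fig:syntactic-sub} has only reflexivity and $\un\sqsubseteq\lin$, so this fails.

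The remaining items — the two arrow items carrying a $\not\sim$ side condition (including \cref{item:arrow1}), and the items $\tIn T\not\subS\tIn U$ and $\tOut T\not\subS\tOut U$ with $T\not\sim U$ — all reduce to one auxiliary fact: \emph{mutual subtyping coincides with type equivalence}, i.e.\ $T\subS U$ and $U\subS T$ together imply $T\sim U$, where $\sim$ denotes bisimilarity. Granting this: for $\tIn T\subS\tIn U$ the unique applicable progressing rule is \textsc{S-In}, whose premise is $T\subS U$; with the hypothesis $U\subS T$ this gives $T\sim U$, contradicting $T\not\sim U$. The output case is dual via \textsc{S-Out}, whose premise $U\subS T$ is combined with the hypothesis $T\subS U$. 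For an arrow judgement, \textsc{S-Arrow} supplies the premises $V\subS T$ and $U\subS W$; combining $V\subS T$ with the hypothesis $T\subS V$ yields $T\sim V$, contradicting $T\not\sim V$, and symmetrically $U\sim W$ contradicts $W\not\sim U$ in the range variant.

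It thus remains to establish the auxiliary fact. The direction $T\sim U\Rightarrow T\subS U\wedge U\subS T$ is immediate from \cref{thm:soundness-completeness-stx-sem}, since a bisimulation is an $\mathcal{XYZW}$-simulation for any parameters. For the converse I would show that $\mathcal R=\{(P,Q)\mid P\simS Q\text{ and }Q\simS P\}$, which is symmetric by construction, is a bisimulation. Two facts are needed: the label sets of \cref{def:hocfst-sim} satisfy $\mathcal X\cup\mathcal Y\cup\mathcal Z\cup\mathcal W=\mathcal A$, so every action is governed by at least one of the four clauses of \cref{def:xyzw-sim}; and the labelled transition system of \cref{fig:hocfst-lts} is deterministic — the same property the paper exploits to obtain simple grammars. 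Then, given $(P,Q)\in\mathcal R$ and $\transition{P}{a}{P'}$, the clause for the class of $a$ applied to $P\simS Q$ yields a matching $\transition{Q}{a}{Q'}$, and the clause of the same class applied to $Q\simS P$ yields a transition $\transition{P}{a}{P''}$ which, by determinism, has $P''=P'$ and which relates $P'$ and $Q'$ in the direction complementary to the first; together these give $(P',Q')\in\mathcal R$, and the transitions of $Q$ are handled by symmetry of $\mathcal R$. The main obstacle is precisely this lemma: the bookkeeping of applying the four $\mathcal{XYZW}$-simulation clauses to both $(P,Q)$ and $(Q,P)$ and invoking determinism to force the two produced successors to coincide. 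Finally, the existential reading of the theorem (that pairs meeting each hypothesis exist, which is what the generator needs) is discharged by exhibiting small concrete witnesses, a routine check.
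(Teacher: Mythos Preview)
Your approach is essentially the paper's own, only far more detailed: the paper's entire proof reads ``By inspection of the syntactic subtyping rules, relying on the observation that if $T\subS U$ and $T\not\sim U$, then $U \not\subS T$.'' You recover exactly this structure---rule inspection for the structural items, and the antisymmetry observation (contrapositive: mutual subtyping implies $\sim$) for the items carrying a $\not\sim$ hypothesis---and in addition you supply a proof sketch of that observation via determinism of the LTS and coverage $\mathcal X\cup\mathcal Y\cup\mathcal Z\cup\mathcal W=\mathcal A$, which the paper simply takes for granted.
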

\begin{proof}
	By inspection of the syntactic subtyping rules, relying on the observation that if $T\subS U$ and $T\not\sim U$, then $U \not\subS T$.
\end{proof}
If $i = 1$, we generate one of the pairs in \cref{item:invalid-1} of \cref{thm:props-subtyping-invalid}, otherwise we use one of the items in \cref{thm:props-subtyping-valid} and randomly inject a subtyping pair where a valid one is supposed to be. If the generated pair turns out to be in the subtyping relation, we simply discard the result.

For example, suppose $i=4$. We randomly choose \cref{item:valid-5} of \cref{thm:props-subtyping-valid} to generate a pair of sequential compositions $(\tSeq{S_1}{R_1}, \tSeq{S_2}{R_2})$. We proceed in the valid path for the types before the semicolon, obtaining $S_1=\tSeq{\tOut{\tInt}}{\tSkip}$ and $S_2=\tOut{\tInt}$, but inject an invalid pair in $R_1$ and $R_2$. To generate it, we randomly choose \cref{item:arrow1} of \cref{thm:props-subtyping-valid}. Here we generate a valid pair $(T,U)$ using \cref{thm:props-subtyping-valid} and ensuring $T$ and $V$ are not equivalent, then multiplicities $m$ and $n$ such that $m \sqsubseteq n$, and finally another valid pair $(V,W)$. Thus we obtain $R_1=\tArrow{T}{m}{U}$ and $R_2=\tArrow{V}{n}{W}$, making $(\tSeq{S_1}{R_1}, \tSeq{S_2}{R_2})$ an invalid pair. If we had, however, generated $S_1=\tEnd$ and $S_2=\tSeq{\tEnd}{\tSkip}$, then $(\tSeq{S_1}{R_1}, \tSeq{S_2}{R_2})$ would be a valid pair, and its test result would be discarded.
\end{document}